\def\BibTeX{{\rm B\kern-.05em{\sc i\kern-.025em b}\kern-.08em
    T\kern-.1667em\lower.7ex\hbox{E}\kern-.125emX}}
\newcommand{\Omegat}{\Omega_t}
\def\*#1{\mathbf{#1}}
\newcommand{\RR}{\mathbb{R}^{2}}
\newcommand{\Qi}{Q_{i}}
\newcommand{\Qj}{Q_{j}}
\newcommand{\pos}{p}
\newcommand{\posi}{p_{i}}
\newcommand{\pd}{p_d}
\newcommand{\posrel}{\tilde{p}}
\newcommand{\vel}{v}
\newcommand{\vd}{v_d}
\newcommand{\velrel}{\tilde{v}}
\newcommand{\hrel}{\tilde{h}}
\newcommand{\pix}{p_{i,x}}
\newcommand{\piy}{p_{i,y}}
\newcommand{\veli}{v_{i}}
\newcommand{\vix}{v_{i,x}}
\newcommand{\viy}{v_{i,y}}
\newcommand{\ui}{u_i}
\newcommand{\uix}{u_{i,x}}
\newcommand{\uiy}{u_{i,y}}
\newcommand{\pjx}{p_{j,x}}
\newcommand{\pjy}{p_{j,y}}
\newcommand{\vjx}{v_{j,x}}
\newcommand{\ujx}{u_{j,x}}
\newcommand{\ujy}{u_{j,y}}
\newcommand{\heading}{\theta}
\newcommand{\speed}{s}
\newcommand{\csf}{f_{al}}
\newcommand{\vxr}{v_{r,x}}
\newcommand{\vyr}{v_{r,y}}
\newcommand{\pxr}{p_{r,x}}
\newcommand{\pyr}{p_{r,y}}
\newcommand{\vdxr}{\dot{v}_{r,x}}
\newcommand{\vdyr}{\dot{v}_{r,y}}
\newcommand{\pdxr}{\dot{p}_{r,x}}
\newcommand{\pdyr}{\dot{p}_{r,y}}
\newcommand{\ttr}{\psi}
\newcommand{\rd}{r_d}
\newcommand{\tc}{t_*}
\newcommand{\aI}{a_{I}}
\newcommand{\ah}{a_{h}}
\newcommand{\av}{a_{v}}
\newcommand{\domDanger}{\Gamma_{D}}
\newcommand{\domSafe}{\Gamma_{S}}
\let\oldnl\nl
\newcommand{\nonl}{\renewcommand{\nl}{\let\nl\oldnl}}
\newtheorem{thm}{Theorem}[section]
\newtheorem{prop}[thm]{Proposition}
\newtheorem{lem}[thm]{Lemma}
\newtheorem{rmk}[thm]{Remark}
\newtheorem{defn}[thm]{Definition}
\begin{document}
\title{Safe Coverage of Moving Domains for Vehicles with Second Order Dynamics}
\author{Juan Chacon, Mo Chen, and Razvan C. Fetecau
\thanks{
J. Chacon is with the Department of Mathematics, Simon Fraser University, Burnaby, BC V5A 1S6, CANADA  (e-mail: jchaconl@sfu.ca). }
\thanks{M. Chen is with the School of Computing Science, Simon Fraser University, Burnaby,
BC V5A 1S6, CANADA (e-mail: mochen@cs.sfu.ca).}
\thanks{R.C. Fetecau is with the Department of Mathematics, Simon Fraser University, Burnaby, BC V5A 1S6, CANADA (e-mail: van@math.sfu.ca).}}

\maketitle

\begin{abstract}
Autonomous coverage of a specified area by robots operating in close proximity with each other has many potential applications such as real-time monitoring of rapidly changing environments, and search and rescue; however, coordination and safety are two fundamental challenges.
For coordination, we propose a distributed controller for covering moving, compact domains for two types of vehicles with second order dynamics (double integrator and fixed-wing aircraft) with bounded input forces. 
This control policy is based on artificial potentials and alignment forces designed to promote desired vehicle-domain and inter-vehicle separations and relative velocities.
We prove that certain coverage configurations are locally asymptotically stable. 
For safety, we establish energy conditions for collision free motion and utilize Hamilton-Jacobi (HJ) reachability theory for last-resort pairwise collision avoidance. 
We derive an analytical solution to the associated HJ partial differential equation corresponding to the collision avoidance problem between two double integrator vehicles. 
We demonstrate our approach in several numerical simulations involving the two types of vehicles covering convex and non-convex moving domains.
\end{abstract}

\begin{IEEEkeywords}
Artificial potentials, autonomous robots, coverage control, decentralized control, Hamilton-Jacobi reachability, swarm intelligence.
\end{IEEEkeywords}

\section{Introduction}

Autonomous systems have many potential applications in almost every part of society; however, these systems still typically operate in controlled environments in the absence of other agents.
Two major challenges -- coordination and safety -- arise when autonomous systems cooperate in close proximity with each other.
In this paper, we consider specifically the problem of controlling multiple autonomous systems to cover a desired possibly moving area in a decentralized and safe manner.
Applications of this problem include real-time surveillance of dynamic environments, efficient search and rescue, and multi-agent aerobatics.
\begin{figure}[ht]
\centering
\begin{subfigure}[b]{0.52\columnwidth}
\includegraphics[width=\columnwidth]{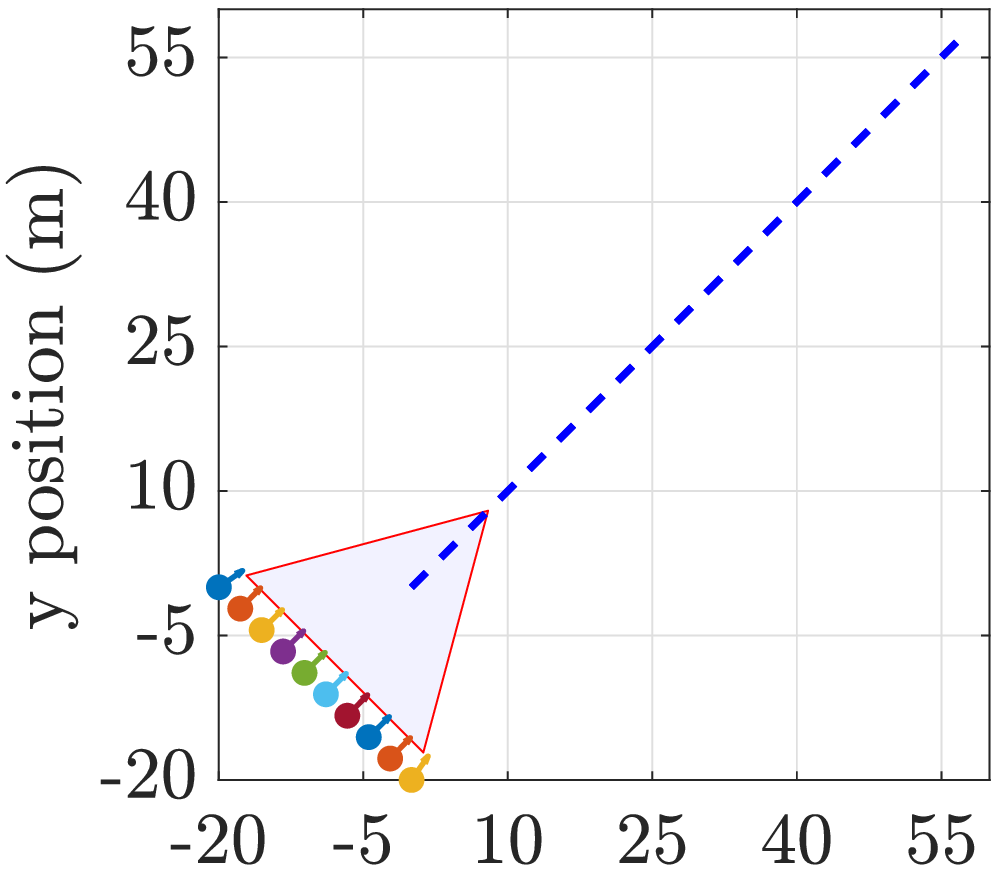}
\vspace{-2.5em}\caption{$t=0(s)$}
\label{fig:moving_triangle_avoid_0}
\end{subfigure}
\hspace{-1.9em}
\begin{subfigure}[b]{0.52\columnwidth}
\includegraphics[width=\columnwidth]{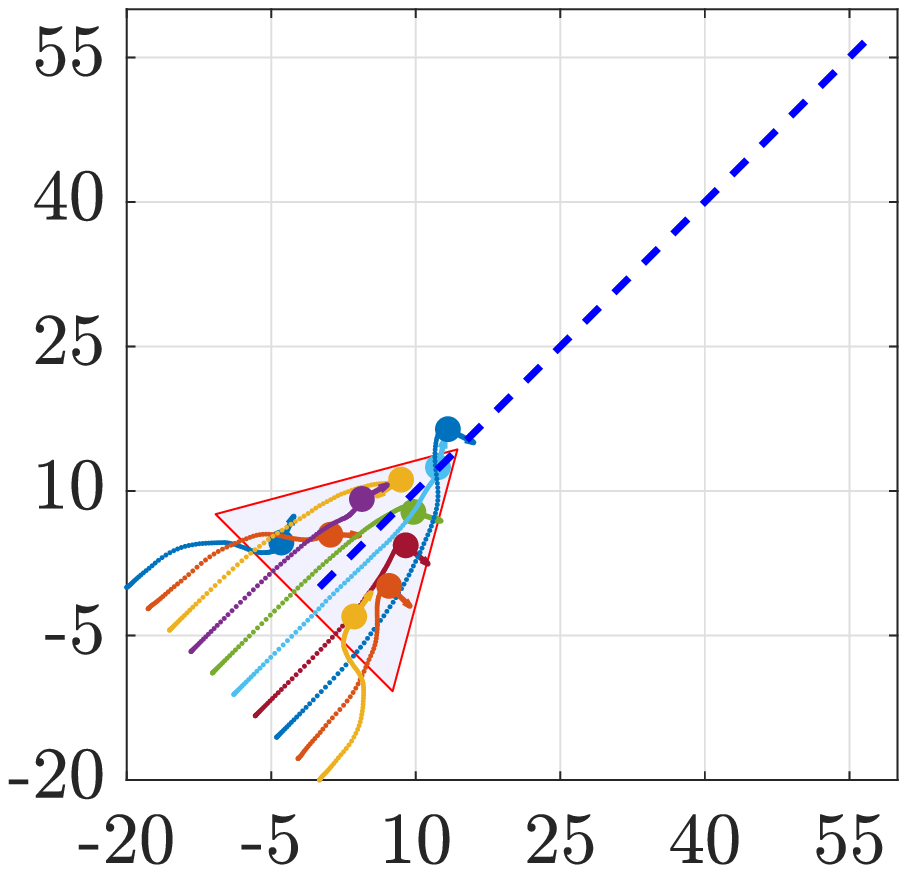}
\vspace{-2.5em}\caption{$t=9(s)$}
\label{fig:moving_triangle_avoid_9}
\end{subfigure}
\begin{subfigure}[b]{0.52\columnwidth}
\includegraphics[width=\columnwidth]{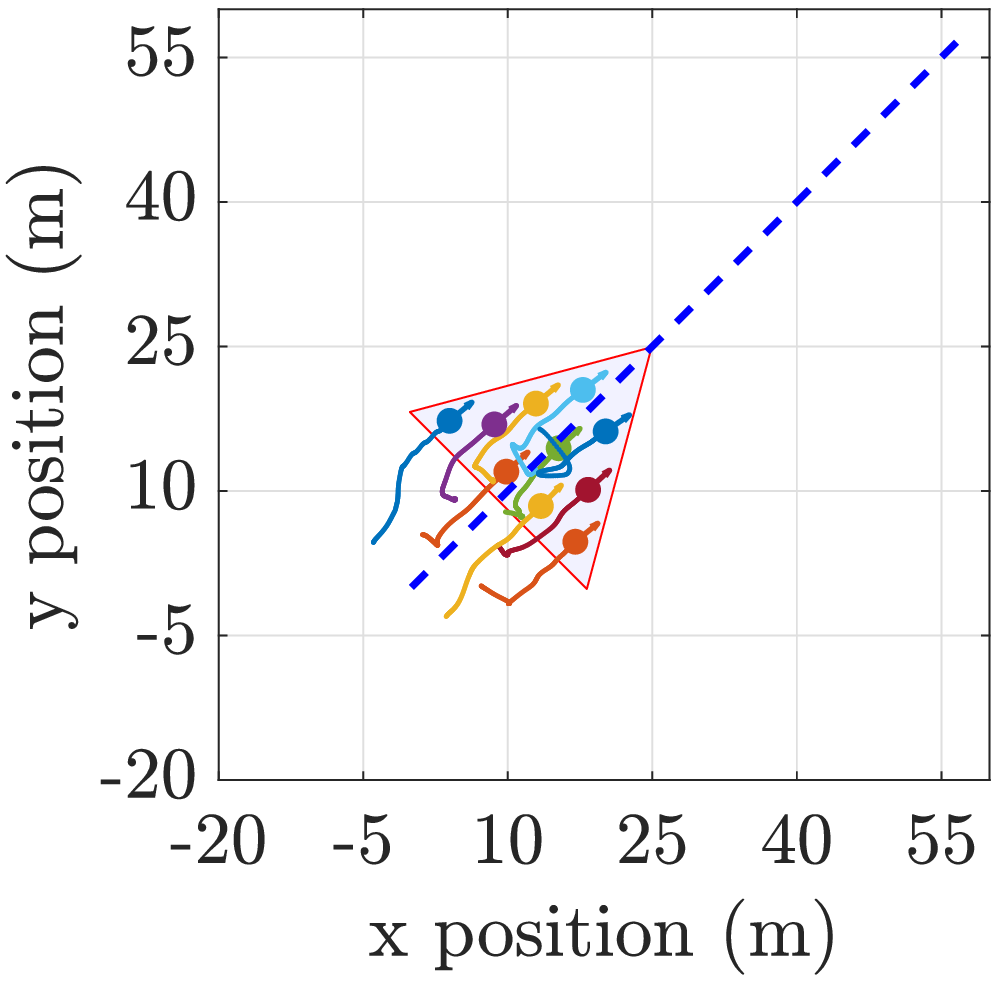}
\vspace{-1.5em}\caption{$t=24(s)$}
\label{fig:moving_triangle_avoid_24}
\end{subfigure}
\hspace{-1.9em}
\begin{subfigure}[b]{0.52\columnwidth}
\includegraphics[width=\columnwidth]{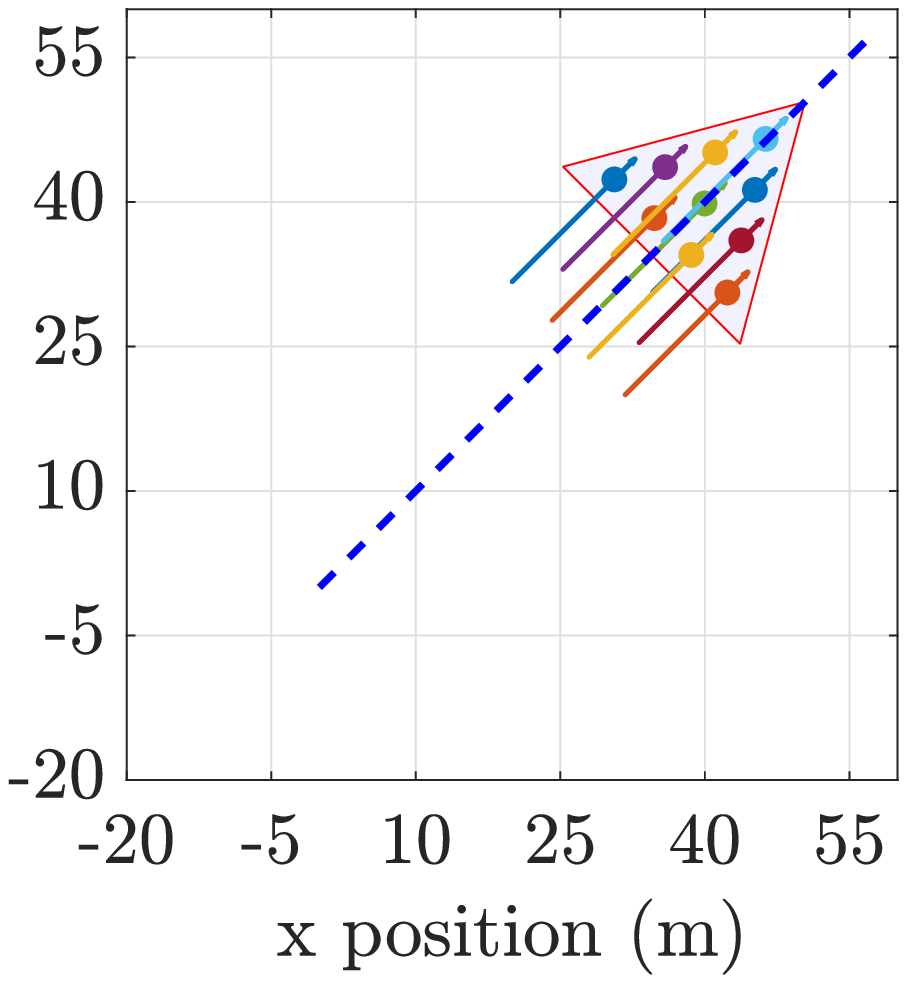}
\vspace{-1.5em}\caption{$t=60(s)$}
\label{fig:moving_triangle_avoid_60}
\end{subfigure}
\caption{Vehicles covering and following a moving triangular domain, when $N=10$, $c_{r}=2\,\text{(m)}$, $v_{max}=10\,\text{(m/s)}$, $u_{max}=3 \,\text{(m/s$^{2})$}$, $t_{safety}=5\,\text{(s)}$, $\aI=1\,\text{(m/s$^{2})$}$, $\ah=2\,\text{(m/s$^{2})$}$, $\av=0.2\,\text{(m/s$^{2})$}$, $C_{al}=0.2\,\text{(m/s$^{2})$}$, $l_{al}=7.79\,\text{(m)}$, $v_{d}=\left(\frac{\sqrt{2}}{2},\frac{\sqrt{2}}{2}\right)\,\text{(m/s)}$, $A=292.28\,\text{(m$^2$)}$ and $r_{d}=\sqrt{\frac{A}{N}}=5.4\,\text{(m)}$. Vehicles start in linear formation.}
\label{fig:moving_triangle}
\end{figure}

The objective of coverage control problems is to deploy agents to a possibly moving target area such that they can achieve an optimal sensing of the domain of interest. 
A common solution is through minimizing a coverage functional involving a Voronoi tessellation and the locations of vehicles within the tessellation \cite{Cortes_etal2004,Gao_etal2008}. 
This is a high-dimensional optimization problem which needs to be solved in real time.  
In our approach we achieve coverage\footnote{Alternative terminologies are balanced or anti-consensus configurations} through swarming by artificial potentials \cite{LeonardFiorelli2001,Sepulchre_etal2007, Chung_etal2018}. 
In a related problem, artificial potentials have been used for containment of follower agents within the convex hull of leaders \cite{RenCao2011,Cao_etal2012}.

Reachability analysis has been studied and used extensively in the past several decades as a tool for providing guarantees on performance and safety of dynamical systems \cite{Althoff2011,Frehse2011,Chen2013}, as well as controller synthesis in many cases.
In particular, Hamilton-Jacobi (HJ) reachability \cite{Yang2013,Chen2018} has seen success in collision avoidance \cite{Alam11,Chen2016}, air traffic management \cite{Margellos13,Chen2018a}, and emergency landing \cite{Akametalu2018}. 
HJ reachability analysis is based on dynamic programming, and involves solving an HJ partial differential equation (PDE) to compute a backward reachable set (BRS) representing states from which danger is inevitable. 
By using the derived optimal controller on the boundary of the BRS, safety can be guaranteed despite the worst-case actions of another agent.

In this paper we develop a new approach to self-collective coordination of autonomous agents that aim to reach and cover a moving target domain.
We consider two types of planar vehicle dynamics which differ in nature by the allowed control actions. 
The first, double integrator dynamics, allows the controller to specify the $x$ and $y$ accelerations at any time; among others, it is a simplified model for quadrotors. 
The second, planar fixed-wing aircraft dynamics, in which the vehicle controls the acceleration and turn rate, is a natural model for cars, bicycles or planes.

Our approach aims to enable the following: i) reaching and spreading over a target domain without having set \textit{a priori} the coverage configuration and the final state of each vehicle, ii) the use of a distributed control policy from which self-organization and intelligence emerges at the group level, and (iii) guarantee of collision-avoidance throughout the coordination process.

In this aim, we consider a control policy that includes both a coverage and a safety controller. 
The coverage controller brings vehicles inside a target domain, spreads them over the target domain, and aligns vehicle and domain velocities with each other. 
On the other hand, the safety controller guarantees collision avoidance of vehicles.
A simulation of the emergent behaviour resulting from our controller is shown in Figure \ref{fig:moving_triangle}, where $N=10$ vehicles move to cover a moving triangular domain.

The proposed coverage controller uses two types of artificial potentials and velocity alignment terms resembling the Cucker-Smale model with single leader \cite{li2010cucker,ha2014flocking}. 
One artificial potential is for inter-individual forces which are designed to achieve a certain desired inter-vehicle spacing as in \cite{LeonardFiorelli2001}. 
Such controller enables emergent self-collective behaviour of the vehicles, similar to the highly coordinated motions observed in biological groups such as flocks of birds and schools of fish  \cite{Camazine_etal2003}. 
The other artificial potential is used for vehicle-target forces by which vehicles reach the target and cover it. 
The Cucker-Smale terms promote the vehicles to match the velocity of the target domain, which acts as a leader.

We emphasize that the proposed coverage controller, which also drives vehicles inside the target domain, is done through agent swarming; there is no leader and no order among the agents.
This means that the controller does not rely on the well functioning of each individual agent.
Such self-collective and cooperative behaviour is present in systems of interacting agents in the physics and biology literature \cite{Vicsek_etal1995,Couzin_etal2002,DOrsogna_etal2006,CuckerSmale2007,FetecauGuo2012}. An agent search and target-locating algorithm based on a swarming model was studied in \cite{Liu_etal2010}.

Unlike first-order models, where agents directly control their velocities, our vehicle models are second-order: agents are implicitly or explicitly controlled through their acceleration. 
In addition, We set \textit{a priori} bounds on the control forces, making our controller more realistic than previous approaches, in which infinite forces may be needed to guarantee collision avoidance \cite{LeonardFiorelli2001,HusseinStipanovic2007}.



The safety controller for vehicles with double integrator dynamics is derived from HJ reachability analysis. 
Instead of numerically solving an associated Hamilton-Jacobi-Isaacs (HJI) PDE, we derive the analytical solution to the PDE to eliminate numerical errors and the need to specify computation bounds. 
While multi-vehicle collision avoidance is in general intractable, incorporating pairwise collision avoidance  drastically reduced the collision rate.

The paper is organized as follows. Section \ref{sect:HJ-reach} presents some background on Hamilton-Jacobi reachability. In Section \ref{sect:static} we study the safe coverage problem for static domains with double integrator dynamics. In Section \ref{sect:moving} we generalize the study to moving domains following inertial trajectories. In Section \ref{sect:fixed-wing} we formulate a control algorithm for coverage of moving domains  with fixed-wing aircraft dynamics. 
Finally, we make concluding remarks and discuss open problems and potential future directions of research.


\section{Background: Hamilton-Jacobi Reachability}
\label{sect:HJ-reach}
We review here some basic Hamilton-Jacobi reachability theory, which will be used in the paper to address pairwise collision avoidance.

Consider the two-player differential game described by the joint system
\begin{align}
\begin{aligned}
\label{eq:syst-gen}
\dot{z}\left(t\right) & =f\left(z\left(t\right),u\left(t\right),d\left(t\right)\right),\\
z\left(0\right) & =x,
\end{aligned}
\end{align}
\noindent where $z\in\mathbb{R}^{n}$ is the joint state of the players, $u\in\mathcal{U}$ is the control input of Player 1 (hereafter referred to as ``control'') and $d\in\mathcal{D}$ is the control input of Player 2 (hereafter referred to as ``disturbance'') .

We assume $f:\mathbb{R}^{n}\times\mathcal{U}\times\mathcal{D}\rightarrow\mathbb{R}^{n}$ is uniformly continuous, bounded, and Lipschitz continuous in $z$ for fixed $u$ and $d$, and $u\left(\cdot\right)\in\mathcal{U}$, $d\left(\cdot\right)\in\mathcal{D}$ are measurable functions. Under these assumptions we can guarantee the dynamical system \eqref{eq:syst-gen} has a unique solution.

In this differential game, the goal of player 2 (the disturbance) is to drive the system into some target set using only non-anticipative strategies \cite{Yang2013}, while player 1 (the control) aims to drive the system away from it.

We introduce the \textit{time-to-reach} problem as follows.

\textit{\textbf{(Time-to-reach) } Find the time to reach a target set $\domDanger$ while avoiding the obstacle $\domSafe$ from any initial state $x$, in a scenario where  player 1 maximizes the time, while player 2 minimizes the time.
Player 2 is restricted to using \textit{non-anticipative} strategies,  with knowledge of player 1's current and past decisions. 
Such a time is denoted by $\phi\left(x\right)$.}

Following \cite{Yang2013}, given $u\left(\cdot\right)$ and $d\left(\cdot\right)$, the time to reach a closed target set $\domDanger$ with compact boundary, while avoiding the obstacle $\domSafe$, is defined as
\[
T_{x}\left[u,d\right]=\min\left\{ t|\,z\left(t\right)\in\domDanger \text{ and } z\left(s\right)\notin\domSafe, \forall s\in\left[0,t\right] \right\}.
\]
Then, the \textit{Time-to-reach} problem reduces to finding:
\[
\phi\left(x\right)=\underset{\theta\in\Theta}{\min}\,\underset{u\in\mathcal{U}}{\max}\,T_{x}\left[u,\theta\left[u\right]\right],
\]
where $\Theta$ represents the set of non-anticipative strategies. The collection of all the states that are reachable in a finite time is the capturability set $\mathcal{R}^{*}=\left\{ x\in\mathbb{R}^{n}|\; \phi\left(x\right)<+\infty\right\} $.

Applying the dynamic programming principle, as done in \cite{bardi1989boundary}, one can obtain $\phi$ as the viscosity solution of the following stationary HJ PDE:
\begin{equation}
\label{eq:HJPDE}
\begin{aligned}
\underset{u\in\mathcal{U}}{\min}\,\underset{d\in\mathcal{D}}{\max}\,\left\{ -\nabla\phi\left(z\right)\cdot f\left(z,u,d\right)-1\right\}=0\text{ in }&\mathcal{R}^{*}\backslash\left(\domDanger\cup\domSafe\right), \\
\phi\left(z\right)=0\text{ on }\domDanger,\qquad\phi\left(z\right)=\infty\text{ on  }\domSafe.& 
\end{aligned}
\end{equation}
In applications, this PDE is typically solved using  finite difference methods such as the Lax-Friedrichs method \cite{Yang2013}. Also, from the solution $\phi\left(x\right)$ one can obtain the control input for optimal avoidance as:
\begin{align}\label{eq:optimalControl}
u^{*}\left(z\right)=\arg\underset{u\in\mathcal{U}}{\min}\,\underset{d\in\mathcal{D}}{\max}\left\{-\nabla\phi\left(z\right)\cdot f\left(z,u,d\right)-1\right\}.
\end{align}
\section{Coverage of a static domain}
\label{sect:static}

\subsection{Problem formulation}
\label{subsect:problem}
We consider a group of $N$ vehicles, denoted by $\Qi$, $i=1,\dots,N$, with the double integrator dynamics given by
\begin{equation} \label{eq:problemDynamic}
\begin{split}
\dot{p}_{i}=v_{i}, \quad \dot{v}_{i}=u_{i}; \qquad \left\Vert v_{i} \right\Vert \leq v_{max}, \, \left\Vert u_{i} \right\Vert \leq u_{max}.
\end{split}
\end{equation}
Here, $p_{i}=\left(\pix,\piy\right)$ and $v_{i}=\left(\vix,\viy\right)$ are the position and velocity of  $\Qi$ respectively, and  $u_{i}=\left(\uix,\uiy\right)$ is the control force applied to this vehicle.

Given a predefined collision radius $c_{r}$, a vehicle is considered safe if there is no other vehicle within distance $c_{r}$ to it, i.e., if
\begin{equation}
    \centering
    \left\Vert p_{i}-p_{j} \right\Vert>c_{r}, \qquad  \text{ for any } j \neq i.
    \label{eq:safety}
\end{equation}

In this paper we are interested in certain configurations of the agents in the domain $\Omega$. Specifically, we set the following definitions.
\begin{defn}[$r$-Subcover] \label{defn:subcover}
A group of agents is an \textbf{$r$-subcover} for a compact domain $\Omega\subseteq\RR$ if:
\begin{enumerate}
\item The distance between any two vehicles is at least $r$.
\item The signed distance from any vehicle to $\Omega$ is less than equal to $-\frac{r}{2}$.
\end{enumerate}
\end{defn}
\begin{defn}[$r$-Cover]
\label{defn:cover} An $r$-subcover for $\Omega$ is an \textbf{$r$-cover} for $\Omega$ if its size is maximal (i.e., no larger number of agents can be an $r$-subcover for $\Omega$). 
\end{defn}

The $r$-subcover definition is closely related to the packing problem for circular objects of radius $\frac{r}{2}$ in a container with shape $\Omega$. Having an $r$-cover implies the container is full and there is no room for more of such objects.
\medskip

The following safe domain coverage problem is of main interest to our work in this chapter.

\textbf{Safe-domain-coverage by vehicles with double integrator dynamics:} \textit{Consider a compact domain $\Omega$ in the plane and $N$ vehicles each with dynamics described by \eqref{eq:problemDynamic}, starting from safe initial conditions. Find the maximal $r>0$ and a control policy that leads to a stable steady state which is an $r$-cover for $\Omega$, while satisfying the safety condition \eqref{eq:safety} at any time.}
\medskip

The controller we design and present below has two components: a coverage controller and a safety controller, the latter being based on Hamilton-Jacobi reachability. We will present them separately.


\subsection{Coverage controller}
\label{subsect:coverage-controller}
Define $p_{ij}:=p_{i}-p_{j}$, and denote by $P_{\partial\Omega}\left(p_{i}\right)$ the closest point of $\partial \Omega$ to $p_{i}$ (i.e., the projection of $p_i$ on $\partial \Omega$). Also, define $h_{i}:=p_{i}-P_{\partial\Omega}\left(p_{i}\right)$, and denote by $\left[\left[ h_{i}\right] \right]$ the signed distance of $p_i$ from  $\partial\Omega$ -- see Figure \ref{fig:controlForce}. 

The proposed control force is given by:
\begin{equation}\label{eq:controlExplicit}
u_{i}=\!-\!\!\sum_{j\neq i}^{N} f_{I}\left(\left\Vert p_{ij}\right\Vert \right)\frac{p_{ij}}{\left\Vert p_{ij}\right\Vert }-f_{h}\left(\left[\left[h_{i}\right] \right]\right)\frac{h_{i}}{\left[\left[h_{i}\right] \right] } - a_v v_i,
\end{equation}
\noindent 
where the three terms in the right-hand-side represent inter-vehicle, vehicle-domain, and braking forces, respectively. We assume each vehicle is able to measure its distance to the target domain, its speed, as well as its position relative to the other vehicles. In \eqref{eq:controlExplicit}, $a_v$ is a fixed positive constant. 

Figure \ref{fig:controlForce} illustrates the control forces for two generic vehicles located at $p_i$ and $p_j$. Shown there are the unit vectors in the directions of the inter-vehicle and vehicle-domain forces (yellow and blue arrows, respectively), along with the resultant that gives the overall control force (red arrows). Note that due to the nonsmoothness of the boundary, different points may have different types of projections: $p_i$ projects on the foot of the perpendicular to $\partial \Omega$, while $p_j$ projects on a corner point of $\partial \Omega$.

\begin{figure}
\centering
    \begin{subfigure}[b]{0.75\linewidth}        \includegraphics[width=\linewidth]{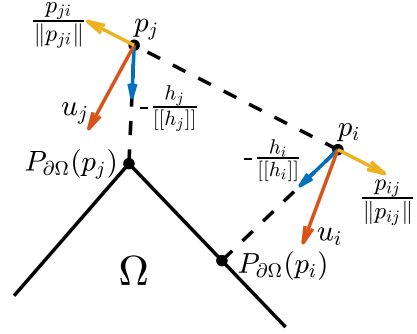}\hspace{-0.1cm}
    \end{subfigure}
    \vspace{-1.6em}
\caption{Illustration of control forces acting on two vehicles located at $p_i$ and $p_j$.} 
\label{fig:controlForce}
\end{figure}

 Figure \ref{fig:fINfh} shows the specific forms of the functions $f_{I}$ and $f_{h}$ that we consider in this paper. Note that $f_{I}(r)$ is negative for $r<\rd$, and zero otherwise. This means that for two vehicles within distance $0<r<\rd$ from each other, their inter-vehicle interactions are repulsive, while two vehicles at distance larger than $\rd$ apart do not interact at all. The vehicle-domain force $f_{h}(r)$ is zero for $r<-\frac{\rd}{2}$, and positive for $r>-\frac{\rd}{2}$.  For a vehicle $i$ outside the target domain, i.e., with $\left[\left[h_{i}\right] \right] >0$, this results in an attractive interaction force toward $\partial \Omega$. On the other hand, for a vehicle inside the domain, where $\left[\left[h_{i}\right] \right] <0$, one distinguishes two cases: i) the vehicle is within distance $\frac{\rd}{2}$ to the boundary, in which case it experiences a repulsive force from it, or ii) the vehicle is more than distance $\frac{\rd}{2}$ from the boundary, in which case it does not interact with the boundary at all. 

\begin{lem}
\label{lemma:potential}
The inter-vehicle and vehicle-domain forces are conservative.
\end{lem}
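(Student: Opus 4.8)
The plan is to exhibit explicit scalar potentials whose negative gradients with respect to $p_i$ reproduce each force, since a force field is conservative precisely when it is the gradient of such a potential. The two building blocks are primitives of the scalar profiles $f_I$ and $f_h$: I set $V_I(r):=\int_{\rd}^{r} f_I(s)\,ds$ and $V_h(r):=\int_{-\rd/2}^{r} f_h(s)\,ds$, where the lower limits are placed exactly at the points where each profile vanishes, so that $V_I$ and $V_h$ are constant (equal to zero) on the regions where the corresponding interaction is switched off.

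For the inter-vehicle force I would treat it as a sum of central pairwise forces. Using $\nabla_{p_i}\norm{p_{ij}} = p_{ij}/\norm{p_{ij}}$ together with the chain rule and $V_I' = f_I$, one checks that the total interaction potential $U_I := \sum_{j<k} V_I\!\left(\norm{p_{jk}}\right)$ satisfies $-\nabla_{p_i} U_I = -\sum_{j\neq i} f_I\!\left(\norm{p_{ij}}\right) p_{ij}/\norm{p_{ij}}$, which is exactly the inter-vehicle term in \eqref{eq:controlExplicit}. This is the routine part: any central force depending only on $\norm{p_{ij}}$ is conservative.

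The crux is the vehicle-domain force. The key observation is that $h_i/\left[\left[h_i\right]\right]$ is the gradient of the signed-distance function. Writing $b(p_i):=\left[\left[h_i\right]\right]$, I would verify that $\nabla_{p_i} b = h_i/\left[\left[h_i\right]\right]$ holds both outside the domain (where $h_i$ is the outward displacement from the unique nearest boundary point and $\left[\left[h_i\right]\right]>0$, so the quotient is the outward unit vector) and inside it (where the sign flip in $h_i$ and the sign flip in $\left[\left[h_i\right]\right]$ cancel, again yielding the outward unit normal). Composing through $V_h$ and using $V_h' = f_h$ then gives $-\nabla_{p_i} V_h(b(p_i)) = -f_h\!\left(\left[\left[h_i\right]\right]\right) h_i/\left[\left[h_i\right]\right]$, so $U_h:=\sum_i V_h(b(p_i))$ is a potential for the vehicle-domain term.

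The main obstacle is the regularity of the signed-distance function, on which the identity $\nabla_{p_i} b = h_i/\left[\left[h_i\right]\right]$ rests. This formula requires the projection $P_{\partial\Omega}(p_i)$ to be unique, which holds off the medial axis (cut locus) of $\partial\Omega$; at non-smooth boundary points such as corners, where the nearest point may be the corner itself, I would note that the formula still holds, since locally $b$ reduces to the Euclidean distance to that fixed corner, whose gradient is again $h_i/\norm{h_i}$. The set where the projection fails to be unique is negligible, so both forces are conservative on the open set where the potentials $U_I$ and $U_h$ are differentiable, which is all the subsequent Lyapunov analysis requires.
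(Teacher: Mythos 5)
Your proposal is correct and follows essentially the same route as the paper: define $V_I$ and $V_h$ as primitives of $f_I$ and $f_h$ based at the points where the profiles vanish, then apply the chain rule together with the identity $\nabla\left[\left[h_i\right]\right]=h_i/\left[\left[h_i\right]\right]$ for the signed distance (which the paper takes from Delfour and Z\'{o}lesio rather than verifying case by case as you do). Your added discussion of where the projection fails to be unique is a reasonable elaboration but does not change the argument.
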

\begin{proof}
 Define the potentials: 
 \begin{equation*}
   V_{I}\left(p_{ij}\right)={\int_{r_{d}}^{\left\Vert p_{ij}\right\Vert }}f_{I}\left(s\right)ds, \qquad   V_{h}\left(p_{i}\right)={\int_{-\frac{r_{d}}{2}}^{\left[\left[ h_{i}\right]\right] }} f_{h} \left(s\right) ds.
 \end{equation*}
Then,
\begin{align*}
\nabla_{i} V_{h}\left(p_{i}\right)&=f_{h}\left(\left[\left[h_{i}\right] \right] \right)\nabla(\left[\left[h_{i}\right] \right])=f_{h}\left(\left[\left[h_{i}\right] \right] \right)\frac{h_{i}}{\left[\left[h_{i}\right] \right] },
\end{align*}
where we have used the identity $\nabla(\left[\left[h_{i}\right]\right]) = \frac{h_{i}}{\left[\left[h_{i}\right]\right]}$ (see Theorem 5.1(iii) in \cite{DelfourZolesio2001}). Similarly, the inter-vehicle force is the negative gradient of the potential $V_I$.
\end{proof}

The potentials $V_{I}$ and $V_{h}$ are shown in Figure \ref{fig:potentials}. Their explicit expressions are given by:
\begin{equation}
\label{eq:VI}
V_I(x) = 
\begin{cases}
\frac{\aI}{2}(\|x\|-\rd)^2 &\quad  \text{ for } \|x\|<\rd,\\
0 & \quad \text{ for } \|x\|\geq \rd,
\end{cases}
\end{equation}
and
\small
\begin{equation}
\label{eq:Vh}
V_h(x) = 
\begin{cases}
0 & \text{for } [[x - P_{\partial \Omega} x]]\leq -\frac{\rd}{2}, \\
\frac{\ah}{2} ([[x - P_{\partial \Omega} x]] + \frac{\rd}{2})^2 & \text{for } [[x - P_{\partial \Omega} x]]>-\frac{\rd}{2},
\end{cases}
\end{equation}
\normalsize
where $\aI>0$ is the slope of the function $f_I$ on $[0,\rd]$ and $\ah>0$ is the slope of the function $f_h$ on $\left[-\frac{\rd}{2},\infty\right)$. Note that as $a_{I}$ and $a_{h}$ are positive, both potentials $V_{I}$ and $V_h$ are non-negative.

By Lemma \ref{lemma:potential}, the control given in Equation \eqref{eq:controlExplicit} becomes
\begin{equation}\label{eq:controlPotential}
u_{i}={\sum_{j\neq i}^{N}}-\nabla_{i}V_{I}\left(p_{ij}\right)-\nabla_{i}V_{h}\left(p_{i}\right)- a_v v_i.
\end{equation}

\begin{figure}
\centering
    \begin{subfigure}[b]{0.45\columnwidth}
        \includegraphics[width=\columnwidth]{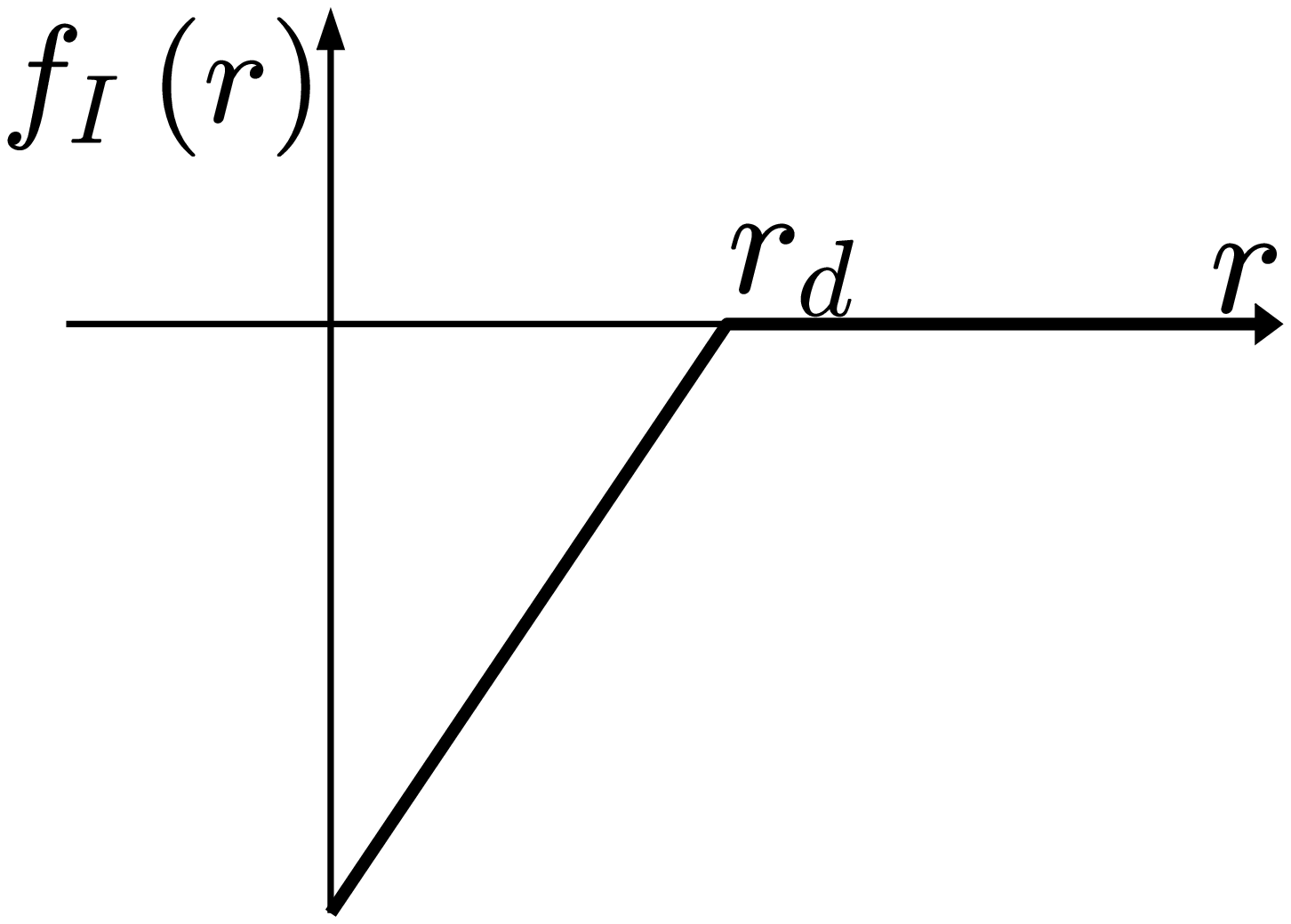}
    \end{subfigure}
    \begin{subfigure}[b]{0.45\columnwidth}
        \includegraphics[width=\columnwidth]{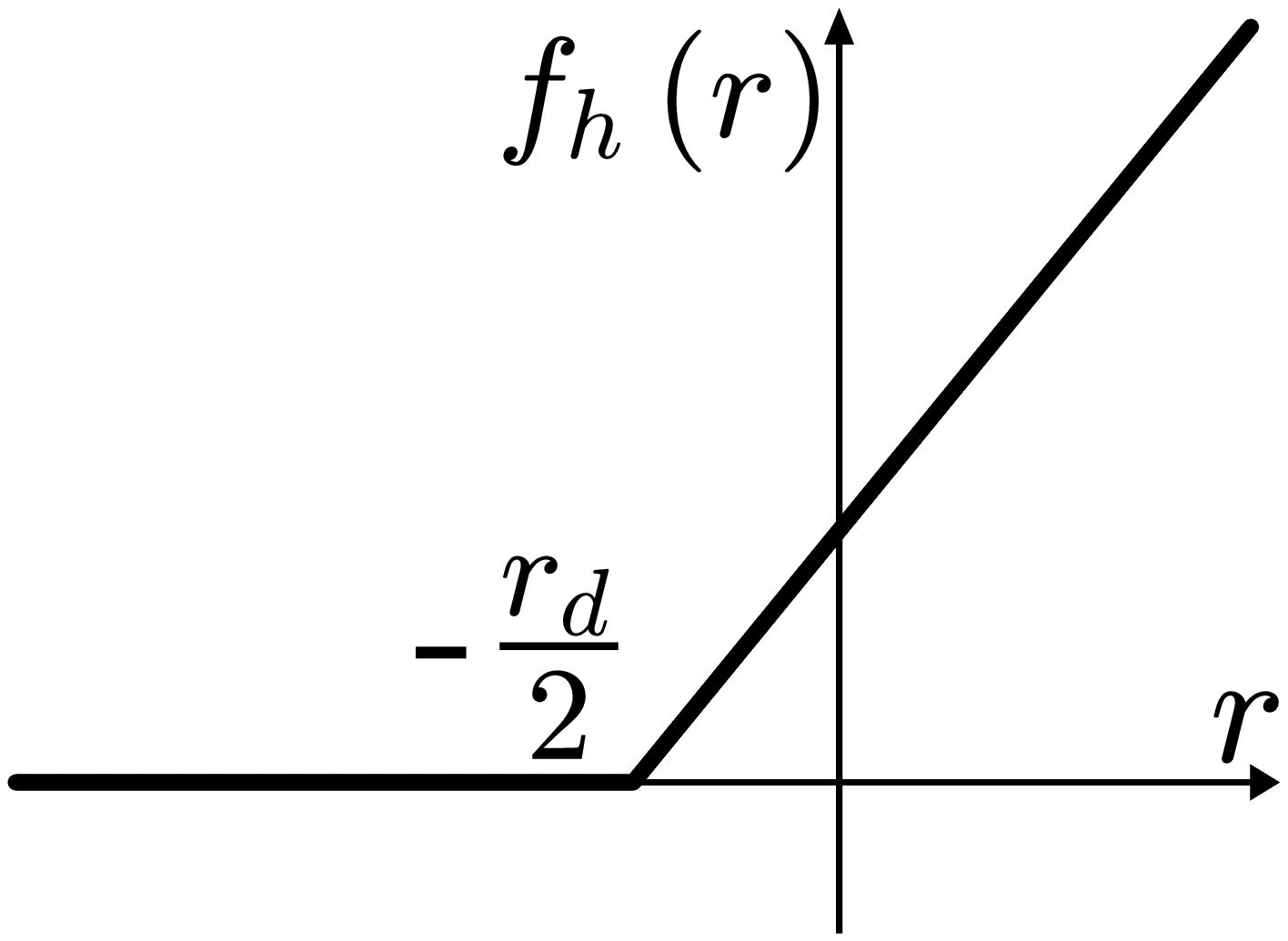}
    \end{subfigure}
\caption{Inter-vehicle and vehicle-domain control forces.} 
\label{fig:fINfh}
\end{figure}

\begin{figure}
\centering
    \begin{subfigure}[b]{0.45\columnwidth}
        \includegraphics[width=\columnwidth]{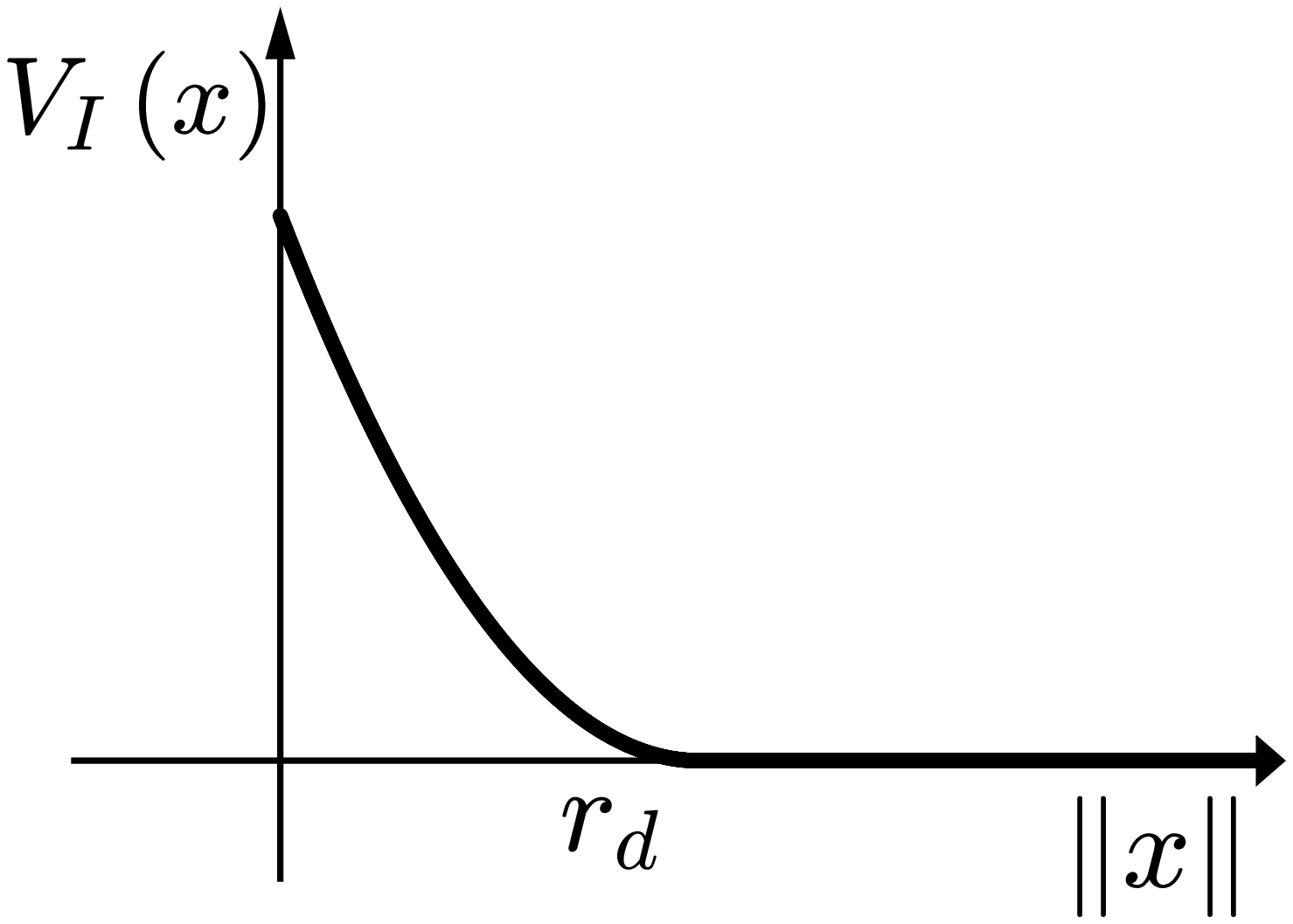}
    \end{subfigure}
    \begin{subfigure}[b]{0.45\columnwidth}
        \includegraphics[width=\columnwidth]{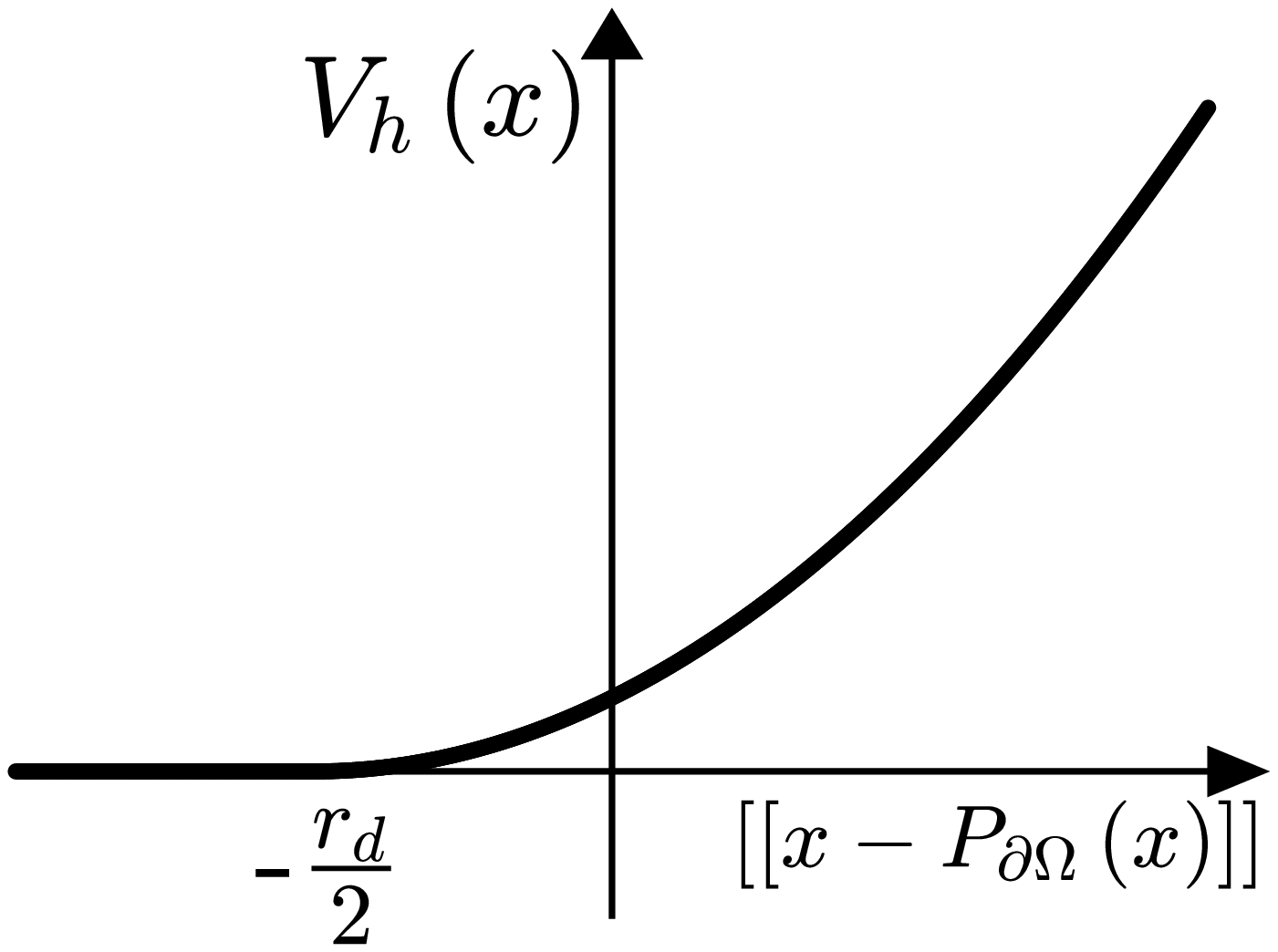}
    \end{subfigure}
\caption{Inter-vehicle and vehicle-domain potentials.} 
\label{fig:potentials}
\end{figure}

\medskip
\textbf{Asymptotic behaviour of the controlled system.} Consider the following candidate for a Lyapunov function, consisting in kinetic plus (artificial) potential energy:
\begin{equation}
\label{eq:Lyapunov}
\Phi=\frac{1}{2}{\sum_{i=1}^{N}}\Bigl(\dot{p}_{i}\cdot\dot{p}_{i}+{\sum_{j\neq i}^{N}}V_{I}\left(p_{ij}\right)+2V_{h}\left(p_{i}\right)\Bigr).
\end{equation}
Note that each term in $\Phi$ is non-negative, and $\Phi$ reaches its absolute minimum value when the vehicles are totally stopped. Also, at the global minimum $\Phi=0$, the equilibrium configuration is an $\rd$-subcover of $\Omega$; in particular, all vehicles are inside the target domain.


The time derivative of $\Phi$ can be calculated as:
\begin{align*}
\dot{\Phi} & ={\sum_{i=1}^{N}}\dot{p}_{i}\cdot\Bigl(u_{i}+{\sum_{j\neq i}^{N}}\nabla_{i}V_{I}\left(p_{ij}\right)+\nabla_{i}V_{h}\left(p_{i}\right)\Bigr)\\
 &=-{\sum_{i=1}^{N}} a_v \| v_{i}\|^2,
\end{align*}
where we used the dynamics \eqref{eq:problemDynamic} and equation \eqref{eq:controlPotential}. Note that $\dot{\Phi}$ is negative semidefinite and equal to zero if and only if $v_{i}=0$ for all $i$ (i.e., all vehicles are at equilibrium). 

We first show that the group of vehicles remains within a compact set through time evolution. The key idea is that the vehicle-domain potential $V_h$ is confining the vehicles, and keeps them as a group \cite{Olfati2006}.
\begin{prop} 
\label{prop:cohesive}
Solutions of \eqref{eq:problemDynamic}, with control law given by \eqref{eq:controlPotential} remain cohesive through time, i.e., there exists an $R>0$ such that $\|\pos_i(t)\| \leq R$, for all $i$ and $t \geq 0$.
\end{prop}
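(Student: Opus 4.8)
The plan is to rely on the Lyapunov function $\Phi$ from \eqref{eq:Lyapunov}, whose monotonicity has already been established in the excerpt: since $\dot{\Phi} = -\sum_{i} a_v\|v_i\|^2 \le 0$, the energy is non-increasing, so $\Phi(t) \le \Phi(0)$ for all $t \ge 0$. For any bounded initial configuration (finite positions, and velocities automatically bounded by $v_{max}$), the value $\Phi(0)$ is a finite constant, which will serve as the energy budget that confines the group.

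Next I would exploit the fact that every summand in $\Phi$ is non-negative. Isolating the vehicle-domain term (which enters \eqref{eq:Lyapunov} with coefficient one after the factor $\tfrac12$ multiplies $2V_h$), one obtains $V_h(p_i(t)) \le \Phi(t) \le \Phi(0)$ for every index $i$ and all $t \ge 0$. This is precisely where the confining role of $V_h$ is used: from the explicit form \eqref{eq:Vh}, whenever the signed distance satisfies $[[h_i]] > -\tfrac{\rd}{2}$ we have $V_h(p_i) = \tfrac{a_h}{2}\bigl([[h_i]] + \tfrac{\rd}{2}\bigr)^2$, so the energy bound translates into a bound on the signed distance, namely $[[h_i]](t) \le \sqrt{2\Phi(0)/a_h} - \tfrac{\rd}{2} =: D$. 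Thus no vehicle can stray arbitrarily far beyond $\partial\Omega$ without exceeding the conserved energy.

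Finally I would convert this bound on the signed distance into the desired uniform bound on $\|p_i\|$ by invoking compactness of $\Omega$. Let $\rho := \max_{x\in\Omega}\|x\|$, which is finite. If $p_i$ lies inside $\Omega$ (so $[[h_i]] \le 0$), then trivially $\|p_i\| \le \rho$. If $p_i$ lies outside, then $[[h_i]] > 0$ equals the Euclidean distance from $p_i$ to its projection $P_{\partial\Omega}(p_i) \in \partial\Omega \subset \Omega$, whence $\|p_i\| \le \|P_{\partial\Omega}(p_i)\| + [[h_i]] \le \rho + D$. In either case $\|p_i(t)\| \le R := \rho + \max(0,D)$ for all $i$ and $t \ge 0$, which is the claimed bound.

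The computations are routine; the real content sits in the middle step — recognizing that non-negativity of each term of $\Phi$ bounds the vehicle-domain potential of each \emph{individual} agent by the total conserved energy, and that the quadratic (confining) growth of $V_h$ in the positive signed-distance regime then pins down how far outside $\Omega$ any agent can travel. The last step is where compactness of $\Omega$ is indispensable: without it a bounded signed distance would not translate into a bounded position, so the cohesiveness conclusion genuinely relies on $\Omega$ being compact.
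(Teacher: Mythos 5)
Your argument is correct and follows essentially the same route as the paper: monotonicity of $\Phi$, non-negativity of each summand to isolate $\sum_i V_h(p_i(t)) \leq \Phi(0)$, and the quadratic growth of $V_h$ in the regime $[[h_i]] > -\frac{\rd}{2}$ to bound the distance to $\Omega$ by $\sqrt{2\Phi(0)/\ah}$. The only difference is that you make explicit the final (routine) step of converting a bounded signed distance plus compactness of $\Omega$ into a uniform bound on $\|p_i\|$, which the paper leaves implicit.
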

\begin{proof} Using that the kinetic energy and the potential $V_I$ are non-negative, and $\Phi$ given by \eqref{eq:Lyapunov} is non-increasing, we have:
\[
\sum_{i=1}^{N}V_{h}\left(\pos_{i}(t)\right) \leq \Phi(t) \leq \Phi(0).
\]
To show the boundedness of $\pos_i$ we only need to consider the case $\pos_i \notin \Omega$, as otherwise the vehicles are inside the compact set $\Omega$. Using the expression \eqref{eq:Vh} for $V_h$, we then find:
\[
\frac{\ah}{2} \sum_{i=1}^{N} \left( \|\pos_i(t) - P_{_{\partial \Omega}} \pos_i(t)\| + \frac{\rd}{2} \right)^2 \leq \Phi(0).
\]
This shows that the distances from $\pos_i(t)$ to the domain $\Omega$ remain bounded for all $t \geq 0$ by $\sqrt{\frac{2 \Phi(0)}{\ah}}$ when $\pos_i \notin \Omega$. 
\end{proof}

\begin{rmk}
\label{rem:LaSalle}
From LaSalle Invariance Principle we can conclude that the controlled system approaches asymptotically an equilibrium configuration. By the expressions \eqref{eq:controlPotential} of the control force and \eqref{eq:Lyapunov} of the Lyapunov function, these are equilibria that are critical points of the artificial potential energy $\frac{1}{2}{\sum_{i=1}^{N}}\Bigl({\sum_{j\neq i}^{N}}V_{I}\left(\pos_{ij}\right)+2V_{h}\left(\pos_{i}\right)\Bigr)$. We expect that any critical point other than the local minima (e.g., saddles or local maxima) are unstable \cite{Olfati2006}, and hence, almost every solution of the system will approach asymptotically a local minimum of the potential energy.
\end{rmk}

For certain simple setups (e.g., a square number of vehicles in a square domain -- see Figure \ref{fig:squareAvoid50}, or a triangular number of vehicles in a triangular domain), the $r_{d}$-covers are isolated equilibria. Hence, together with the fact that such equilibria are global minimizers for $\Phi$, their local asymptotic stability can be inferred. The formal result is given by the following proposition.

\begin{prop}
\label{prop:stability}
Consider a group of $N$ vehicles with dynamics defined by \eqref{eq:problemDynamic}, and the control law given by \eqref{eq:controlPotential}. Let the equilibrium of interest be of the form $\dot{p_{i}}=0$,  $\left\Vert p_{ij}\right\Vert \geq r_{d}$  and $\left[\left[h_{i}\right]\right]\leq -\frac{r_{d}}{2}$ for  $i,j=1,\cdots,N$ (see Definitions \ref{defn:subcover} and \ref{defn:cover}), and assume that this equilibrium configuration is isolated. Also assume that there is a neighborhood about the equilibrium in which the control law remains smooth. Then, the equilibrium is a global minimum of the sum of the artificial potentials and is locally asymptotically stable.
\end{prop}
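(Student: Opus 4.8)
The plan is to use the Lyapunov function $\Phi$ from \eqref{eq:Lyapunov}, whose derivative was already computed as $\dot\Phi = -\sum_{i=1}^{N} a_v \|v_i\|^2 \le 0$, and to promote this into local asymptotic stability by a standard Lyapunov/LaSalle argument. Throughout, write $U := \frac{1}{2}\sum_{i=1}^{N}\bigl(\sum_{j\neq i}^{N} V_I(p_{ij}) + 2V_h(p_i)\bigr)$ for the total artificial potential energy, so that $\Phi = \frac{1}{2}\sum_i \|v_i\|^2 + U$, and let $x^\ast = (p^\ast,0)$ denote the equilibrium point in phase space.

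First I would dispatch the global-minimum claim. At the candidate equilibrium the conditions $\|p_{ij}\| \ge \rd$ force $V_I(p_{ij}) = 0$ by \eqref{eq:VI}, and the conditions $[[h_i]] \le -\frac{\rd}{2}$ force $V_h(p_i) = 0$ by \eqref{eq:Vh}; hence $U = 0$ there. Since $\aI,\ah>0$ make every summand of $U$ non-negative, $U \ge 0$ everywhere, so the equilibrium is indeed a global minimum of the sum of the artificial potentials, with value $0$.

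Second, and this is the heart of the matter, I would upgrade this degenerate global minimum to a strict local minimum, i.e.\ positive-definiteness of $\Phi$ about $x^\ast$. The key observation is that $U$ is $C^1$, being a finite sum of the $C^1$ potentials $V_I$ and $V_h$ (this is exactly where Lemma \ref{lemma:potential} and the identity $\nabla([[h_i]]) = h_i/[[h_i]]$ enter). Consequently, at any configuration where $U = 0$ — necessarily a global minimum since $U \ge 0$ — we have $\nabla U = 0$, so that configuration paired with zero velocity is an equilibrium of the closed loop \eqref{eq:problemDynamic}, \eqref{eq:controlPotential}. Because $x^\ast$ is assumed isolated among equilibria, there can be no other zero of $U$ in a punctured neighborhood of $p^\ast$; hence $U > 0$ there, and $\Phi$ is positive definite at $x^\ast$. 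I would note in passing that isolation is essential and self-consistent: if all the separation and boundary constraints were strictly inactive at $x^\ast$, then $U$ would vanish identically on a neighborhood and $x^\ast$ could not be isolated, so the $\rd$-cover must be a tight packing with some active constraints.

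Finally I would invoke the standard Lyapunov/LaSalle theorem. Working inside the neighborhood where the control law is smooth — so the closed-loop vector field is locally Lipschitz, solutions are unique, and the invariance principle applies — positive-definiteness of $\Phi$ together with $\dot\Phi \le 0$ already yields Lyapunov stability and lets me select a small compact positively invariant sublevel set $\{\Phi \le c\}$ about $x^\ast$. On this set $\dot\Phi = 0$ precisely on $\{v = 0\}$; any trajectory remaining in $\{v=0\}$ has $v \equiv 0$, hence $\dot v = u \equiv 0$, so its positions form a critical point of $U$, i.e.\ an equilibrium, which by isolation must be $x^\ast$. Thus the largest invariant subset of $\{\dot\Phi = 0\}$ is $\{x^\ast\}$, and LaSalle gives convergence to $x^\ast$; combined with stability this is local asymptotic stability. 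The main obstacle is the second step, converting the $C^1$-only, degenerate global minimum into a strict local minimum: I resolve it not through a Hessian computation (the potentials fail to be $C^2$ across the active constraints $\|p_{ij}\| = \rd$ and $[[h_i]] = -\frac{\rd}{2}$) but purely via the isolation hypothesis and the observation that every zero of $U$ is an equilibrium.
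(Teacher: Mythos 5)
Your proof is correct and follows essentially the same route as the paper, which simply invokes the Lyapunov function $\Phi$ from \eqref{eq:Lyapunov}, the computation $\dot\Phi=-\sum_i a_v\|v_i\|^2\le 0$, and the LaSalle invariance principle together with the preceding discussion of global minimality and isolation. Your additional step---observing that every zero of the $C^1$ potential $U$ is itself an equilibrium, so isolation forces $U>0$ on a punctured neighborhood and hence $\Phi$ is positive definite---is a worthwhile explicit justification of a point the paper leaves implicit, but it does not change the overall argument.
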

\begin{proof}
The proof follows from LaSalle invariance principle and the arguments made above.
\end{proof}

Choosing an adequate $r_{d}$ when solving the \textit{safe-domain-coverage} problem leads to a nonlinear optimization problem (see \cite{lopez2011heuristic}), which in general can be quite difficult. We set the value of this parameter based on the assumption that any vehicle is covering roughly the same square area, i.e.,
\begin{equation}
\label{eq:rd}
    r_d = \sqrt{\frac{Area\left(\Omega\right)}{N}}.
\end{equation}
Note that \eqref{eq:rd} gives the exact maximal radius when  both the number of vehicles and the domain are square. The numerical experiments presented in this paper, which also involve target domains in the shape of a triangle or an arrowhead, show that \eqref{eq:rd} leads indeed to the desired covers.


\subsection{Collision avoidance}
\label{subsect:collision-controller}

An important component of our study is the guarantee that vehicles do not collide through the time evolution. 
For small initial energies, collision avoidance can be shown directly. For general cases, we introduce a safety controller based on HJ reachability analysis.
\medskip

\textbf{Small initial energy.} The following results hold for initial data with small energy $\Phi$.
\begin{prop}
\label{prop:collision-avoid}
Consider a target domain $\Omega$ and a group of $N$ vehicles with dynamics defined by \eqref{eq:problemDynamic} and \eqref{eq:controlPotential}. Assume the energy $\Phi(0)$ of the initial configuration satisfies
\[
\Phi(0) < \int_{\rd}^{c_{r}} f_{I}(s) \, ds = \frac{\aI}{2} (c_{r}-\rd)^2.
\]
Then, no vehicle collision can occur for all $t \geq 0$.
\end{prop}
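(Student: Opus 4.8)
The plan is to combine the monotone decay of the Lyapunov function with an energy-barrier argument: the group never carries enough energy to push any pair of vehicles inside the collision distance. Recall from the computation preceding Proposition \ref{prop:cohesive} that $\dot\Phi = -\sum_{i=1}^N a_v\|v_i\|^2 \le 0$, so that $\Phi(t)\le\Phi(0)$ for all $t\ge 0$, and that every term in $\Phi$ (kinetic, inter-vehicle, and vehicle-domain) is non-negative.

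First I would isolate the potential energy stored in a single pair. Since $V_I$ depends only on $\|p_{ij}\|$ and $p_{ij}=-p_{ji}$, we have $V_I(p_{ij})=V_I(p_{ji})$, so the ordered double sum $\frac12\sum_i\sum_{j\ne i}V_I(p_{ij})$ contributes exactly $V_I(p_{ij})$ per unordered pair $\{i,j\}$. Discarding the remaining non-negative terms then yields the pointwise-in-time bound $\Phi(t)\ge V_I(p_{ij}(t))$ for every pair and every $t\ge 0$.

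Next I would exploit the radial monotonicity of $V_I$. From \eqref{eq:VI}, assuming $c_{r}\le\rd$, the profile $r\mapsto\frac{\aI}{2}(r-\rd)^2$ is strictly decreasing on $[0,\rd]$ (its derivative there is $f_I(r)<0$), so the collision condition $\|p_{ij}\|\le c_{r}$ forces $V_I(p_{ij})\ge\frac{\aI}{2}(c_{r}-\rd)^2$. Arguing by contradiction, if a collision first occurred at some time $t^*$, then
\[
\Phi(0)\ge\Phi(t^*)\ge V_I(p_{ij}(t^*))\ge \tfrac{\aI}{2}(c_{r}-\rd)^2,
\]
contradicting the hypothesis. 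Hence $\|p_{ij}(t)\|>c_{r}$ for all pairs and all times, which is exactly the safety condition \eqref{eq:safety}.

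I do not expect a genuine obstacle here: this is a conservation-law estimate and the inequalities chain together directly. The only points needing care are the bookkeeping of the factor $\frac12$ in the pairwise sum, so that the lower bound $\Phi\ge V_I(p_{ij})$ comes out tight rather than off by a half, and the standing assumption $c_{r}\le\rd$, under which the barrier height $\frac{\aI}{2}(c_{r}-\rd)^2$ genuinely equals $V_I$ evaluated at the collision distance.
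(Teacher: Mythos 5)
Your proposal is correct and follows essentially the same route as the paper's proof: monotone decay of $\Phi$, the lower bound $\Phi(t)\ge V_I(p_{ij}(t))$ for any single pair obtained by discarding the other non-negative terms (with the same $\tfrac12$ bookkeeping over ordered pairs), and a contradiction at the first time a pair reaches distance $c_r$. Your explicit remark that the barrier argument relies on $c_r\le \rd$ and on the monotonicity of $V_I$ on $[0,\rd]$ is a fair point of care, but it is implicit in the paper's setup and does not change the argument.
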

\begin{proof}
Suppose by contradiction that there is a time $\tc$ at which vehicles $k$ and $l$ are at collision radius from each other, that is, $\|\pos_k(\tc)-\pos_l(\tc)\|= c_{r}$. Given that $V_I$ is non-negative, the inter-vehicle potential energy at the collision time can be bounded below as:
\[
\frac{1}{2}{\sum_{i=1}^{N}}{\sum_{j\neq i}^{N}}V_{I}\left(\pos_{ij}(\tc)\right) \geq V_I (\pos_k(\tc) - \pos_l(\tc)) = \int_{\rd}^{c_{r}} f_{I}(s) \, ds.
\]
On the other hand, using that the kinetic energy and the potential $V_h$ are non-negative, we have:
\[
\frac{1}{2}{\sum_{i=1}^{N}}{\sum_{j\neq i}^{N}}V_{I}\left(\pos_{ij}(\tc)\right) \leq \Phi(\tc) \leq \Phi(0).
\]
By combining the two sets of inequalities above one finds $\Phi(0) \geq \int_{\rd}^{c_{r}} f_{I}(s) \,ds$, which contradicts the assumption on the initial energy $\Phi(0)$.
\end{proof}

The result above can be generalized as follows.
\begin{prop}
\label{prop:collision-avoid-k}
Consider a target domain $\Omega$ and a group of $N$ vehicles with dynamics defined by \eqref{eq:problemDynamic} and \eqref{eq:controlPotential}. Assume the energy $\Phi(0)$ of the initial configuration satisfies
\[
\Phi(0) < \left(k+1\right)\int_{r_{d}}^{c_{r}}f_{I}(s) \, ds,
\]
for some $k\in\mathbb{Z}_{+}.$ Then, at most $k$ distinct pairs of vehicles could be possibly unsafe ($k = 0$ guarantees a safe motion) for all $t \geq 0$.
\end{prop}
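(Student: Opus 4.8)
The plan is to run the same contradiction argument as in Proposition~\ref{prop:collision-avoid}, but now tracking $k+1$ unsafe pairs at once and leaning on the fact that $\Phi$ is non-increasing. First I would suppose, for contradiction, that at some instant $\tc \ge 0$ there are $k+1$ distinct unordered pairs of vehicles that are simultaneously unsafe, i.e.\ $\|\pos_i(\tc)-\pos_j(\tc)\| \le c_r$ for each such pair $\{i,j\}$. (The assertion to be proved is thus that the number of \emph{simultaneously} unsafe pairs never exceeds $k$.)

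Next I would bound the inter-vehicle potential energy from below at time $\tc$. Since $c_r < \rd$ and, by \eqref{eq:VI}, $V_I$ is a decreasing function of the separation on $[0,\rd]$, every unsafe pair satisfies $V_I(\pos_{ij}(\tc)) \ge \frac{\aI}{2}(c_r-\rd)^2 = \int_{\rd}^{c_r} f_I(s)\,ds$, exactly the threshold energy appearing in Proposition~\ref{prop:collision-avoid}. The prefactor $\tfrac12$ in the inter-vehicle term of \eqref{eq:Lyapunov} compensates the double counting of the ordered double sum, so that term equals $\sum_{\{i,j\}} V_I(\pos_{ij}(\tc))$ summed over unordered pairs; discarding all but the $k+1$ unsafe ones (legitimate since every $V_I \ge 0$) gives a lower bound of $(k+1)\int_{\rd}^{c_r} f_I(s)\,ds$.

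To close, I would use that the kinetic energy and the $V_h$ terms in \eqref{eq:Lyapunov} are non-negative, so $\Phi(\tc)$ dominates the inter-vehicle energy and hence $\Phi(\tc) \ge (k+1)\int_{\rd}^{c_r} f_I(s)\,ds$. The energy computation following \eqref{eq:Lyapunov} gives $\dot\Phi = -\sum_i \av\|v_i\|^2 \le 0$, so $\Phi(\tc) \le \Phi(0)$; combining these with the hypothesis yields $(k+1)\int_{\rd}^{c_r} f_I(s)\,ds \le \Phi(0) < (k+1)\int_{\rd}^{c_r} f_I(s)\,ds$, a contradiction. Hence at most $k$ pairs are simultaneously unsafe at any $t \ge 0$, and $k=0$ recovers Proposition~\ref{prop:collision-avoid}.

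The only delicate point — and the step I would write out most carefully — is the combinatorial bookkeeping in the second paragraph: one must verify that the factor $\tfrac12$ genuinely converts the sum over ordered index pairs into a sum over unordered vehicle pairs, so that $k+1$ distinct unsafe pairs contribute $(k+1)$ copies of the threshold rather than $2(k+1)$. It is also worth flagging explicitly that this energy argument controls only the count of pairs unsafe at a common instant $\tc$; it does not by itself bound the total number of pairs that might become unsafe at different times, so the statement should be read in the simultaneous sense.
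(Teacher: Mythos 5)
Your proposal is correct and follows essentially the same argument as the paper: assume $k+1$ simultaneously unsafe pairs at some time $\tc$, lower-bound the inter-vehicle potential energy by $(k+1)\int_{\rd}^{c_r} f_I(s)\,ds$ using non-negativity and monotonicity of $V_I$, and contradict $\Phi(\tc)\leq\Phi(0)$. The paper's proof is terser but identical in substance; your extra care with the factor of $\tfrac12$ and the ``simultaneous'' reading of the statement are both sound.
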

\begin{proof}  Assume by contradiction that $k+1$ pairs of vehicles are unsafe at time $\tc$, i.e., their relative distances are less than or equal to $c_{r}$ at $\tc$. Then, on one hand, following the argument in Proposition \ref{prop:collision-avoid}, we have:
\[
\frac{1}{2}{\sum_{i=1}^{N}}{\sum_{j\neq i}^{N}}V_{I}\left(\pos_{ij}(\tc)\right) \geq (k+1) \int_{\rd}^{c_{r}} f_{I}(s) \, ds,
\]
where we use the fact that $V_{I}(\pos_{ij})$ is non-negative and non-increasing. 

On the other hand,
\[
\frac{1}{2}{\sum_{i=1}^{N}}{\sum_{j\neq i}^{N}}V_{I}\left(\pos_{ij}(\tc)\right) \leq \Phi(\tc) \leq \Phi(0),
\]
leading to a contradiction.
\end{proof}

Note that the two last results assume that the control law \eqref{eq:controlPotential} is applied as it is, that is, it does not take into account the input force constrains in \eqref{eq:problemDynamic}. The following HJ reachability analysis deals with the input force bounds to guarantee pairwise safety.

\medskip
\textbf{Collision avoidance via Hamilton-Jacobi theory.} To guarantee pairwise collision avoidance for general configurations, 
we design a safety controller based on HJ reachability analysis.   

Consider the dynamics between two vehicles $\Qi$, $\Qj$ defined in terms of their relative states
\begin{align*}
\pxr &=\pix-\pjx, \qquad \vxr =\vix-\vjx, \\
\pyr &=\piy-\pjy, \qquad \vxr =\vix-\vjx,
\end{align*}
\noindent where the vehicle $\Qi$ is the evader, located at the origin, and $\Qj$ is the pursuer, the latter being considered as the model disturbance. The relative dynamical system can be written as:
\begin{align}
\label{eq:relativeDynSys}
\pdxr &= \vxr, \qquad \vdxr = \uix-\ujx, \nonumber\\
\pdyr &= \vyr, \qquad \vdyr = \uiy-\ujy,
\end{align}
\noindent with $\left\Vert u_{i}\right\Vert$,$\left\Vert u_{j}\right\Vert \leq u_{max}$, where $u_i=\left(\uix,\uiy\right)$ and $u_j=\left(\ujx,\ujy\right)$ are the control inputs of the agents $\Qi$ and $\Qj$, respectively. From the perspective of agent $\Qi$, the control inputs of $\Qj$ are treated as worst-case disturbance.

System \eqref{eq:relativeDynSys} can be put in the general form \eqref{eq:syst-gen} from Section \ref{sect:HJ-reach}, with $z=(\pxr,\pyr,\vxr,\vyr)$, $u=\left(u_x,u_y\right):=\left(\uix,\uiy\right)$, $d=\left(d_x,d_y\right):=\left(\ujx,\ujy\right)$, and $f\left(z,u,d\right)$ being the right-hand-side of \eqref{eq:relativeDynSys}.

According to \eqref{eq:safety}, the unsafe states are described by the target set $\domDanger=\left\{ z:\pxr^{2}+\pyr^2\leq c_{r}^{2}\right\}$. For now, the obstacle set $\domSafe$ is the empty set as it is not needed until Section \ref{sec:fiked_wing_coverage_control}. Consider $\psi\left(z\right)$ as the time it takes for the solution of the dynamical system \eqref{eq:relativeDynSys}, with starting point $z$ in $\mathcal{R^{*}}\setminus\domDanger$, to reach $\domDanger$ when the disturbance and control inputs are optimal. As the two vehicles have the same capabilities we make the educated guess that the optimal non-anticipative strategy for the pursuer is to copy the evader accelerations, having so a zero relative acceleration. This implies that the relative velocity $v_r$ will remain constant through time.

If $p_r$ and $v_r$ are such that a collision can occur, there exist a collision point $c_p$, see Figure \ref{fig:geometric_argument}. This will be one of the intersection points of the line crossing through $p_r$ with direction parallel to $v_r$ and the circle of radius $c_r$ centered at the origin. To get the collision time we replace the coordinates of $c_p=\left(p_{r,x}+\psi\left(z\right)\vel_{r,x},\,p_{r,y}+\psi\left(z\right)\vel_{r,y}\right)$ into the canonical equation of the circle. Using this geometric argument one can show that this time is the minimum of the two solutions of the quadratic equation:
\begin{multline}\label{eq:quadraticTTR}
\left(\vxr^{2}+\vyr^{2}\right)\ttr^{2}\left(z\right)+2\left(\pxr\vxr+\pyr\vyr\right)\ttr\left(z\right)\\
+\left(\pxr^{2}+\pyr^{2}-c_{r}^{2}\right)=0.
\end{multline}
\begin{figure}
    \centering
    \includegraphics[width=0.65\columnwidth]{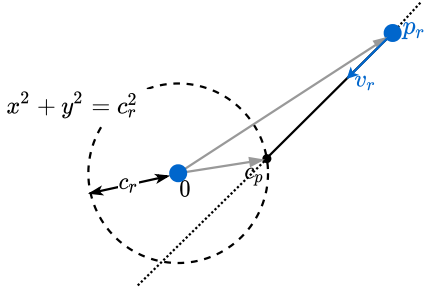}
    \caption{Geometric illustration for solving HJI PDE \eqref{eq:HJPDE}. Here $c_p$ represents the collision point.}
    \label{fig:geometric_argument}
\end{figure}
It was shown in \cite{chacon2019safe} that the collision time computed as above satisfies indeed the HJI PDE \eqref{eq:HJPDE}. The formal result is the following.
\begin{prop}\label{prop:analitical_ttr}
Consider the function $\ttr\left(z\right)$ defined as
\[
\ttr\left(z\right) := \frac{-\left(\pxr\vxr+\pyr\vyr\right) - \sqrt{\Delta}}{\vxr^{2}+\vyr^{2}} \qquad \text{ in } \mathcal{R^{*}}\setminus\domDanger,
\]
where 
\[
\Delta = \left(\pxr\vxr+\pyr\vyr\right)^{2}-\left(\vxr^{2}+\vyr^{2}\right)\left(\pxr^{2}+\pyr^{2}-c_{r}^{2}\right).
\]
Also define $\ttr\left(z\right)$ to be $0$ on $\domDanger$. Then $\ttr\left(z\right)$ satisfies equation \eqref{eq:HJPDE}.
\end{prop}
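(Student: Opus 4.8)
The plan is to verify directly that the given $\ttr$ satisfies the stationary HJI PDE \eqref{eq:HJPDE}. Since the obstacle set $\domSafe$ is empty here, there are two things to check: that the PDE holds in the interior $\mathcal{R}^{*}\setminus\domDanger$, and that the boundary condition $\ttr=0$ holds on $\domDanger$ (the latter being partly definitional). Writing $z=(\pxr,\pyr,\vxr,\vyr)$ and recalling $f=(\vxr,\vyr,u_x-d_x,u_y-d_y)$ from \eqref{eq:relativeDynSys}, I would first split the inner product as
\[
\nabla\ttr\cdot f = \underbrace{\tfrac{\partial\ttr}{\partial\pxr}\vxr + \tfrac{\partial\ttr}{\partial\pyr}\vyr}_{\text{transport part}} + \nabla_{v}\ttr\cdot(u-d),
\]
where $\nabla_{v}\ttr=(\partial\ttr/\partial\vxr,\,\partial\ttr/\partial\vyr)$; only the last term depends on $u$ and $d$.

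Next I would carry out the min-max. The control-dependent term $-\nabla_{v}\ttr\cdot(u-d)$ is linear in each of $u$ and $d$ separately, and the admissible sets are the closed balls of radius $u_{max}$, which are convex and compact. Hence a saddle point exists, and
\[
\min_{\|u\|\le u_{max}}\,\max_{\|d\|\le u_{max}}\bigl[-\nabla_{v}\ttr\cdot(u-d)\bigr]=0,
\]
attained at $u^{*}=d^{*}=u_{max}\,\nabla_{v}\ttr/\|\nabla_{v}\ttr\|$. This recovers the physical guess made above \eqref{eq:quadraticTTR}: the optimal pursuer copies the evader, so the relative acceleration vanishes. Consequently the PDE \eqref{eq:HJPDE} collapses to the first-order transport equation $-\bigl(\tfrac{\partial\ttr}{\partial\pxr}\vxr+\tfrac{\partial\ttr}{\partial\pyr}\vyr\bigr)-1=0$, so it remains to show $\tfrac{\partial\ttr}{\partial\pxr}\vxr+\tfrac{\partial\ttr}{\partial\pyr}\vyr=-1$.

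Rather than differentiate the radical in the definition of $\ttr$, I would exploit that $\ttr$ is, by construction, the smaller root of the quadratic \eqref{eq:quadraticTTR}. Setting $Q(z,\ttr)=(\vxr^{2}+\vyr^{2})\ttr^{2}+2(\pxr\vxr+\pyr\vyr)\ttr+(\pxr^{2}+\pyr^{2}-c_{r}^{2})$, implicit differentiation gives $\partial\ttr/\partial\pxr=-Q_{\pxr}/Q_{\ttr}$ and likewise for $\pyr$, where $Q_{\pxr}=2(\vxr\ttr+\pxr)$, $Q_{\pyr}=2(\vyr\ttr+\pyr)$, and $Q_{\ttr}=2(\vxr^{2}+\vyr^{2})\ttr+2(\pxr\vxr+\pyr\vyr)$. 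The key algebraic observation is that $\vxr Q_{\pxr}+\vyr Q_{\pyr}$ collapses to exactly $Q_{\ttr}$, so the transport combination equals $-Q_{\ttr}/Q_{\ttr}=-1$, as required. I would also note that $Q_{\ttr}=-2\sqrt{\Delta}\neq0$ at the chosen root, which justifies applying the implicit function theorem throughout $\mathcal{R}^{*}\setminus\domDanger$.

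The step I expect to require the most care is not the differentiation but the min-max/saddle argument together with the root selection and boundary matching. I would confirm that the minus sign on $\sqrt{\Delta}$ selects the earliest intersection of the relative trajectory with $\domDanger$, hence the genuine time-to-reach; verify that $\mathcal{R}^{*}$ excludes the degenerate cases $\vxr^{2}+\vyr^{2}=0$ and $\Delta<0$ (no future collision); and check that as $z$ approaches $\partial\domDanger$, where $\pxr^{2}+\pyr^{2}=c_{r}^{2}$ and the vehicles are approaching ($\pxr\vxr+\pyr\vyr\le0$), the formula yields $\ttr\to0$, matching the prescribed boundary value and giving the continuity needed for $\ttr$ to be the viscosity solution.
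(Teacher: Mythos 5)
Your proposal is correct and follows essentially the same route as the paper's proof (which is deferred to \cite[Prop.~5]{chacon2019safe} but summarized there as an explicit computation of the $\arg\min\max$, yielding $u^{*}=d^{*}$ so that the control-dependent terms cancel, followed by implicit differentiation of the quadratic \eqref{eq:quadraticTTR} to verify the remaining transport identity). Your observation that $\vxr Q_{\pxr}+\vyr Q_{\pyr}=Q_{\ttr}$, and that $Q_{\ttr}=-2\sqrt{\Delta}\neq 0$ at the selected root, is a clean way to package the same computation.
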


\begin{proof} We refer to \cite[Prop. 5]{chacon2019safe} for the proof of this result. We only note that the proof there is based on an explicit calculation of the $\arg \min \max$ of the expression in equation \eqref{eq:HJPDE}, which was found to be:
\begin{equation}\label{eq:optimalControlForm1}
    u^{*}=d^{*}=u_{max}\frac{\left(\frac{\partial\ttr\left(z\right)}{\partial \vxr},\frac{\partial\ttr\left(z\right)}{\partial \vyr}\right)}
    {\left\Vert\frac{\partial\ttr\left(z\right)}{\partial \vxr},\frac{\partial\ttr\left(z\right)}{\partial \vyr}\right\Vert}.
\end{equation}
\end{proof}

By implicit differentiation of \eqref{eq:quadraticTTR} we find:
\begin{subequations}
\small
\begin{align}
    \frac{\partial\ttr}{\partial \vxr} &=\frac{-\vxr\ttr^{2}\left(z\right)-\pxr\ttr\left(z\right) }{\left(\vxr^{2}+\vyr^{2}\right)\ttr\left(z\right)+\left(\pxr\vxr+\pyr\vyr\right)}\label{eq:dpsi_dvx} \\
    \frac{\partial\ttr}{\partial \vyr} &=\frac{-\vyr\ttr^{2}\left(z\right)-\pyr\ttr\left(z\right) }{\left(\vxr^{2}+\vyr^{2}\right)\ttr\left(z\right)+\left(\pxr\vxr+\pyr\vyr\right)},\label{eq:dpsi_dvy}
\end{align}
\normalsize
\end{subequations}
and hence, from \eqref{eq:optimalControlForm1} we can derive a closed expression for the optimal avoidance controller. Note that to use this pairwise avoidance strategy we require each vehicle to know its speed and position relative to the other vehicles.

The static HJI PDE \eqref{eq:HJPDE} is typically approximated by finite difference methods such as the one presented in \cite{Yang2013}. Our approach, using an analytic solution, leads to two main advantages.  First, we do not have to deal with large amounts of memory and long computational times involved in refinements of the numerical resolution. Second, while numerical methods can only compute the solution in a bounded domain, an analytical solution allows us to have the best possible resolution in unbounded domains. This allows us to predict and react to possible collisions arbitrarily far into the future.


\subsection{Overall control logic}
\label{subsect:logic}
In this subsection we describe how to switch between the two controllers presented above. 

We will consider that vehicle $\Qi$ is  in potential conflict with vehicle $\Qj$ if the time to collision $\ttr\left(z_{i}\right)$ (here $z_i$ denotes the relative current state of the two vehicles), is less than or equal to a specified time horizon $t_{safety}$. 
In such a case $\Qi$ must use the safety controller, otherwise, the coverage controller is used.

In the case that a vehicle detects more than one conflict, it will apply the control policy of the first conflict detected at that particular time. 
Algorithm \ref{alg:overallControlLogic} describes the overall control logic for a generic vehicle $\Qi$.
\begin{algorithm}
  \caption{Overall control logic for a generic vehicle $\Qi$.}\label{alg:overallControlLogic}
  \textbf{IN}: State $x_i$ of a vehicle $\Qi$; states $\{x_j\}_{j\neq i}$ of other vehicles $\left\{ Q_j\right\}_{j\neq i}$; a domain $\Omega$ to cover.
  \\[2pt] \textbf{PARAMETER:} A time horizon for safety check $t_{safety}$;
  \newline\textbf{OUT}: A control $u_i$ for $\Qi$.
  \label{alg:the_alg}
  \begin{algorithmic}[1]
    \STATE {$safe \leftarrow \text{True}$;}
    \FOR {$j\neq i$} 
        \STATE {$z \leftarrow x_{i}-x_{j}$;}
        \IF {$\ttr\left(z\right) \leq t_{safety}$}
            \STATE {$safe \leftarrow \text{False}$;}
            \STATE {$U_{ix}=-\frac{\vxr\ttr^{2}\left(z\right)+\pxr\ttr\left(z\right)}{\left(\vxr^{2}+\vyr^{2}\right)\ttr\left(z\right)+\left(\pxr\vxr+\pyr\vyr\right)}$;}\label{alg:ln:avoidUx}
            \STATE {$U_{iy}=-\frac{\vyr\ttr^{2}\left(z\right)+\pyr\ttr\left(z\right)}{\left(\vxr^{2}+\vyr^{2}\right)\ttr\left(z\right)+\left(\pxr\vxr+\pyr\vyr\right)}$;}\label{alg:ln:avoidUy}
            \STATE {\textbf{break for;}}
        \ENDIF
    \ENDFOR
    \IF {safe}
        \STATE {$\left(U_{ix},U_{iy}\right)\text{=}  \linebreak\text{-}\sum_{j\neq i}^{N}f_{I}\left(\left\Vert p_{ij}\right\Vert \right)\frac{p_{ij}}{\left\Vert p_{ij}\right\Vert }\text{-}f_{h}\left(\left[\left[h_{i}\right] \right]\right)\frac{h_{i}}{\left[\left[h_{i}\right] \right]}- a_v v_i $;}\label{alg:ln:coverU}
    \ENDIF
    \STATE {$u_{i}=u_{max}\frac{\left(U_{ix},U_{iy}\right)}{\left\Vert\left(U_{ix},U_{iy}\right)\right\Vert}$;}\label{alg:ln:normalizing}
  \end{algorithmic}
  \textbf{RETURN}: $u_{i}$
\end{algorithm}

In Algorithm \ref{alg:overallControlLogic}, lines \ref{alg:ln:avoidUx} and \ref{alg:ln:avoidUy} can be obtained from equations \eqref{eq:optimalControlForm1}, \eqref{eq:dpsi_dvx} and \eqref{eq:dpsi_dvy} (also note the normalization step in line \ref{alg:ln:normalizing}), while line \ref{alg:ln:coverU} comes from the explicit coverage control \eqref{eq:controlExplicit}.

\begin{rmk}\label{rem:threshold-remark}
    By thresholding the force (Algorithm \ref{alg:overallControlLogic} line \ref{alg:ln:normalizing}), the theoretical results may not necessary hold anymore. However, when close to the desired operation point, the coverage forces are small enough to not need to be thresholded, in which case the theoretical results are indeed valid.
\end{rmk}


\subsection{Numerical simulations}
\label{subsect:numerics}

\textbf{Square domain.}
We consider the coverage problem for a square domain. We present two strategies: both use the coverage controller described in Section \ref{subsect:coverage-controller}, but only one strategy switches to the safety controller when necessary, according to Section \ref{subsect:logic}. In both cases 16 vehicles start from a horizontal line setup outside of the target square domain; see the starting locations of the trajectories in Figures \ref{fig:squareNoAvoid5} and \ref{fig:squareAvoid5}.

The simulations from the left and right columns in Figure \ref{fig:square} do not include, and respectively include, the safety controller. The large coloured dots represent the position of the vehicles, the dashed tails are past trajectories (shown for the previous 5 seconds), and the arrows indicate the movement direction. Note that we do not show the arrows when the velocities are too small.

At $t=0\text{ (s)}$ the only contributions come from the vehicle-domain forces, which pull the mobile agents toward the interior of the square; see initial trajectory tails in Figures \ref{fig:squareNoAvoid5} and \ref{fig:squareAvoid5}. At $t=5\text{ (s)}$ the vehicles without safety controller are more prone to collisions due to the symmetry of the initial condition. The safety controller breaks down the symmetry and enables the vehicles to enter the crowded area without collisions.

The presence of overshoots at later times (Figures \ref{fig:squareNoAvoid10} and  \ref{fig:squareAvoid10}) is expected, being due to the piece-wise linear vehicle-domain forces (i.e.,  spring-like forces). However, Figure \ref{fig:squareAvoid10} indicates that in addition to preventing collisions, the use of the safety controller also reduces the overshoots. After $t=50\text{ (s)}$ both control strategies reach a steady state which is an $\rd$-cover for the square . We note that the system with collision avoidance reaches the equilibrium faster; compare Figures \ref{fig:squareNoAvoid50} and \ref{fig:squareAvoid50}.
\begin{figure}
    \begin{subfigure}[b]{0.51\columnwidth}
    \includegraphics[width=\columnwidth]{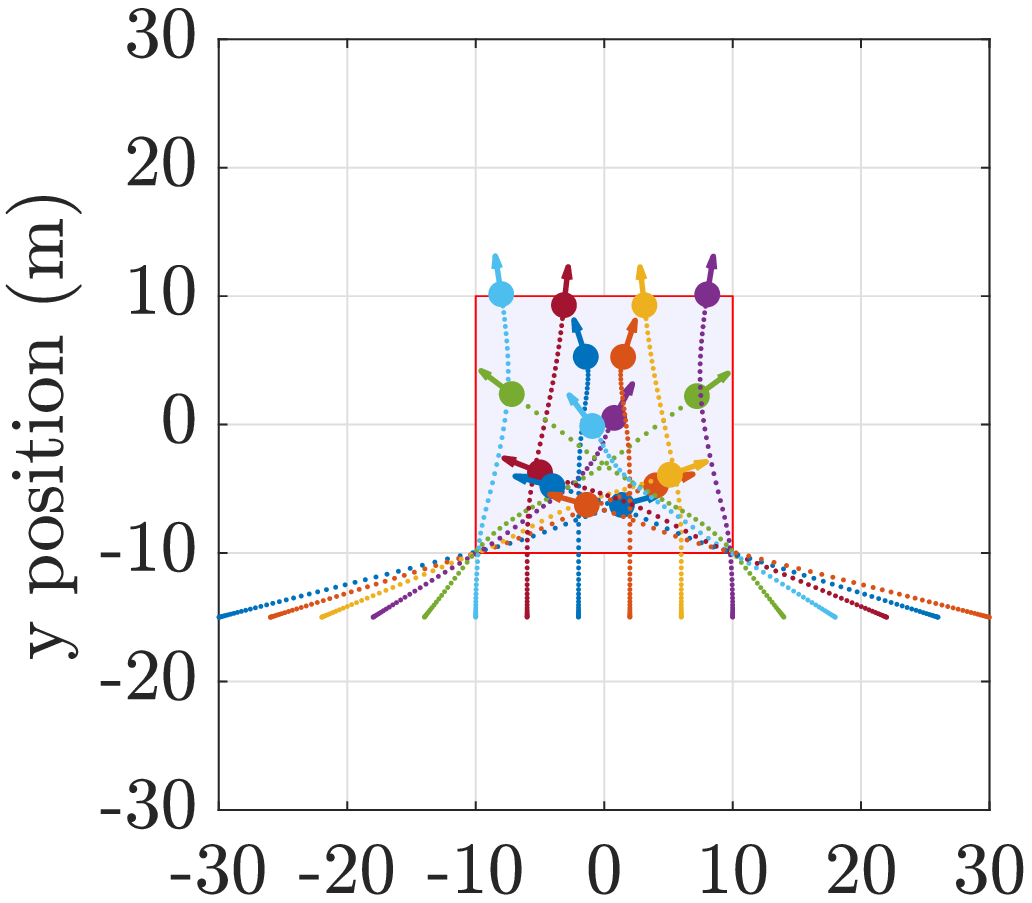}
    \vspace{-2.5em}\caption{$t=5(s)$}
    \label{fig:squareNoAvoid5}
    \end{subfigure}
    \hspace{-1.5em}
    \begin{subfigure}[b]{0.51\columnwidth}
    \includegraphics[width=\columnwidth]{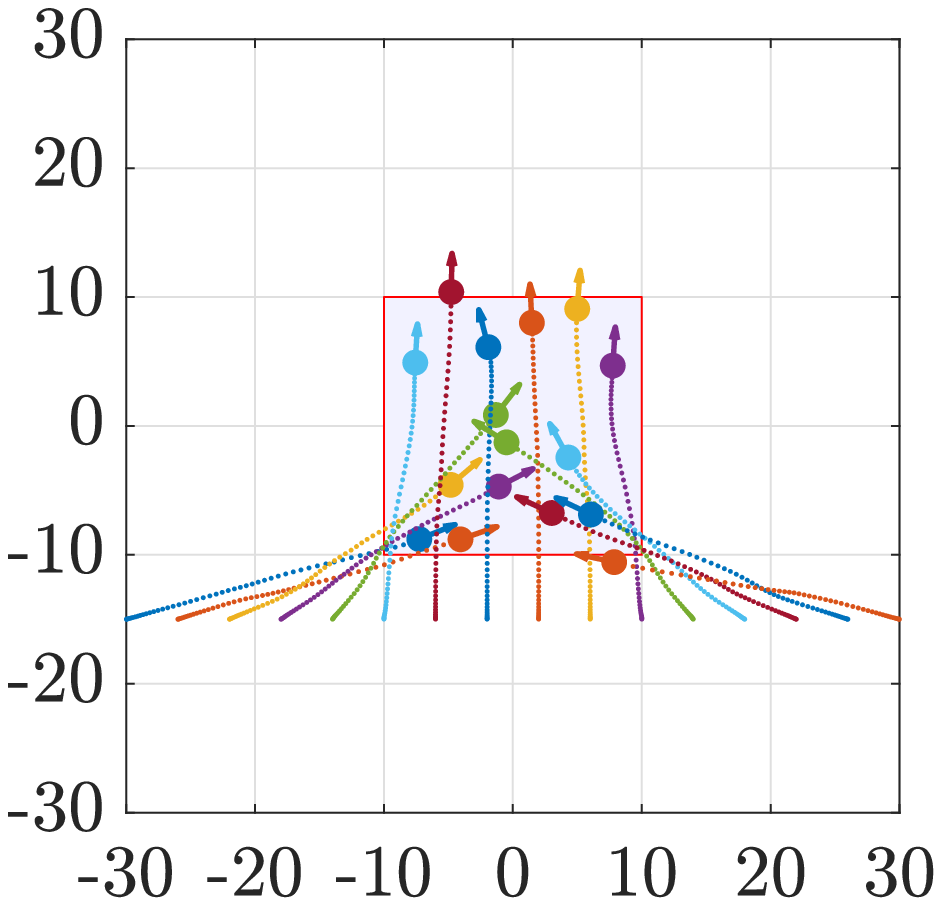}
    \vspace{-2.5em}\caption{$t=5(s)$}
    \label{fig:squareAvoid5}
    \end{subfigure}
    \begin{subfigure}[b]{0.51\columnwidth}
    \includegraphics[width=\columnwidth]{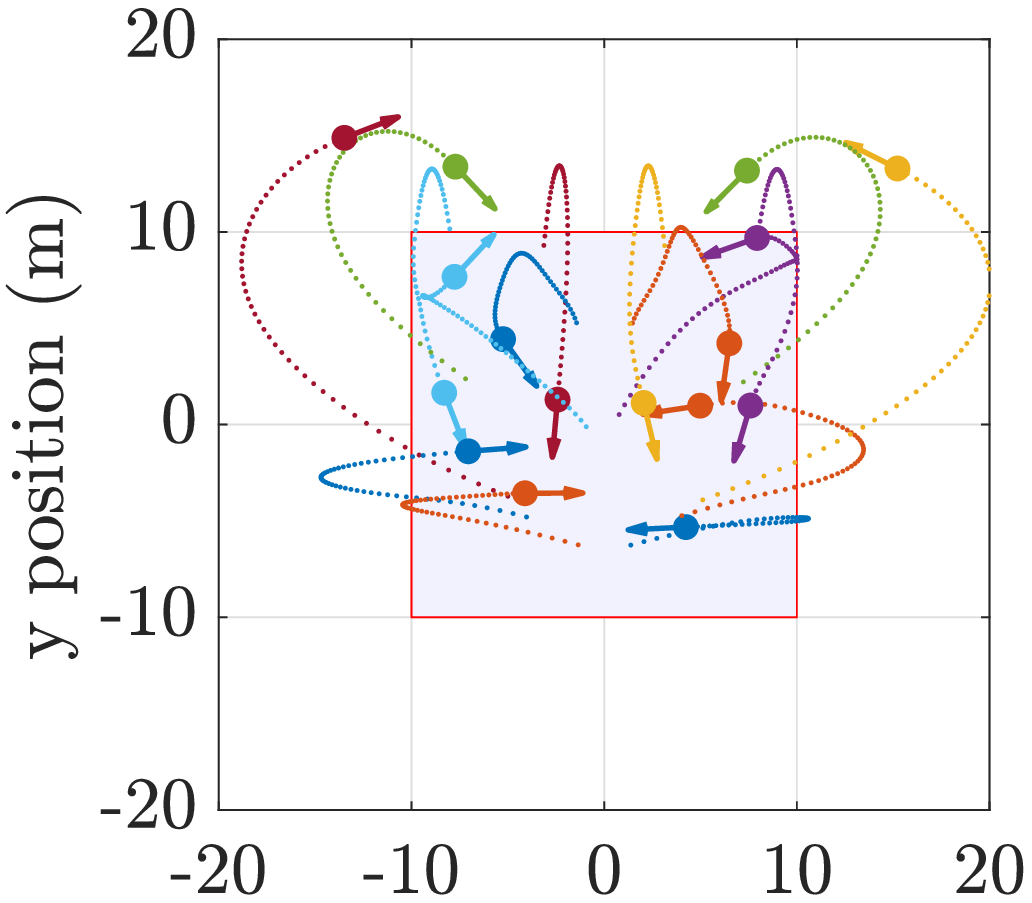}
    \vspace{-2.5em}\caption{$t=10(s)$}
    \label{fig:squareNoAvoid10}
    \end{subfigure}
    \hspace{-1.5em}
    \begin{subfigure}[b]{0.51\columnwidth}
    \includegraphics[width=\columnwidth]{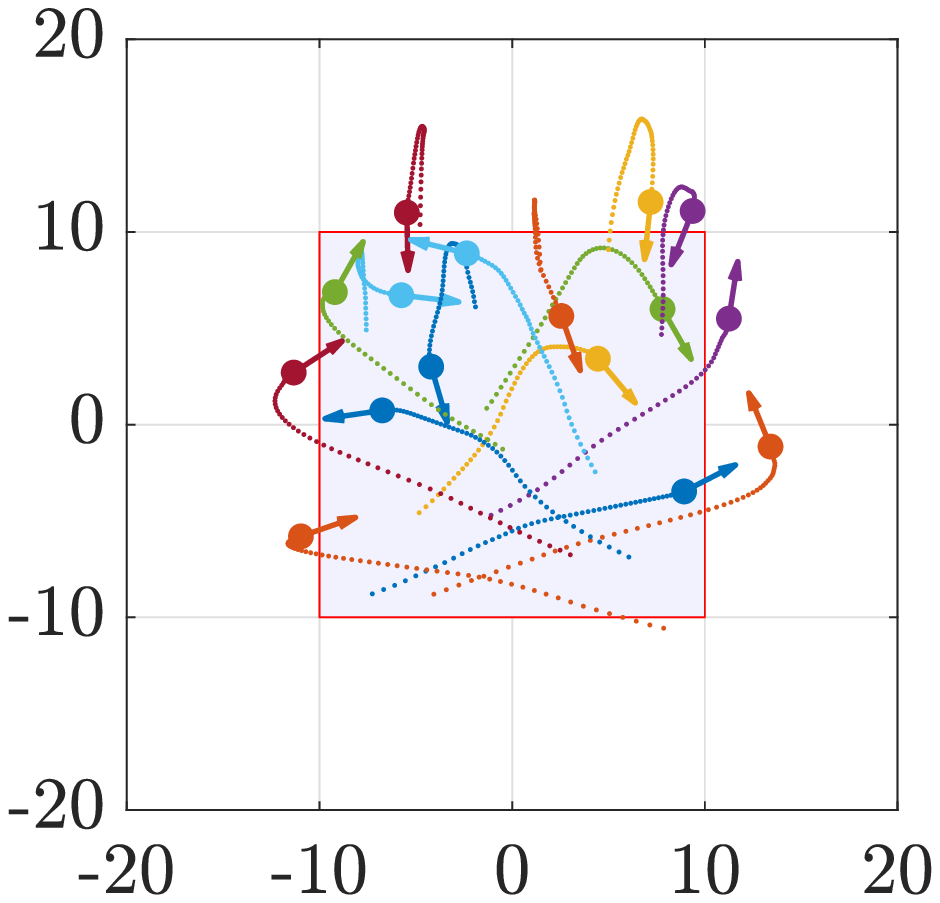}
    \vspace{-2.5em}\caption{$t=10(s)$}
    \label{fig:squareAvoid10}
    \end{subfigure}
    \begin{subfigure}[b]{0.51\columnwidth}
    \includegraphics[width=\columnwidth]{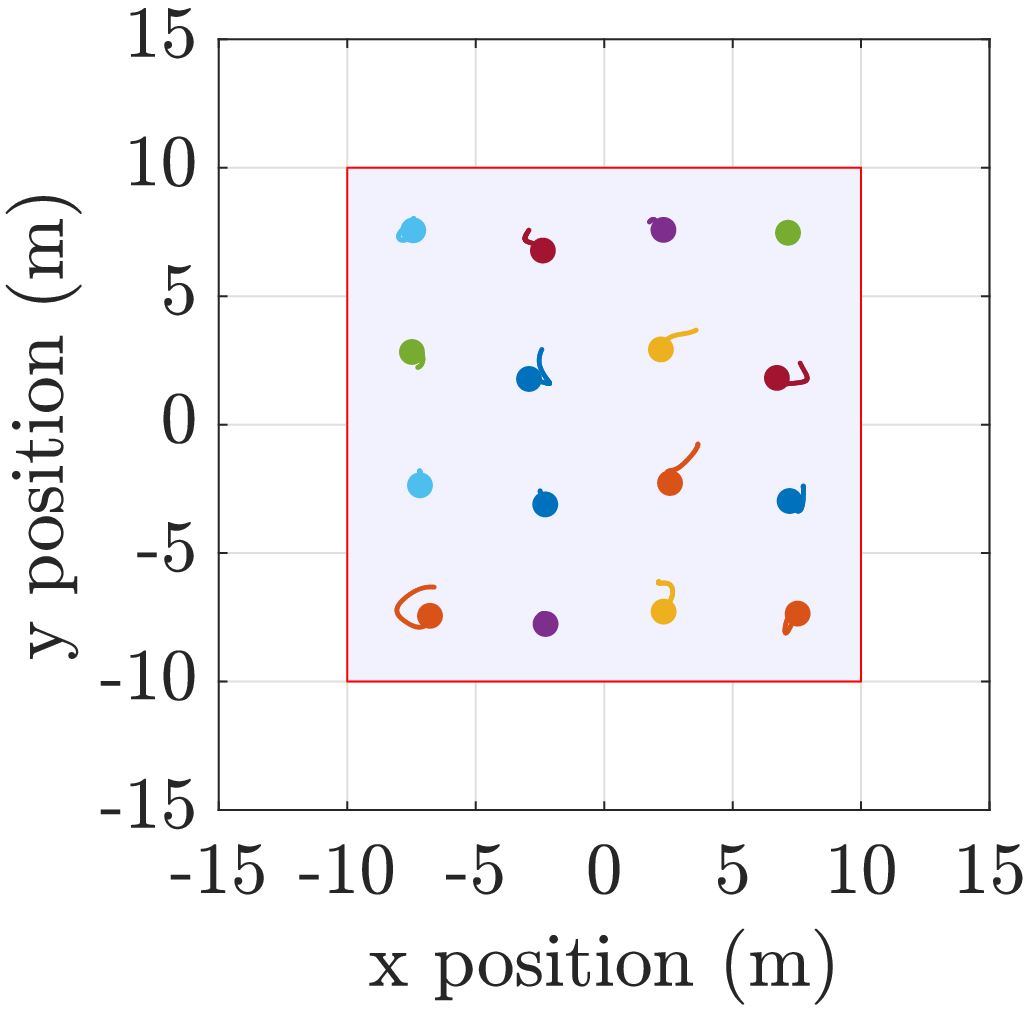}
    \vspace{-1.5em}\caption{$t=50(s)$}
    \label{fig:squareNoAvoid50}
    \end{subfigure}
    \hspace{-1.5em}
    \begin{subfigure}[b]{0.51\columnwidth}
    \includegraphics[width=\columnwidth]{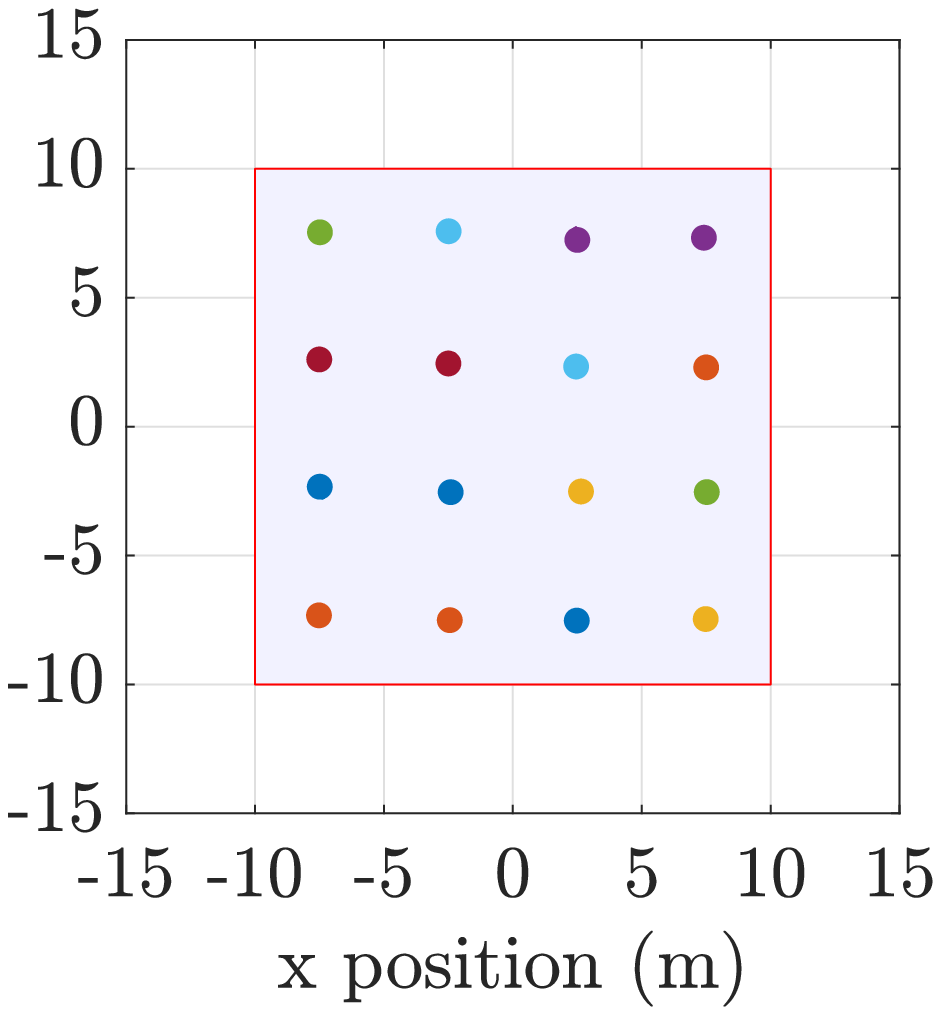}
    \vspace{-1.5em}\caption{$t=50(s)$}
    \label{fig:squareAvoid50}
    \end{subfigure}
    \caption{Square domain coverage at different time instants, without (left) and with (right) safety controller, when $N=16$, $c_{r}=2\,\text{(m)}$, $v_{max}=10\,\text{(m/s)}$, $u_{max}=3 \,\text{(m/s$^{2})$}$, $t_{safety}=5\,\text{(s)}$, side length $l=20\,\text{(m)}$, domain area $A=l^{2}=400\,\text{(m$^2$)}$ and $r_{d}=\sqrt{\frac{A}{N}}=5\,\text{(m)}$. Vehicles start in a horizontal line configuration and reach a square grid steady state which is an $r_d$-cover of the domain (see Definition \ref{defn:cover}). The use of the safety controller reduces both the collision count and the overshoot, and helps reach the steady state faster.} 
    \label{fig:square}
\end{figure}

A \textit{collision event} starts when the distance between two vehicles is less than or equal to the collision radius $c_{r}$, and ends when the distance becomes greater than $c_{r}$. The collision event count for the square domain coverage with and without the safety controller, for various number of vehicles, is shown in Table \ref{tb:collisionsSquare}. We point out that in the absence of the safety controller, the collision count increases significantly with the number of vehicles, while it remains zero or very low when the safety controller is used.

\begin{table}[H]
    \captionsetup{width=\linewidth}
    \caption{Square coverage collision count.}
    \label{tb:collisionsSquare}
    \centering
    \begin{tabular}{ccc}
    number of vehicles & without avoidance & with avoidance\tabularnewline
    \hline
    9 & 8 & 0\tabularnewline
    16 & 51 & 0\tabularnewline
    25 & 146 & 2\tabularnewline
    \hline
    \end{tabular}
\end{table}

\vspace{-1em}

Safety issues may also arise when a vehicle needs to avoid two or more vehicles at the same time. Our safety controller does not guarantee collision avoidance in such cases. Guaranteed collision avoidance for more than two vehicles is an unsolved problem, as explored for example in \cite{Chen2016}.
\section{Coverage of a moving domain}
\label{sect:moving}


\subsection{Problem formulation}
\label{subsect:problem-align}
We consider now the coverage problem when the target domain moves with prescribed constant velocity $\vd$. Specifically, let $\Omega\subseteq\mathbb{R}^{2}$ be a compact domain and define $\Omegat=\Omega+ t \vd$, representing the moving domain at time $t$. Alternatively, if one sets an arbitrary marker point $\pd$ (e.g., the centre of mass) in $\Omega$, its motion is given by $\pd(t) = \pd + t \vd$. 

We are interested in covering the domain $\Omegat$ (see Definitions \ref{defn:subcover} and \ref{defn:cover}), which changes through time. For this reason we want the vehicles to reach asymptotically, as $t \to \infty$, the velocity of the target domain, while maintaining a cohesive group through dynamics. This is expressed by the concept of flocking \cite{Olfati2006,CuckerSmale2007,HaTadmor2008}.

Consider a group of $N$ vehicles, each of them governed by the double integrator dynamics, i.e.,
\begin{equation}
\label{eq:dynamic}
\dot{\posi}=\veli,\quad \dot{\veli}=\ui, \qquad i =1,\dots,N,
\end{equation}
with control $\ui$ to be specified later. We adapt below the definition of flocking from \cite{HaLiu2009} to the problem of moving target. 

\begin{defn}[Flocking with a moving target]  A group of vehicles has a time-asymptotic flocking with a target domain moving with constant velocity $\vd$ if its positions and velocities $\left\{ \posi,\veli\right\},\,i=1,\cdots,N$ satisfy the following two conditions:
\begin{enumerate}
\item The relative positions with respect to the marker point in the domain are uniformly bounded in time (forming a group):
\[
\underset{0\leq t<\infty}{\sup}\sum_{i=1}^{N}\left\Vert \posi\left(t\right)-\pd\left(t\right)\right\Vert ^{2}<\infty.
\]
\item The relative velocities with respect to the moving domain go to zero asymptotically in time (velocity alignment):
\[
\underset{t\rightarrow+\infty}{\lim}\sum_{i=1}^{N}\left\Vert \veli\left(t\right)-\vd \right\Vert ^{2}=0.
\]
\end{enumerate}
\label{defn:flocking}
\end{defn}

In this case, our safe domain coverage problem of interest is the following:
\medskip

\textbf{Safe-domain-coverage by vehicles with double integrator dynamics for moving domains:} \textit{Consider a compact domain $\Omega_t$ that moves with constant velocity $\vd$ in the plane, and $N$ vehicles with dynamics described by \eqref{eq:dynamic}, starting from safe initial conditions. Find the maximal $r>0$ and a control policy that leads to an $r$-cover for $\Omega_t$ that flocks with the moving target, while satisfying the safety condition \eqref{eq:safety} at any time.}
\medskip


\subsection{Coverage controller with alignment}
\label{subsect:control-align}

Using the same coverage controller \eqref{eq:controlExplicit} for this problem makes the vehicles lag behind the domain, reacting only when they are outside of it. This suggests that the vehicles require a mechanism to align their velocities with that of the target domain, as well as with the velocities of their neighbors.

Inspired by the Cucker-Smale model with rooted leadership (see \cite{ha2014flocking,li2010cucker}), we propose a control force with inclusion of inter-vehicle and vehicle-domain alignment forces, given by:
\begin{equation}
\label{eq:controller-align}
\begin{aligned}
u_{i}&=-\sum_{j\neq i}^{N}f_{I}\left(\left\Vert \pos_{ij}\right\Vert \right)\frac{\pos_{ij}}{\left\Vert \pos_{ij}\right\Vert}
-f_{h}\left(\left[\left[h_{i}\right]\right]\right)\frac{h_{i}}{\left[\left[h_{i}\right]\right]}  \\[-5pt]
& \quad-\underset{\text{inter-vehicle}}{\underbrace{\sum_{j\neq i}^{N}\csf\left(\left\Vert \pos_{ij}\right\Vert \right)\vel_{ij}}} -\underset{\text{vehicle-domain}}{\underbrace{\av\left(\vel_{i}-\vd\right)}}.
\end{aligned}
\end{equation}
Here, 
$\pos_{ij}:=\pos_{i}-\pos_{j}$, $\vel_{ij}:=\vel_{i}-\vel_{j}$,
and  $P_{\partial\Omegat}\left(\pos_{i}\right)$ denotes the projection of $\pos_{i}$ on $\partial\Omegat$. Also, $h_{i}:=\pos_{i}-P_{\partial\Omegat}\left(\pos_{i}\right)$, and $\left[\left[h_{i}\right]\right]$ denotes the signed distance
of $\pos_{i}$ from $\partial\Omegat$.
In addition, $\csf$ is a non-negative communication function and $\av$ is a positive constant.

The inter-vehicle alignment force, which controls the alignment of vehicle $i$'s velocity with the velocities of the rest of the vehicles, depends on the relative distance $\|p_{ij}\|$ between the interacting vehicles. For a communication function $\csf$ that is non-increasing (this is a typical assumption in the literature \cite{HaTadmor2008,HaLiu2009}), vehicles align stronger with their neighbours, and less with vehicles that are further apart. The results presented in this paper correspond to a communication function in the form: 
\[
\csf\left(\left\Vert \pos_{ij}\right\Vert \right)=C_{al} e^{-\frac{\| \pos_{ij}\| }{l_{al}}},
\]
where $C_{al}$ and $l_{al}$ are constants associated to the alignment strength and alignment range, respectively.   This function was considered in \cite{FetecauGuo2012} in the context of honeybee swarms.

The vehicle-domain alignment force drives the velocity of the vehicles to the domain's velocity $\vd$. In this regard, the braking force in the static domain model (see \eqref{eq:controlExplicit}) can also be interpreted as an alignment force that brings the vehicles to a stop. Also, while for simplicity we have taken a common constant $\av$ for all vehicles, the considerations that follow apply to the more general alignment forces 
${a}_{v,i}(\vel_i-v_d)$, with ${a}_{v,i}>0$.

By changing to relative coordinates with respect to the frame of the moving domain, one can recover the case of a stationary domain ($\vd =0$). Indeed, change variables to:
\begin{equation}
\label{eq:ch-var}
\posrel_{i}:=\pos_{i}-t \vd, \quad
\velrel_{i}:=\vel_{i}-\vd,
\end{equation}
and note that the inter-vehicle positions and velocities are invariant to this change of coordinates, i.e.,
\[
\posrel_{ij}  :=\posrel_{i}-\posrel_{j}=\pos_{ij}, \quad 
\velrel_{ij} :=\velrel_{i}-\velrel_{j}=\vel_{ij}.
\]
Also, by translation, the distance to the target domain satisfies 
\begin{align*}
h_{i}&=
\left(\pos_{i}-t \vd \right)-P_{\partial\Omegat-t \vd}\left(\pos_{i}-t \vd\right)\\
&=\posrel_{i}-P_{\partial\Omega}\left(\posrel_{i}\right).
\end{align*}
Hence, in the new variables, the signed distances $[[\hrel_i]]$, where
\begin{equation}
\label{eq:ht}
\hrel_{i} := \posrel_{i}-P_{\partial\Omega}\left(\posrel_{i}\right),
\end{equation}
are with respect to the initial (fixed) domain $\Omega$. 

The observations above allow us to rewrite the control (\ref{eq:controller-align}) in the new variables. We find that in the moving coordinate frame the dynamics of the $N$ vehicles is given by:
\[
\dot{\posrel}_{i} = \velrel_{i}, \quad \dot{\velrel}_{i} = \tilde{u}_i, \qquad i = 1, \dots, N,
\]
where
\begin{align*}
\tilde{u}_{i}&=-\sum_{j\neq i}^{N}f_{I}\left(\left\Vert \posrel_{ij}\right\Vert \right)\frac{\posrel_{ij}}{\left\Vert \posrel_{ij}\right\Vert } -f_{h}([[\hrel_{i}]])\frac{\hrel_{i}}{[[\hrel_{i}]]} \\[-5pt]
&\quad -\sum_{j\neq i}^{N}\csf\left(\left\Vert \posrel_{ij}\right\Vert \right)\velrel_{ij} -\av\velrel_{i}. 
\end{align*}
Note that this corresponds to the dynamics in the original variables for a stationary domain.



\subsection{Asymptotic behaviour}
\label{subsect:Lyapunov}
We first investigate the dynamics with control \eqref{eq:controller-align} for a stationary target ($\vd =0$), using the same interaction functions $f_I$ and $f_h$ from Section \ref{sect:static}, corresponding to potentials \eqref{eq:VI} and \eqref{eq:Vh}.
Consider the same candidate for a Lyapunov function, consisting in kinetic plus (artificial) potential energy:
\[
\Phi=\frac{1}{2}{\sum_{i=1}^{N}}\Bigl(\dot{p}_{i}\cdot\dot{p}_{i}+{\sum_{j\neq i}^{N}}V_{I}\left(p_{ij}\right)+2V_{h}\left(p_{i}\right)\Bigr).
\]
Note that each term in $\Phi$ is non-negative, and $\Phi$ reaches its absolute minimum value when the vehicles are totally stopped. 

The time derivative of $\Phi$ can be calculated as:
\begin{align}
\dot{\Phi} & ={\sum_{i=1}^{N}}\dot{p}_{i}\cdot\Bigl(u_{i}+{\sum_{j\neq i}^{N}}\nabla_{i}V_{I}\left(p_{ij}\right)+\nabla_{i}V_{h}\left(p_{i}\right)\Bigr) \nonumber \\[-7pt]
 & ={\sum_{i=1}^{N}}v_{i}\cdot \biggl(-{\sum_{j\neq i}^{N}} \csf(\|p_{ij}\|)(v_i-v_j) - \av v_i \biggr). \label{eq:phidot-1}
\end{align}

For the inter-vehicle alignment term, write
\small
\begin{multline*}
  {\sum_{i=1}^{N}}v_i \cdot {\sum_{j\neq i}^{N}} \csf(\|p_{ij}\|)(v_i-v_j) = \\ \frac{1}{2}  {\sum_{i=1}^{N}}v_i \cdot {\sum_{j\neq i}^{N}} \csf(\|p_{ij}\|)(v_i-v_j)
  + \frac{1}{2}  {\sum_{j=1}^{N}}v_j \cdot {\sum_{i\neq j}^{N}} \csf(\|p_{ji}\|)(v_j-v_i),
\end{multline*}
\normalsize
where in the second term in the right-hand-side we simply renamed $i \leftrightarrow j$ as indices of summation. From there, use that $\|p_{ij}\| = \|p_{ji}\|$ to get:
\[
  {\sum_{i=1}^{N}} v_i \cdot {\sum_{j\neq i}^{N}} \csf(\|p_{ij}\|)(v_i-v_j) =  \frac{1}{2}  {\sum_{i=1}^{N}} {\sum_{j\neq i}^{N}} \csf(\|p_{ij}\|) \|v_i-v_j\|^2.
\]
Hence, from \eqref{eq:phidot-1}, we find:
\begin{equation*}
\dot{\Phi}  =- \frac{1}{2}  {\sum_{i=1}^{N}} {\sum_{j\neq i}^{N}} \csf(\|p_{ij}\|) \|v_i-v_j\|^2 -\av \sum_{i=1}^{N} \|v_i\|^2.
\end{equation*}

In the case of a target domain moving with velocity $\vd$, one can change to relative coordinates \eqref{eq:ch-var} as explained in Section \ref{subsect:control-align}, and set:
\begin{equation}
\label{eq:Lyapunov-rel}
\Phi=\frac{1}{2}{\sum_{i=1}^{N}}\Bigl(\dot{\posrel}_{i}\cdot\dot{\posrel}_{i}+{\sum_{j\neq i}^{N}}V_{I}\left(\posrel_{ij}\right)+2V_{h}\left(\posrel_{i}\right)\Bigr).
\end{equation}
Then, by the calculations for the stationary target above,
\begin{align}
\dot{\Phi}  &=- \frac{1}{2}  {\sum_{i=1}^{N}} {\sum_{j\neq i}^{N}} \csf(\|\posrel_{ij}\|) \|\velrel_i-\velrel_j\|^2 - \av \sum_{i=1}^{N} \|\velrel_i\|^2 \nonumber \\[-7pt]
&= - \frac{1}{2}  {\sum_{i=1}^{N}} {\sum_{j\neq i}^{N}} \csf(\|p_{ij}\|) \|v_i-v_j\|^2 -\av \sum_{i=1}^{N} \|v_i - \vd\|^2.
 \label{eq:phidot-2}
\end{align}
Note that $\dot{\Phi}$ is negative semidefinite and equal to zero if and only if $\velrel_i=0$ (or equivalently $v_i=\vd$) for all $i$, i.e., when vehicles' velocities are aligned with the velocity of the domain. The construction of this Lyapunov function leads to the following flocking result.

\begin{thm}[Flocking with the moving target]
\label{thm:flocking}
Consider a target domain $\Omega_t$ that moves with constant velocity $\vd$, and a group of $N$ vehicles with smooth dynamics governed by \eqref{eq:dynamic}, with the control law given by \eqref{eq:controller-align}. Then, the group of agents has a time-asymptotic flocking with the moving target $\Omegat$. 
\end{thm}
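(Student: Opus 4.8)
The plan is to leverage the change to relative coordinates \eqref{eq:ch-var} carried out just before the statement, which recasts the moving-domain dynamics as the stationary-domain dynamics with Lyapunov function $\Phi$ from \eqref{eq:Lyapunov-rel} and dissipation rate \eqref{eq:phidot-2}. Since both terms on the right-hand side of \eqref{eq:phidot-2} are non-positive (here $\csf \geq 0$ and $\av>0$), $\Phi$ is non-increasing along trajectories, so $\Phi(t)\leq\Phi(0)$ for all $t\geq 0$. I would then verify the two defining conditions of flocking in Definition \ref{defn:flocking} separately: group cohesion from the mere boundedness of $\Phi$, and velocity alignment from LaSalle's invariance principle applied to $\Phi$.

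For condition (1), group cohesion, I would repeat the argument of Proposition \ref{prop:cohesive} in the relative frame. Discarding the non-negative kinetic and inter-vehicle potential contributions from $\Phi$ yields $\sum_{i=1}^N V_h(\posrel_i(t)) \leq \Phi(0)$, and the quadratic confining form of $V_h$ in \eqref{eq:Vh} then bounds the distance of each $\posrel_i(t)$ to $\Omega$ uniformly in $t$ by $\sqrt{2\Phi(0)/\ah}$. Together with compactness of $\Omega$, this bounds $\|\posrel_i(t)\|$ uniformly. Since $\posrel_i = \pos_i - t\vd$ and $\pd(t)=\pd + t\vd$, one has $\pos_i(t)-\pd(t) = \posrel_i(t)-\pd$, so $\sup_{t}\sum_{i=1}^N \|\pos_i(t)-\pd(t)\|^2 < \infty$, which is exactly condition (1).

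For condition (2), velocity alignment, I would invoke LaSalle's principle, for which boundedness of the \emph{full} relative state is required. The positions $\posrel_i$ are bounded by the previous step, and the velocities $\velrel_i$ are bounded because the kinetic energy satisfies $\tfrac12\sum_{i=1}^N\|\velrel_i\|^2 \leq \Phi(t)\leq\Phi(0)$. Hence trajectories remain in a compact positively invariant set, and LaSalle guarantees convergence to the largest invariant set contained in $\{\dot\Phi=0\}$. By \eqref{eq:phidot-2} and $\av>0$, this set is precisely $\{\velrel_i=0 \text{ for all } i\}$, equivalently $\{\veli=\vd \text{ for all } i\}$; approaching it forces $\sum_{i=1}^N\|\veli(t)-\vd\|^2\to 0$ as $t\to\infty$, giving condition (2) and completing the proof.

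The main obstacle I anticipate is justifying the applicability of LaSalle given the non-smoothness built into the controller: the signed distance $[[\hrel_i]]$ fails to be differentiable across corners of $\partial\Omega$ and the medial axis, and $f_I,f_h$ are only piecewise linear. The smoothness hypothesis in the statement is what lets me sidestep this, but I would at least observe that the potentials $V_I,V_h$ are $C^1$ (value and slope match at the junctions $\|x\|=\rd$ and $[[x]]=-\tfrac{\rd}{2}$), so $\Phi$ is $C^1$ and decreases along solutions wherever they are defined; the delicate issue is confined to the measure-zero non-smooth loci of the signed-distance map, and under the stated assumption these do not interfere with the invariance argument.
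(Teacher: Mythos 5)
Your proposal is correct and follows essentially the same route as the paper's proof: group cohesion via the bound $\sum_i V_h(\posrel_i(t)) \leq \Phi(0)$ in the relative frame (as in Proposition \ref{prop:cohesive}), and velocity alignment via boundedness of the relative state plus LaSalle's invariance principle applied to $\{\dot\Phi = 0\}$ using \eqref{eq:phidot-2}. Your added observation that the potentials are $C^1$ across the junctions, and that the smoothness hypothesis handles the signed-distance singularities, is a reasonable elaboration of a point the paper leaves implicit.
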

\begin{proof}
We have to check the conditions in Definition \ref{defn:flocking}. Group cohesiveness (condition 1) can be shown exactly as for Proposition \ref{prop:cohesive}, by using relative coordinates. Indeed, since in relative coordinates the distances to the target are with respect to the fixed domain $\Omega$ (see \eqref{eq:ht}), a similar argument shows that the distances from $\posrel_i(t)$ to the domain $\Omega$ remain bounded by $\sqrt{2 \Phi(0)/\ah}$ when $\posrel_i \notin \Omega$. Restoring the original variables, we can then conclude that there exists $R>0$ such that $\| \posi(t)-\pd(t) \| \leq R $, for all $i$ and $t \geq 0$.

To show velocity alignment (condition 2), we first note that the velocities are also uniformly bounded in time. Indeed, since the potentials $V_I$ and $V_h$ are non-negative and $\Phi$ is non-increasing, we have: 
\[
\sum_{i=1}^{N} \| \velrel_{i}(t) \|^2 \leq 2 \Phi(t) \leq 2 \Phi(0).
\]
Hence, the solutions $\left(\posrel_i(t),\velrel_i(t)\right)$ of the relative system are confined within a compact set through dynamics. By LaSalle Invariance Principle we conclude that the solutions approach asymptotically the largest invariant set in $\{ \dot{\Phi}=0\}$. Consequently, we infer by \eqref{eq:phidot-2} that as $t \to \infty$, the vehicles' velocities approach the velocity of the target domain. 
\end{proof}

\begin{rmk}
\label{rem:equilibria}
The asymptotic states are critical points of $\Phi$ that satisfy $\velrel_{i}=0$ for all $i$. Alternatively, these equilibria are critical points of the artificial potential energy ${\sum_{i=1}^{N}}\Bigl({\sum_{j\neq i}^{N}}V_{I}\left(\posrel_{ij}\right)+2V_{h}\left(\posrel_{i}\right)\Bigr)$. We expect that almost every solution of the relative system will approach asymptotically a local minimum of this potential energy.
\end{rmk}

Most relevant to our study are the $\rd$-covers discussed in Section \ref{sect:static}, now in the context of these configurations being equilibria in the moving frame of the target. Since the potential energy vanishes at such configurations, these relative equilibria are global minimizers. As discussed in the stationary case, in some certain simple geometries, such equilibria are also isolated. For such states, by similar arguments to those used for Proposition \ref{prop:stability}, the following local asymptotic result can be established.
\begin{prop}
\label{prop:stability-al}
Consider a target domain $\Omega_t$ that moves with constant velocity $\vd$, and a group of $N$ vehicles with dynamics defined by \eqref{eq:dynamic} and \eqref{eq:controller-align}. Let the relative equilibrium of interest be of the form $\dot{\posrel}_{i}=0$,  $\left\Vert \posrel_{ij}\right\Vert \geq \rd $  and $[[\hrel_{i}]] \leq -\frac{\rd}{2}$ for  $i,j=1,\cdots,N$ (see Definitions \ref{defn:subcover} and \ref{defn:cover}), and assume that this equilibrium configuration is isolated. Also assume that there is a neighborhood about the equilibrium in which the control law remains smooth. Then, the relative equilibrium is a global minimum of the sum of all the artificial potentials and is locally asymptotically stable.
\end{prop}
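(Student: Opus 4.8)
The plan is to use the relative-coordinate Lyapunov function $\Phi$ from \eqref{eq:Lyapunov-rel} exactly as in the stationary case treated in Proposition \ref{prop:stability}, and to invoke LaSalle's invariance principle, with the isolation hypothesis upgrading convergence-to-a-set into convergence-to-a-point.

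First I would verify that the relative equilibrium is a global minimizer of $\Phi$, with value zero. At the configuration of interest one has $\|\posrel_{ij}\| \geq \rd$ for every pair, so by \eqref{eq:VI} each inter-vehicle potential vanishes, $V_I(\posrel_{ij}) = 0$; likewise $[[\hrel_i]] \leq -\rd/2$ for every $i$, so by \eqref{eq:Vh} each vehicle-domain potential vanishes, $V_h(\posrel_i)=0$. Combined with $\dot{\posrel}_i = 0$ (zero kinetic energy), this gives $\Phi = 0$. Since every term of $\Phi$ is non-negative, this is the global minimum, which proves the first assertion.

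Next I would use $\Phi$ as a Lyapunov function about this equilibrium. The isolation hypothesis is what makes $\Phi$ locally positive definite: the zero set $\{\Phi = 0\}$ consists exactly of the $\rd$-subcover configurations carrying zero relative velocity, and the assumption that the equilibrium is isolated among such configurations guarantees a punctured phase-space neighborhood on which $\Phi > 0$. Together with $\dot{\Phi} \leq 0$ from \eqref{eq:phidot-2} (valid because $\csf \geq 0$ and $\av > 0$), this yields Lyapunov stability. The smoothness assumption on the control law near the equilibrium ensures the relative dynamics $\dot{\posrel}_i = \velrel_i$, $\dot{\velrel}_i = \tilde{u}_i$ are well posed there, so that LaSalle's principle applies. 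I would then apply it: confined to a sublevel set of $\Phi$ inside the neighborhood (compact and positively invariant), trajectories approach the largest invariant subset of $\{\dot{\Phi} = 0\}$. By \eqref{eq:phidot-2}, $\dot{\Phi} = 0$ forces $\velrel_i = 0$ for all $i$; on an invariant trajectory the velocities stay zero, so $\dot{\velrel}_i = \tilde{u}_i = 0$, making the configuration a critical point of the artificial potential, i.e. an equilibrium. By isolation the only such point in the neighborhood is the equilibrium of interest, so the invariant set reduces to it, and every nearby trajectory converges, establishing local asymptotic stability.

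The main obstacle, and the reason the hypotheses are stated as they are, is precisely the isolation assumption. The potential energy vanishes on an entire set of $\rd$-subcovers, so without isolation $\Phi$ is only positive semidefinite and LaSalle delivers convergence to that set rather than to a point; only in rigid geometries (a square number of vehicles in a square, a triangular number in a triangle) is a cover isolated. The secondary technical point is the nonsmoothness of the control at the thresholds $\|\posrel_{ij}\| = \rd$ and $[[\hrel_i]] = -\rd/2$, which is exactly why we assume the control law is smooth on a neighborhood of the equilibrium.
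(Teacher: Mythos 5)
Your proposal is correct and follows essentially the same route as the paper, which proves this result only by pointing back to the Lyapunov/LaSalle argument of Proposition \ref{prop:stability} carried out in the relative coordinates of Section \ref{subsect:control-align}; you have simply filled in the details (vanishing of $V_I$, $V_h$ and the kinetic term at the equilibrium, local positive definiteness of $\Phi$ from isolation, and the reduction of the invariant set in $\{\dot{\Phi}=0\}$ to the single equilibrium).
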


\begin{rmk}
\label{rem:no-val} All considerations in this subsection apply to the case of zero inter-individual alignment forces ($\csf=0$). In such case, by working in the moving frame of the domain, the problem reduces in fact to the one studied in Section \ref{sect:static}.
\end{rmk}

As mentioned in Remark \ref{rem:threshold-remark}, the previous theoretical results can only be guaranteed if the control force remains sufficiently small, in other words, if it is threshold free.

\subsection{Numerical simulations}
\label{subsect:num-moving}

In this subsection, we show three numerical simulation scenarios for vehicles using the coverage controller \eqref{eq:controller-align}. While the first two scenarios are covered by the theory, the last one illustrates how our strategy still leads to appropriate final configurations even when the domain follows non-inertial trajectories. The possible safety issues are addressed as described in Section \ref{subsect:collision-controller}.
\medskip

\textbf{Triangular domain.}
We consider the scenario in which an equilateral triangular domain moving with constant velocity, $v_{d}=\left(\frac{\sqrt{2}}{2},\frac{\sqrt{2}}{2}\right)$, is covered by a triangular number of vehicles, i.e. $N=\frac{n\left(n+1\right)}{2},\, n\in\mathbb{N}$. At the start of the simulation the vehicles lie on a line outside the domain (see Figure \ref{fig:moving_triangle_avoid_0}). The evolution for a group of $N=10$ agents, each of them using the coverage with velocity alignment and pairwise safety strategies discussed above, is illustrated in Figures \ref{fig:moving_triangle_avoid_9}-\ref{fig:moving_triangle_avoid_60}. The tails represent the 15-second history of the vehicle positions.

Some of the effects of strong alignments, that is, large $a_v$ or $C_{al}$ values, include vehicles spreading slowly inside the domains or in some cases not reaching the target formation, as pointed out in \cite{chacon2020thesis}.

On the other hand, weak alignments, i.e. small $a_v$ and $C_{al}$ values, cause undesired overshoots, and slower asymptotic flocking. Therefore, it is important to maintain a good balance between the strength of the alignment and coverage forces.

\medskip
\textbf{Non-convex domain.}
We now study the scenario in which vehicles cover and follow a moving non-convex domain in the shape of an arrowhead. While the domain preserves its shape, it moves with a constant velocity $v_{d}=\left(\frac{\sqrt{2}}{2},\frac{\sqrt{2}}{2}\right)$. Different time instants of the simulation are shown in Figure \ref{fig:arrow}, where the tails represent the vehicle positions during the last 20 seconds of the simulation. Initially, all the 9 vehicles lie on a line perpendicular to the movement direction of the target domain, as shown by the tails of the vehicles in Figure \ref{fig:arrow18}. 

We distinguish two main behaviours: during a first phase of the simulation (Figure \ref{fig:arrow18}) the vehicles cover the domain approximately evenly, adopting the arrow shape, while in a second phase (Figure \ref{fig:arrow60}), a clearer domain-following behaviour is observed. The oscillations of the two vehicles that are lagging behind are the effect of their proximity to the corners. Indeed, as one of the line segments of the boundary wedge gets closer to the vehicle near the corner, it pushes it towards the other segment of the wedge, a back-and-forth motion that causes the zigzagging. These oscillations can be reduced by reinforcing the velocity alignment.

Unlike the convex case, in non-convex domains the projection on the boundary for points outside of the domain may not be unique; this is the case for instance of the green vehicle in the middle of the initial setup -- see start of the tails in Figure \ref{fig:arrow18}. Although the chance for a vehicle to lie in one of these states is extremely unlikely (the set of points where this happens has zero measure), this fact may yield ambiguity in the definition of the domain-vehicle force. 
We mitigate this issue by considering the contribution from only one of the multiple projection points; consequently, the numerical time evolution may depend on the chosen projection method.

\begin{figure}
\centering
\begin{subfigure}[b]{0.52\columnwidth}
\includegraphics[width=\columnwidth]{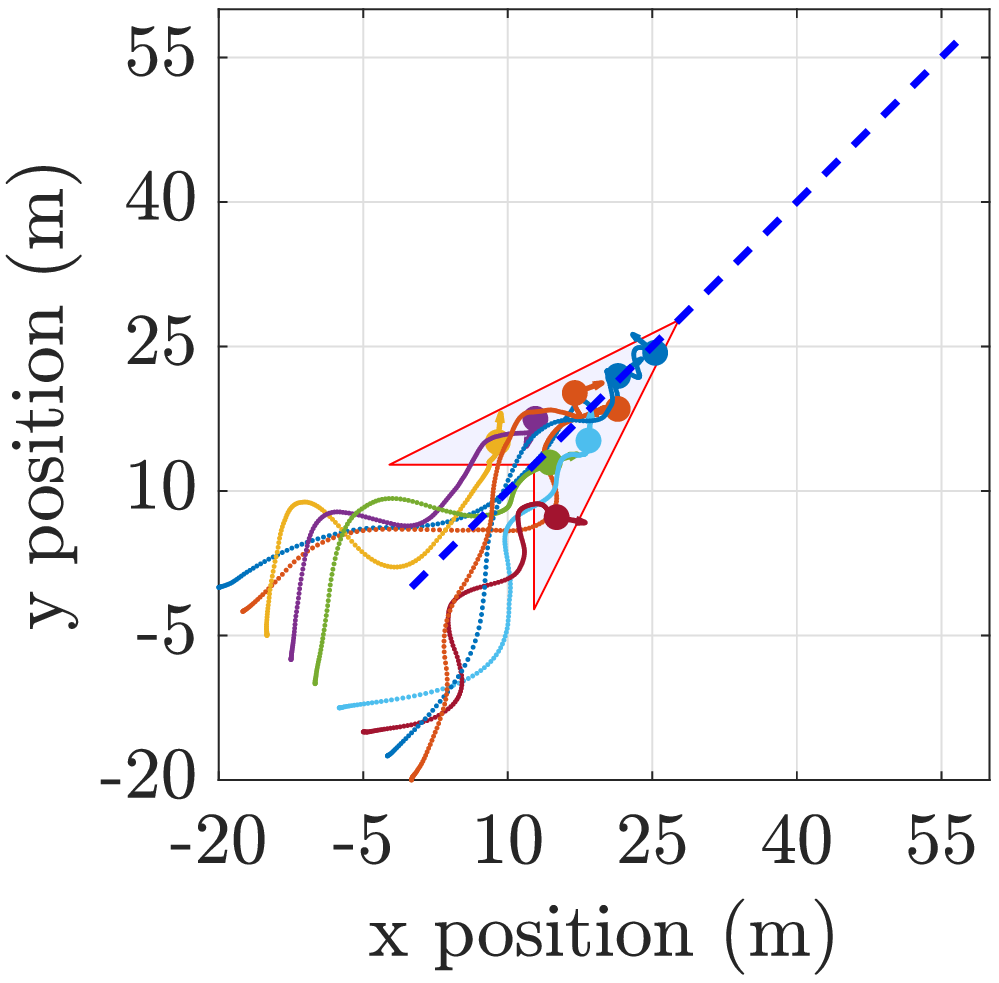}
\caption{$t=18 (s)$}
\label{fig:arrow18}
\end{subfigure}
\hspace{-1.7em}
\begin{subfigure}[b]{0.52\columnwidth}
\includegraphics[width=\columnwidth]{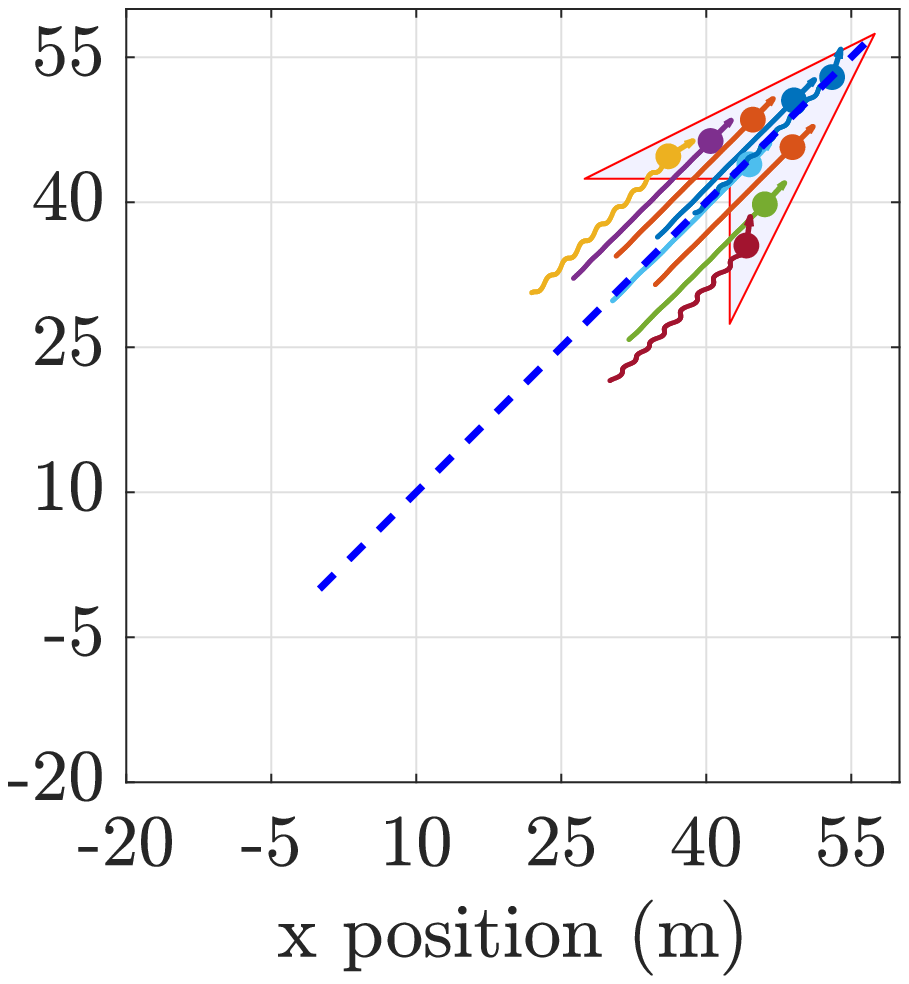}
\caption{$t=60 (s)$}
\label{fig:arrow60}
\end{subfigure}
\caption{Vehicles covering and following a moving, non-convex domain, when $N=9$, $c_{r}=2\,\text{(m)}$, $v_{max}=10\,\text{(m/s)}$, $u_{max}=3 \,\text{(m/s$^{2})$}$, $t_{safety}=5\,\text{(s)}$, $\aI=1\,\text{(m/s$^{2})$}$, $\ah=2\,\text{(m/s$^{2})$}$, $\av=0.2\,\text{(m/s$^{2})$}$, $C_{al}=0.1\,\text{(m/s$^{2})$}$, $l_{al}=7.21\,\text{(m)}$, domain area $A=225\,\text{(m$^2$)}$ and $r_{d}=\sqrt{\frac{A}{N}}=5\,\text{(m)}$. The vehicles start in linear formation, approach and cover the domain, while following it. The vehicles lagging behind exhibit oscillations due to a bouncing effect in the narrow corners.}
\label{fig:arrow}
\end{figure}

\medskip
\textbf{Domain moving in a circle.}
Finally, we include the case of a target domain moving with non-zero acceleration, more specifically, an equilateral triangular domain moving on a circular path. The triangular domain moves so that its centre of mass describes a circular motion of radius 30 with constant angular velocity $\frac{2\pi}{40}$, while aligning its heading to be tangent to this circle (see Figure \ref{fig:circle}). Note that this non-inertial path is not covered by our previous theoretical results (Theorem \ref{thm:flocking} and Proposition \ref{prop:stability-al}).

The vehicles' time evolution is illustrated in Figure \ref{fig:circle}, where the tails represent the vehicle positions during the last 20 seconds of the simulation. At the beginning, the $N=6$ vehicles are in the line formation as shown by the beginning of the tails in Figure \ref{fig:circle9}. As in previous simulations, the vehicles try to reach the moving domain, this time rotating around the domain's circular path (Figures \ref{fig:circle9} and \ref{fig:circle24}). Then, the vehicles reach coverage of the domain (Figure \ref{fig:circle48}) which is maintained by each vehicle by remaining in a circular movement of constant radius (Figure \ref{fig:circle70}).

When a vehicle describes a uniform circular movement with  angular velocity $\omega$ and radius $r$, its speed remains constant over time and is given by $r\omega$. As the vehicles move asymptotically along circles with different radii, they have different velocities, and hence, this type of "flock" does not satisfy Definition \ref{defn:flocking}. In contrast to the case when the domain is moving along inertial paths, in this case each vehicle's control force magnitude does not go asymptotically to zero, but it approaches its centripetal acceleration $r\omega^{2}$ instead.

\begin{figure}
\centering
\begin{subfigure}[b]{0.52\columnwidth}
\includegraphics[width=\columnwidth]{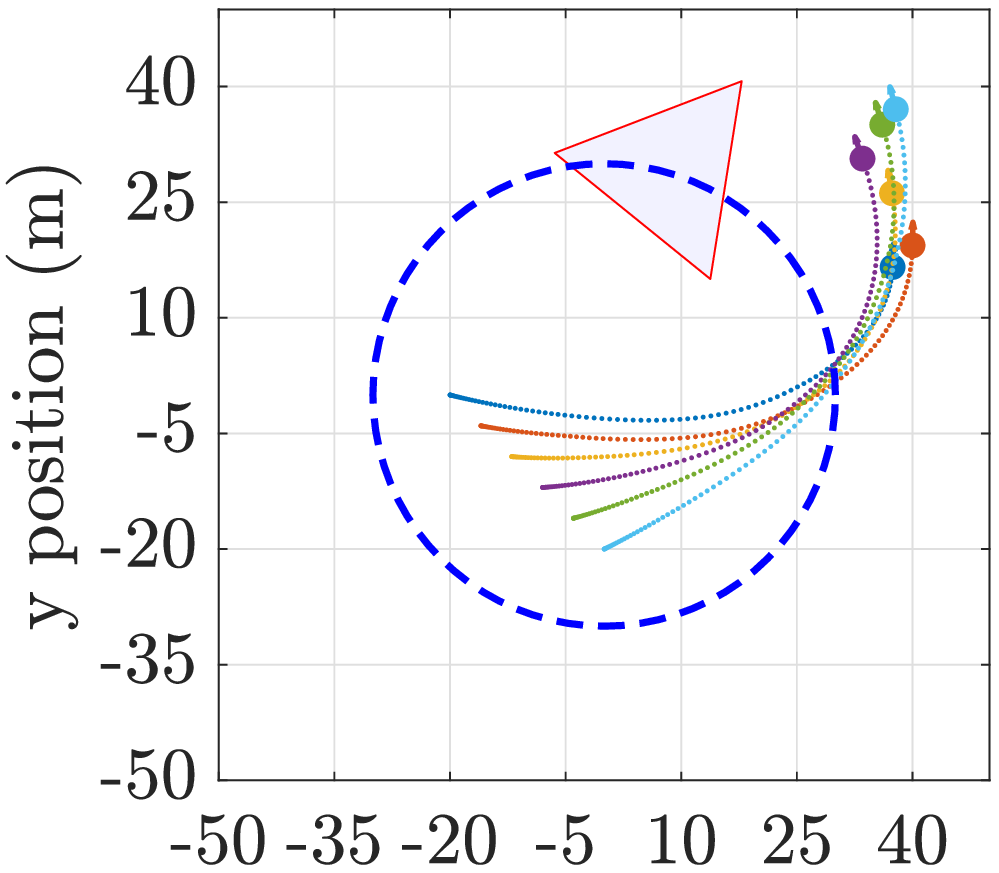}
\vspace{-2.5em}\caption{$t=9(s)$}
\label{fig:circle9}
\end{subfigure}
\hspace{-1.7em}
\begin{subfigure}[b]{0.52\columnwidth}
\includegraphics[width=\columnwidth]{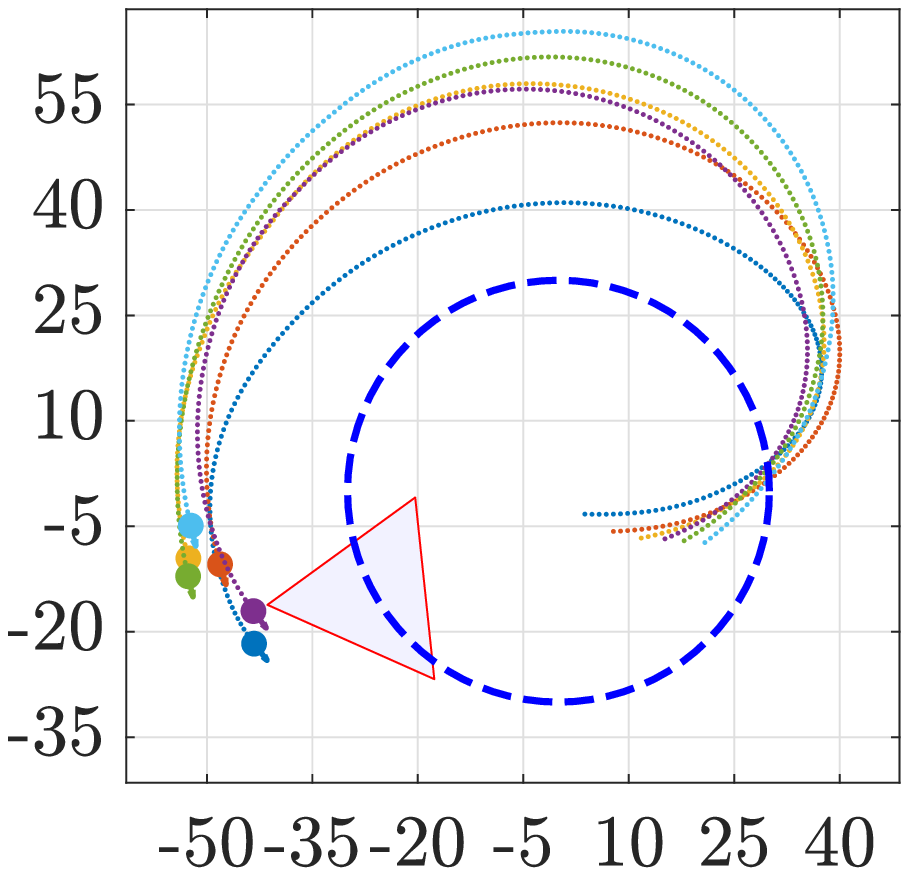}
\vspace{-2.5em}\caption{$t=24(s)$}
\label{fig:circle24}
\end{subfigure}
\begin{subfigure}[b]{0.52\columnwidth}
\includegraphics[width=\columnwidth]{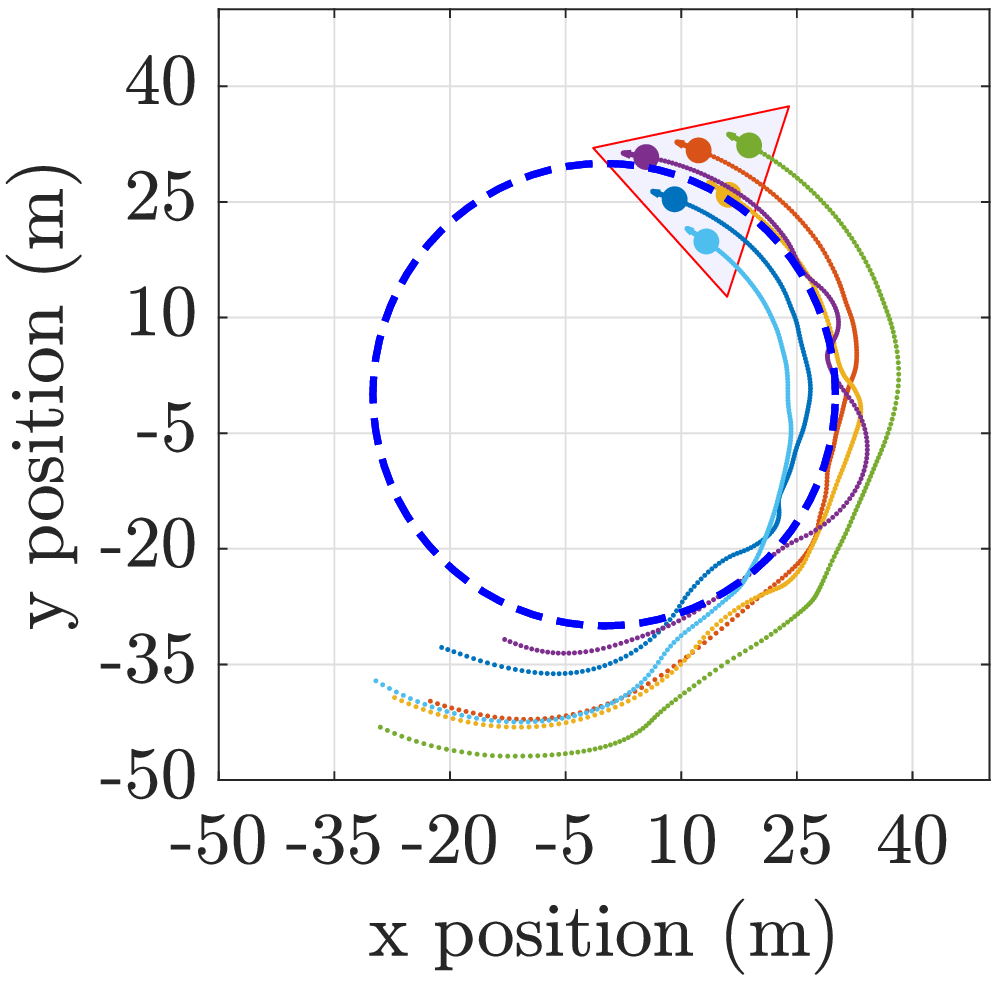}
\vspace{-1.5em}\caption{$t=48(s)$}
\label{fig:circle48}
\end{subfigure}
\hspace{-1.7em}
\begin{subfigure}[b]{0.52\columnwidth}
\includegraphics[width=\columnwidth]{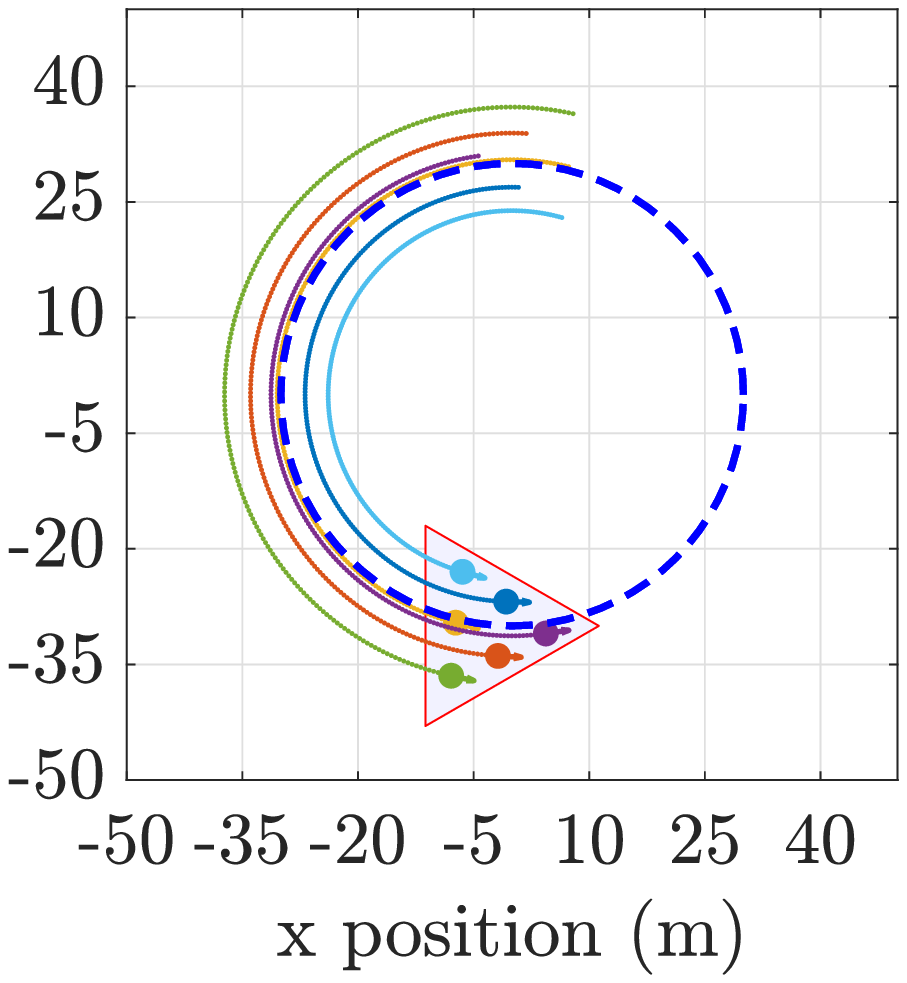}
\vspace{-1.5em}\caption{$t=70(s)$}
\label{fig:circle70}
\end{subfigure}

\caption{Vehicles covering and following a non-zero acceleration triangular domain moving over the path $\left(30\cos\left(\frac{2\pi}{40} t\right),30\sin\left(\frac{2\pi}{40} t\right)\right)$. Here, $N=6$, $c_{r}=2\,\text{(m)}$, $v_{max}=10\,\text{(m/s)}$, $u_{max}=3 \,\text{(m/s$^{2})$}$, $t_{safety}=5\,\text{(s)}$, $\aI=10\,\text{(m/s$^{2})$}$, $\ah=10\,\text{(m/s$^{2})$}$, $\av=1\,\text{(m/s$^{2})$}$, $C_{al}=1.2\,\text{(m/s$^{2})$}$, $l_{al}=10.06\,\text{(m)}$, domain area $A=292.28\,\text{(m$^2$)}$ and $r_{d}=\sqrt{\frac{A}{N}}=6.97\,\text{(m)}$. The vehicles start in linear formation, approach and cover the domain, while following it.}
\label{fig:circle}
\end{figure}
\section{Planar Fixed-Wing Aircraft}
\label{sect:fixed-wing}
\subsection{Problem formulation}

In this section, we consider the flocking coverage problem for $N$ vehicles governed by the planar fixed-wing aircraft dynamics, given by:
\begin{equation}\label{eq:dubins_car_dyn}
\begin{aligned}
       \dot{\pos_{i}}=s_{i}\left(\cos\left(\heading_{i}\right),\sin\left(\heading_{i}\right)\right),\,\left(\dot{\heading}_{i},\dot{\speed}_{i}\right)=\left(u_{i,\heading},u_{i,\speed}\right);\\
       0<\speed_{min}\leq\left\Vert \dot{\pos_{i}}\right\Vert \leq\speed_{max},\,\left|u_{i,\heading}\right|\leq u_{\heading_{max}},\,\left|u_{i,\speed}\right|\leq u_{\speed_{max}}.
\end{aligned}    
\end{equation}

\noindent Here, $\heading_{i}$ is the heading angle, $\speed_{i}$
is the vehicle speed and $\pos_{i}=\left(\pos_{i,x},\pos_{i,y}\right)$ is the position of the $i$-th agent. The variables $u_{i,\speed}$ and $u_{i,\heading}$ are the  acceleration and turn rate applied to this vehicle respectively; these are the control inputs to be specified later.
In addition to the bounds for the controls, we also impose maximum and minimum speed limits,
the latter being particularly relevant for aerial vehicles.

In this case, our safe domain coverage problem of interest is the same as for the double integrator dynamics, but considering fixed-wing vehicles.

\subsection{Coverage controller and safety controller}\label{sec:fiked_wing_coverage_control}

The goal is to find expressions for each agent's control law based on the proposed coverage policy with inter-vehicle and vehicle-domain alignment forces \eqref{eq:controller-align}, while satisfying the constraints given in \eqref{eq:dubins_car_dyn}.

By differentiation of the vehicle dynamics \eqref{eq:dubins_car_dyn} with  respect to time, one can find that the acceleration in Cartesian coordinates of the $i$-th agent in terms of the control inputs are as follows:
\begin{equation}
\label{eq:car_to_quad_force_transformation}
\left(\begin{array}{c}
\ddot{\pos_{i,x}}\\
\ddot{\pos_{i,y}}
\end{array}\right)=R\left(\heading_{i},\speed_{i}\right)\left(\begin{array}{c}
u_{i,\heading}\\
u_{i,\speed}
\end{array}\right),
\end{equation} 
where $R\left(\heading_{i},\speed_{i}\right):=\left(\begin{array}{cc} -\speed_{i}\sin\left(\heading_{i}\right) & \cos\left(\heading_{i}\right)\\ \speed_{i}\cos\left(\heading_{i}\right) & \sin\left(\heading_{i}\right)\end{array}\right).$
\smallskip

This relation allows us to compute an expression for the vehicle control inputs in terms of its acceleration in Cartesian coordinates whenever
$s_{i}\neq0$:
\begin{equation}\label{eq:inverse_transformation}
\left(\begin{array}{c}
u_{i,\heading}\\
u_{i,\speed}
\end{array}\right)=\left(R\left(\heading_{i},\speed_{i}\right)\right)^{-1}\left(\begin{array}{c}
\ddot{\pos_{i,x}}\\
\ddot{\pos_{i,y}}
\end{array}\right).
\end{equation}
Using this correspondence, one can obtain the necessary controls $\left(u_{i,\heading},u_{i,\speed}\right)$ to achieve the same acceleration in Cartesian coordinates produced by the proposed control force \eqref{eq:controller-align} as:
\small
\begin{multline}\label{eq:dubins_full_control}
\left(\begin{array}{c}
u_{i,\heading}\\
u_{i,\speed}
\end{array}\right)=\left(R\left(\heading_{i},\speed_{i}\right)\right)^{-1}\left(-\sum_{j\neq i}^{N}f_{I}\left(\left\Vert \pos_{ij}\right\Vert \right)\frac{\pos_{ij}}{\left\Vert \pos_{ij}\right\Vert }\right.\\
\left.-f_{h}\left(\left[\left[h_{i}\right]\right]\right)\frac{h_{i}}{\left[\left[h_{i}\right]\right]}-\sum_{j\neq i}^{N}\csf\left(\left\Vert \pos_{ij}\right\Vert \right)\vel_{ij}-a\left(\vel_{i}-\vd\right)\right).
\end{multline}
\normalsize
The changes of coordinates \eqref{eq:car_to_quad_force_transformation} and \eqref{eq:inverse_transformation} guarantee that all the stability results in Chapter \ref{sect:moving} are still valid in the case of the planar fixed-wing aircraft model when no constraints are applied, as long as none of the vehicles stop along their trajectories. This seems to be a very plausible assumption in practice, as the minimum speed is supposed to be greater than zero.
\medskip

\textbf{Thresholding the control force.}
While finding an admissible force satisfying the constraints for the double integrator model \eqref{eq:dynamic} is done by simply normalizing the vehicles' control input \eqref{eq:controller-align}, obtaining a suitable fixed-wing control input satisfying the constraints \eqref{eq:dubins_car_dyn} is not as straightforward.

In order to obtain the appropriate fixed-wing aircraft control inputs we use
relation \eqref{eq:car_to_quad_force_transformation}, which allows us to represent the set of admissible accelerations from the Cartesian perspective:

\footnotesize
\[
S\left(\heading_{i},\speed_{i}\right)=\left\{ R\left(\heading_{i},\speed_{i}\right)\left(\begin{array}{c}
u_{\heading}\\
u_{\speed}
\end{array}\right):\left(\begin{array}{c}
u_{\heading}\\
u_{\speed}
\end{array}\right)\in\begin{array}{c}\left[-u_{\heading_{max}},u_{\heading_{ max}}\right]\\\times\left[-u_{\speed_{max}},u_{\speed_{ max}}\right]\end{array}\right\}. 
\]
\normalsize
This set can be understood as a stretch and rotation of the rectangle containing the admissible vehicle control inputs.

The input constraints affect the magnitude of the vehicle's acceleration, however, we intend to preserve its direction. Let us define the Cartesian admissible force associated to the control \eqref{eq:dubins_full_control} as 
\begin{equation*}
\left(\begin{array}{c}
\hat{u}_{i,x}\\
\hat{u}_{i,y}
\end{array}\right)= \tau\left(\heading_{i},\speed_{i}\right)R\left(\heading_{i},\speed_{i}\right)\left(\begin{array}{c}
u_{i,\heading}\\
u_{i,\speed}
\end{array}\right),
\end{equation*}
where
\small
\begin{equation*}
\tau\left(\heading_{i},\speed_{i}\right)=\sup\left\{ t\in\mathbb{R}:tR\left(\heading_{i},\speed_{i}\right)\left(\begin{array}{c}
u_{i,\heading}\\
u_{i,\speed}
\end{array}\right)\in S\left(\heading_{i},\speed_{i}\right)\right\}.
\end{equation*}
\normalsize
In other words, $\hat{u}_i$ us the largest acceleration in the set of admissible accelerations in Cartesian coordinates that is parallel to the desired acceleration -- see Figure \ref{fig:Thresholding-Dubins-car}.  

Finally, the thresholded fixed-wing aircraft control inputs can be obtained by using \eqref{eq:car_to_quad_force_transformation} as
\[
\left(\begin{array}{c}
\hat{u}_{i,\heading}\\
\hat{u}_{i,\speed}
\end{array}\right)=\left(R\left(\heading_{i},\speed_{i}\right)\right)^{-1}\left(\begin{array}{c}
\hat{u}_{i,x}\\
\hat{u}_{i,y}
\end{array}\right)=\tau\left(\heading_{i},\speed_{i}\right) \left(\begin{array}{c}
u_{i,\heading}\\
u_{i,\speed}
\end{array}\right).
\]

\begin{figure}
\begin{centering}
\includegraphics[width=0.7\columnwidth]{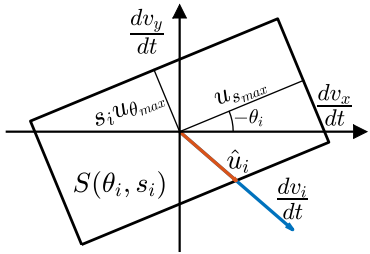}
\par\end{centering}
\caption{\label{fig:Thresholding-Dubins-car} Thresholded fixed-wing aircraft control input $\hat{u}_i$ for vehicle with speed $\speed_{i}$ and heading $\heading_{i}$ computed from its set of admissible accelerations in Cartesian coordinates $S\left(\heading_{i},\speed_{i}\right)$ and a reference acceleration $\frac{dv_{i}}{dt}$.}

\end{figure}



\medskip
\textbf{Collision avoidance.}
As mentioned above, the changes of coordinates \eqref{eq:car_to_quad_force_transformation} and \eqref{eq:inverse_transformation} guarantee that the unconstrained fixed-wing aircraft dynamics satisfies the vehicles safety conditions when the initial energy is small enough, as established in Proposition \ref{prop:collision-avoid}.


As the set of control inputs and non-anticipative disturbances differ from those in the double integrator dynamics, the collision avoidance via Hamilton-Jacobi reachability for this type of vehicles requires a different study  than the one carried out in Section \ref{subsect:collision-controller}.

\begin{figure}
    \centering
    \includegraphics[width=0.8\columnwidth]{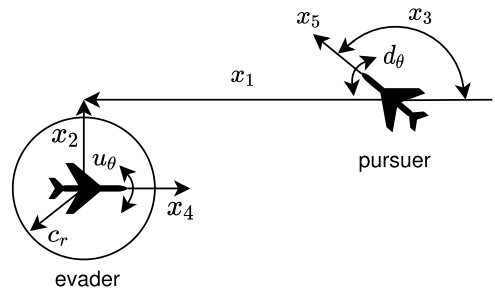}
    \caption{Relative coordinate system for pairwise collision avoidance in the fixed-wing model. Figure adapted from \cite{mitchell2005time} \textcopyright\hspace{0.3em}[2005] IEEE with authors' permission.}
    \label{fig:relative-coordinate-system-fixed-wing}
\end{figure}

Similarly to the double integrator case, we consider the relative dynamics between a pair of vehicles, where one acts as evader and the other as pursuer. Using as reference Figure \ref{fig:relative-coordinate-system-fixed-wing}, to obtain the relative dynamics we consider the evader fixed at the origin, and facing the positive $x_1$ axis. In the figure, $x_1$ is the projection of the vector connecting the vehicles' positions on the axis parallel to the evader's heading, and $x_{2}$ is its projection on the orthogonal direction. Also, $x_{3}$ represents the difference of the two agents' headings, while $x_{4}$ and $x_{5}$ represent the evader and pursuer speeds, respectively. 

The pursuer's speed, relative location and heading, along with the evader's speed, are described by the following dynamical system:
\[
\dot{x}=f\left(x,u,d\right)=\left(\begin{array}{c}
x_{5}\cos\left(x_{3}\right)-x_{4}+u_{\theta}x_{2}\\
x_{5}\sin\left(x_{3}\right)-u_{\theta}x_{1}\\
d_{\theta}-u_{\theta}\\
u_{s}\\
d_{s}
\end{array}\right).
\]
Here $u:=\left(u_{\heading},u_{\speed}\right)$ and $d:=\left(d_{\heading},d_{\speed}\right)$, where  $u_{\heading}$, $u_{\speed}$  and $d_{\heading}$, $d_{\speed}$ are the evader's, respectively the pursuer's, turn rate and acceleration, with the latter being treated as disturbances.

By the dynamic programming principle, the time $\phi$ to reach collision is the viscosity solution for the stationary HJ PDE \eqref{eq:HJPDE} where
\[\domDanger=\left\{ z:z_{1}^{2}+z_{2}^2\leq c_{r}^{2}\text{ or }z_{4} < \speed_{min} \text{ or }z_{4} > \speed_{max} \right\}\]  
is the union of a five dimensional cylinder of radius $c_r$ on the first two dimensions, with two half-spaces, and 
\[\domSafe=\left\{ z:z_{5}<\speed_{min} \text{ or } z_{5}>\speed_{max}\right\}\]
is the union of two half-spaces.

Note that unlike the typical time-to-reach setting, we consider $\domDanger$ to be the set of dangerous states, and $\domSafe$ to be the set of unsafe states that ``invalidates" danger.
Specifically, $\domSafe$ ensures that the pursuer does not violate its speed constraints while trying to cause a collision.

Obtaining an analytical solution for the HJ PDE associated to this problem seems to be much more difficult than in the double integrator case, and we opt for solving it numerically.

The challenges of solving numerically this HJ PDE are two-fold. First, the memory requirements to store the solution are large even for coarse resolutions, and second, the computational time scales poorly as the grid size grows. This is particularly problematic when the algorithm specifications are not optimized for the hardware architecture. We alleviate the second issue by using the new python toolbox \url{https://github.com/SFU-MARS/optimized_dp} for solving HJ PDEs, which yields faster executions by decoupling the algorithm from the hardware specifications. 


\subsection{Numerical simulations} \label{subsect:num-fixed-wing}
In this subsection we consider two simulation scenarios related to those investigated in Section \ref{subsect:num-moving}, and illustrate how our control strategy leads to similar coverage configurations. We do not intend to compare the systems' evolution, as they are two distinct types of vehicles with different capabilities.
\medskip

\textbf{Triangular domain.}
In the first scenario, we consider an equilateral triangular domain moving with constant velocity $v_{d}=\left(\frac{\sqrt{2}}{2},\frac{\sqrt{2}}{2}\right)$, which is covered by a triangular number of vehicles. The minimum for vehicles' speed is set at $0.5\,\text{(m/s)}$, while the maximum is $5\,\text{(m/s)}$. Each of the fixed-wing agents uses the coverage controller with velocity alignment \eqref{eq:dubins_full_control} discussed in Subsection \ref{sec:fiked_wing_coverage_control}.

Figure \ref{fig:db_moving_triangle} shows four different time steps of the evolution of the vehicles. The tails represent the last 15 seconds of the vehicles' position history. At the start of the simulation the $N=10$ vehicles lie on a line outside the domain, moving with the minimum allowed speed and random headings (see Figure \ref{fig:db_moving_triangle_0}). As time evolves, the vehicles approach the domain (Figure \ref{fig:db_moving_triangle_2_4}), and then cover it by taking a triangular formation moving with constant velocity as expected, see Figures \ref{fig:db_moving_triangle_12} and \ref{fig:db_moving_triangle_48}. 

In this particular case, the same coverage controller parameters used for the double integrator vehicles seem to work well. However, it is not a rule of thumb, as the thresholding strategies are very different. We also note that the collisions count goes from 2, when no collision avoidance is included, to 0, when the safety controller is used.

\begin{figure}
\centering
\begin{subfigure}[b]{0.52\columnwidth}
\includegraphics[width=\columnwidth]{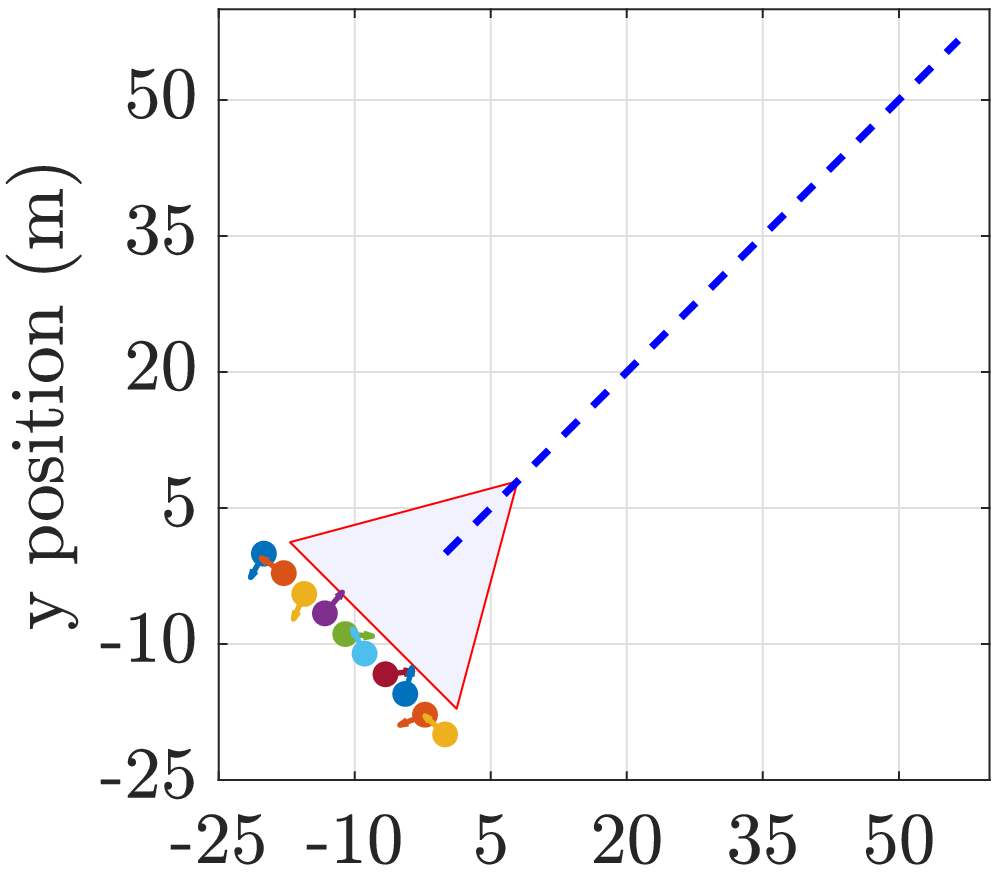}
\vspace{-2.5em}\caption{$t=0(s)$}
\label{fig:db_moving_triangle_0}
\end{subfigure}
\hspace{-1.7em}
\begin{subfigure}[b]{0.52\columnwidth}
\includegraphics[width=\columnwidth]{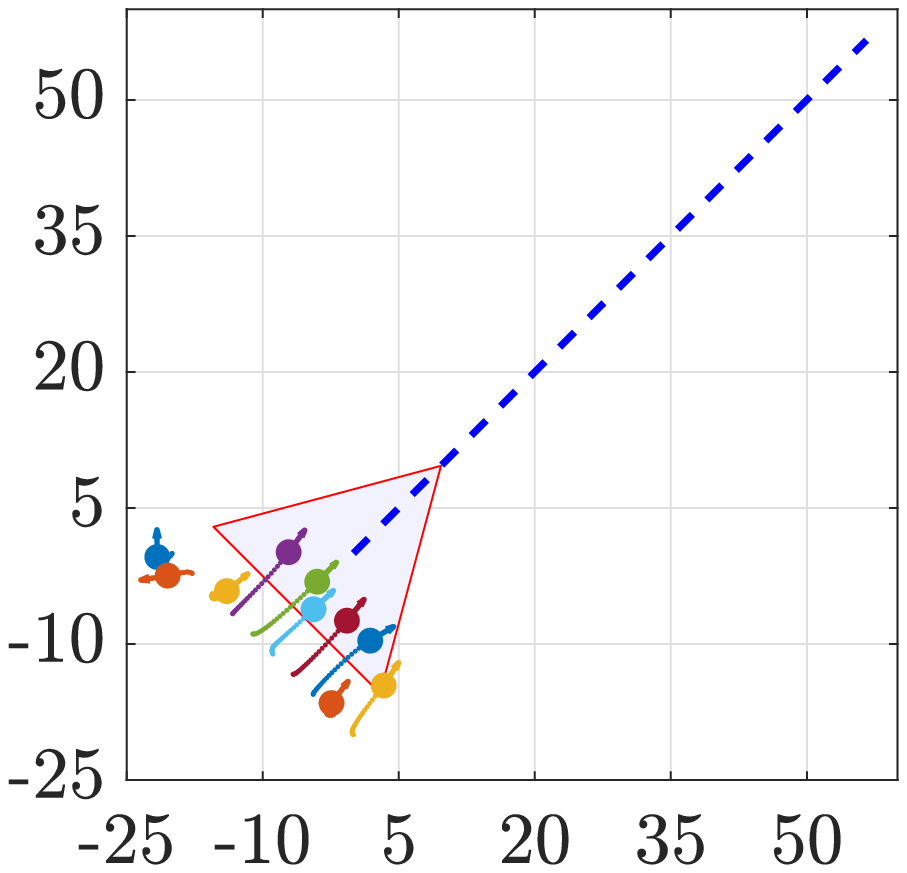}
\vspace{-2.5em}\caption{$t=2.4(s)$}
\label{fig:db_moving_triangle_2_4}
\end{subfigure}
\begin{subfigure}[b]{0.52\columnwidth}
\includegraphics[width=\columnwidth]{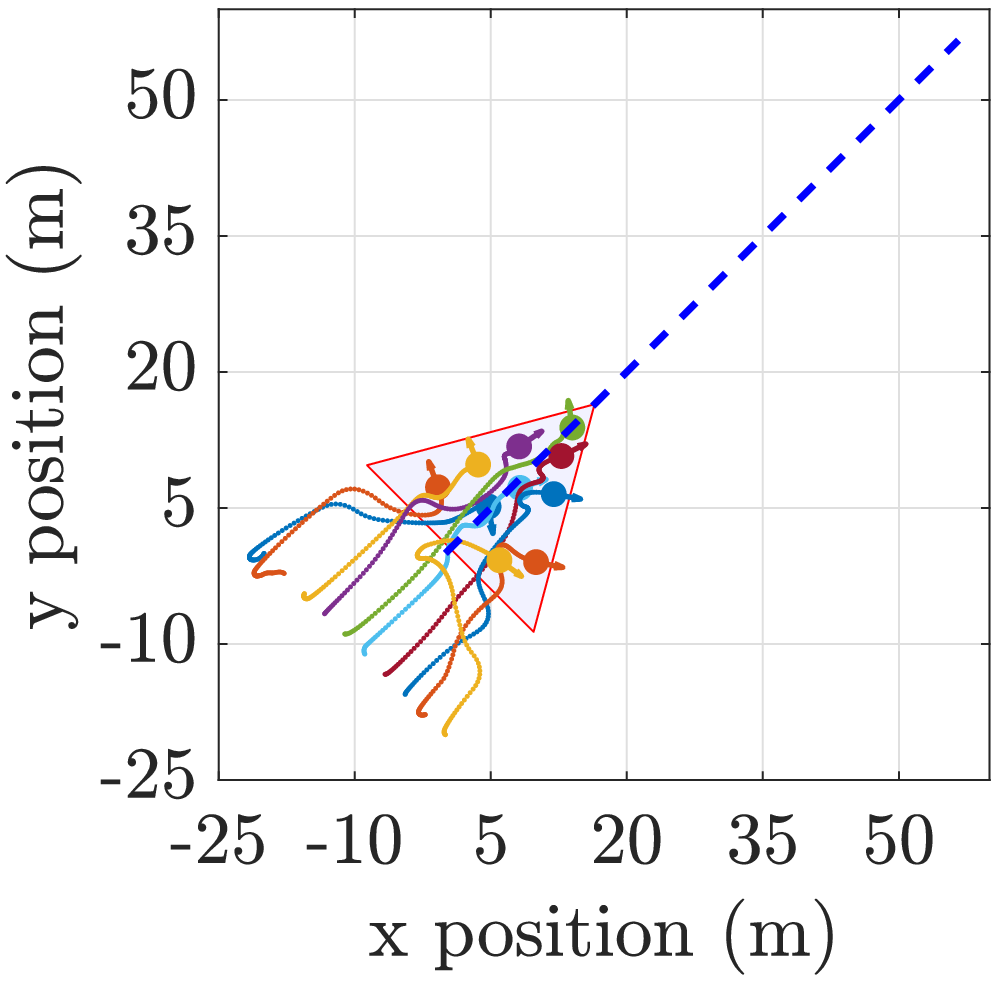}
\caption{$t=12(s)$}
\label{fig:db_moving_triangle_12}
\end{subfigure}
\hspace{-1.7em}
\begin{subfigure}[b]{0.52\columnwidth}
\includegraphics[width=\columnwidth]{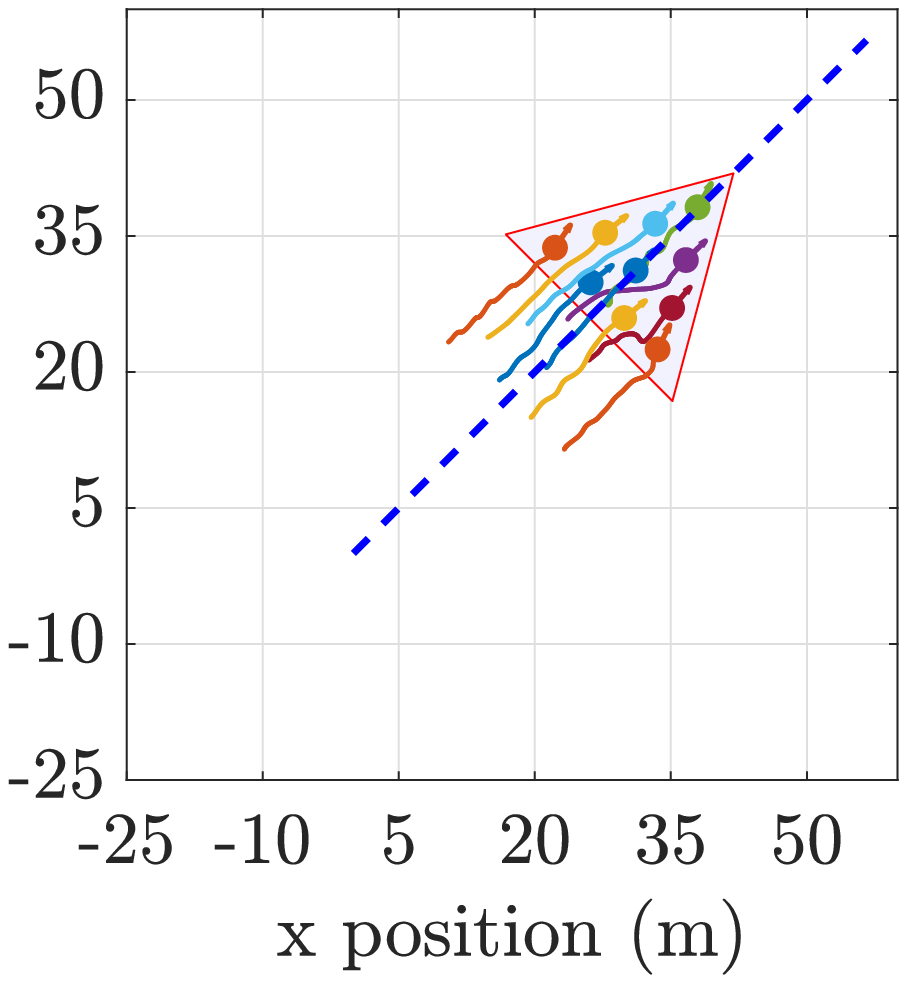}
\caption{$t=48(s)$}
\label{fig:db_moving_triangle_48}
\end{subfigure}
\caption{Vehicles with planar fixed-wing aircraft dynamics covering and following a moving equilateral triangular domain, when $N=10$, $c_{r}=2\,\text{(m)}$, $s_{max}=10\,\text{(m/s)}$, $s_{min}=0.5\,\text{(m/s)}$, $u_{\heading_{max}}=\pi/2\,\text{(rad/s)}$, $u_{\speed_{max}}=3 \,\text{(m/s$^{2})$}$, $\aI=1\,\text{(m/s$^{2})$}$, $\ah=2\,\text{(m/s$^{2})$}$, $\av=0.2\,\text{(m/s$^{2})$}$, $C_{al}=0.2\,\text{(m/s$^{2})$}$, $l_{al}=7.79\,\text{(m)}$, $v_{d}=\left(\frac{\sqrt{2}}{2},\frac{\sqrt{2}}{2}\right)\,\text{(m/s)}$, domain area $A=292.28\,\text{(m$^2$)}$ and $r_{d}=\sqrt{\frac{A}{N}}=5.4\,\text{(m)}$. Collision avoidance controller is included. The vehicles start in linear formation.}
\label{fig:db_moving_triangle}
\end{figure}

\medskip
\textbf{Domain moving in a circle.} The vehicles start in a line formation as shown in Figure \ref{fig:db_circle_0}. They reach the target domain and spread inside it (Figures \ref{fig:db_circle_12} and \ref{fig:db_circle_20}). Once they cover the domain, each of the fixed-wing agents follows a circular path with constant angular velocity $\omega=\frac{3\pi}{80}$ (Figure \ref{fig:db_circle_40}). Under this configuration the vehicles have reached their terminal speed and do not require extra acceleration, i.e. $u_{\speed}=0$, however they should maintain a turn rate of $u_{\heading}=\omega$.

The collision count goes from 7 to 1 by including collision avoidance. Similar to previous sections, we note that our approach based on pairwise collision avoidance does not guarantee safety when a vehicle has to avoid two or more vehicles at the same time.

\begin{figure}
\centering
\begin{subfigure}[b]{0.52\columnwidth}
\includegraphics[width=\columnwidth]{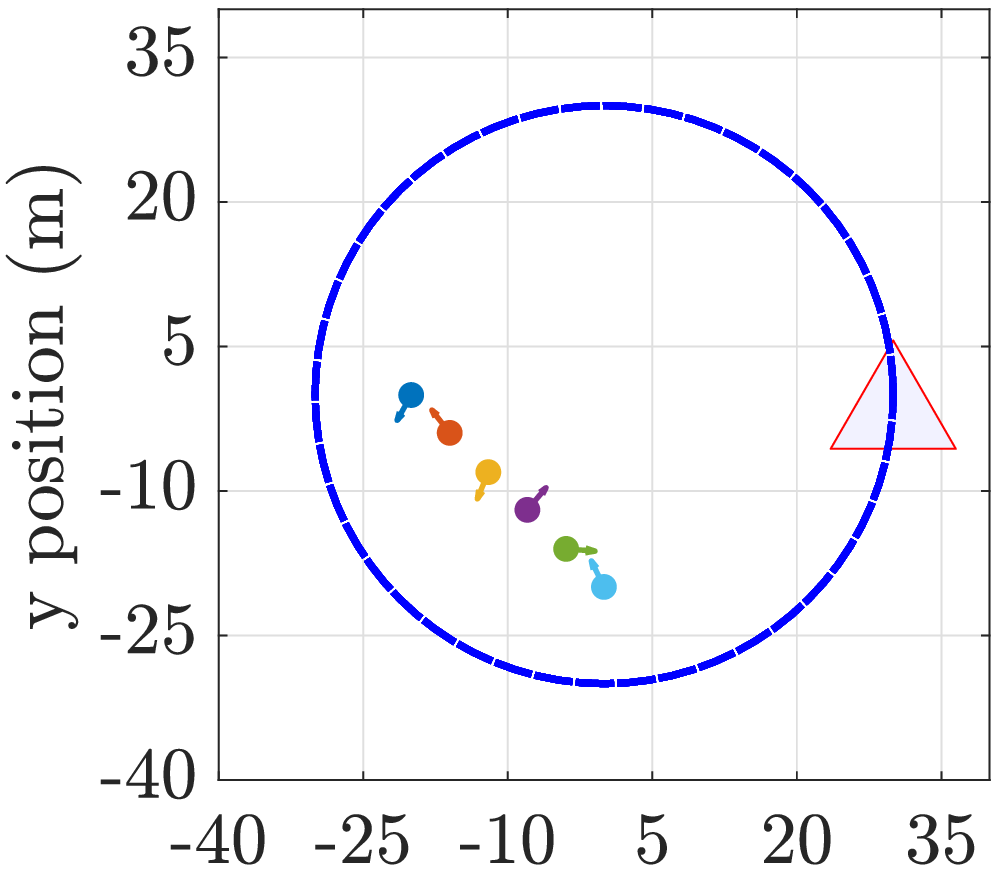}
\vspace{-2.5em}\caption{$t=0(s)$}
\label{fig:db_circle_0}
\end{subfigure}
\hspace{-1.7em}
\begin{subfigure}[b]{0.52\columnwidth}
\includegraphics[width=\columnwidth]{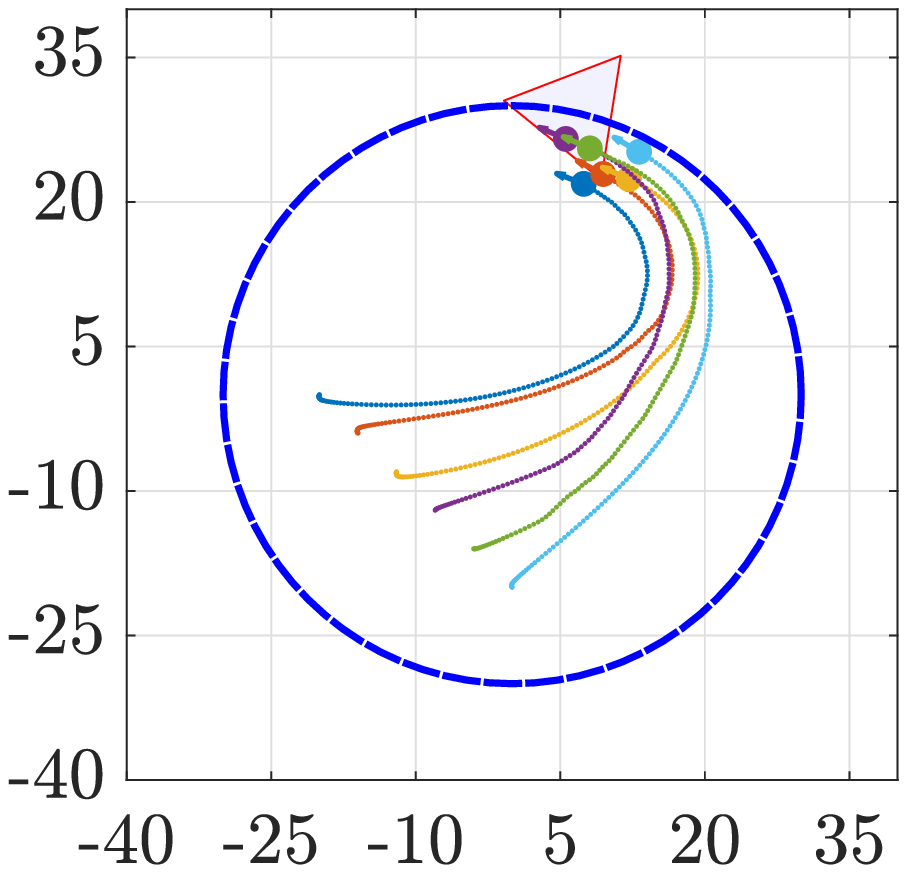}
\vspace{-2.5em}\caption{$t=12(s)$}
\label{fig:db_circle_12}
\end{subfigure}
\begin{subfigure}[b]{0.52\columnwidth}
\includegraphics[width=\columnwidth]{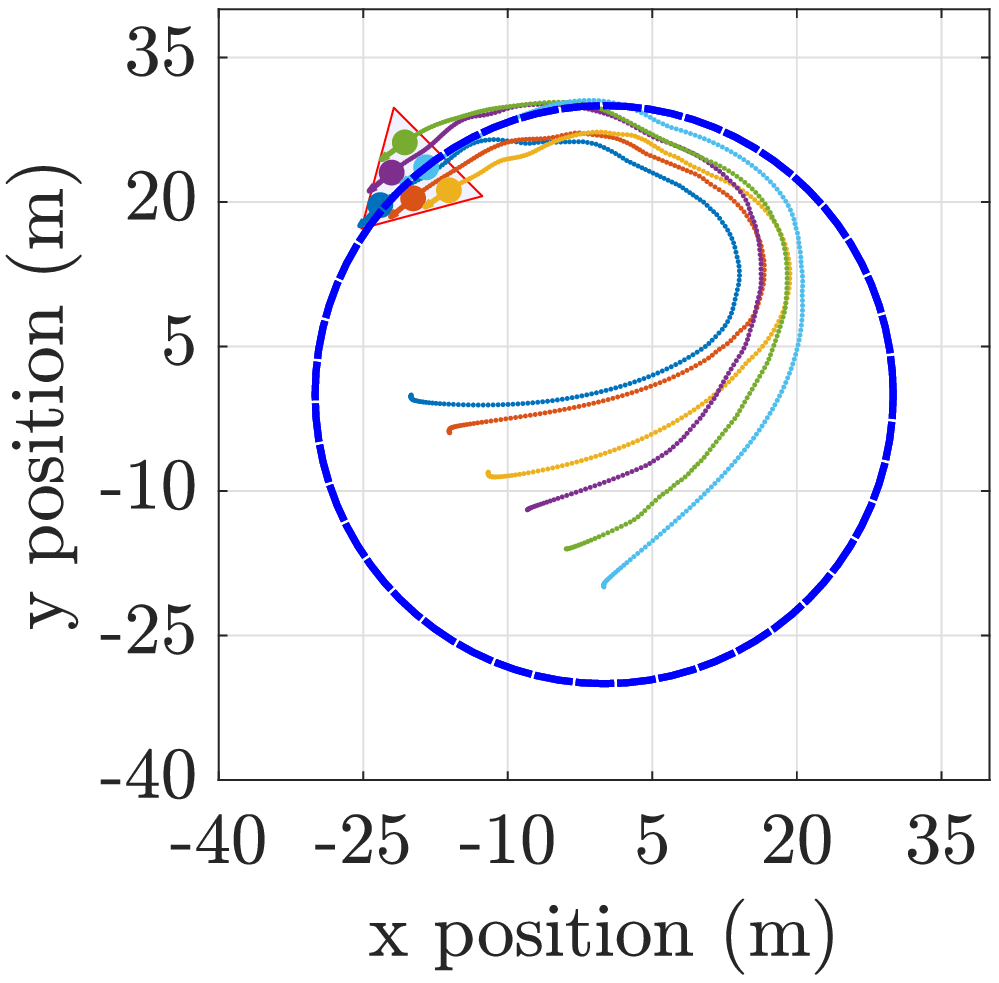}
\caption{$t=20(s)$}
\label{fig:db_circle_20}
\end{subfigure}
\hspace{-1.7em}
\begin{subfigure}[b]{0.52\columnwidth}
\includegraphics[width=\columnwidth]{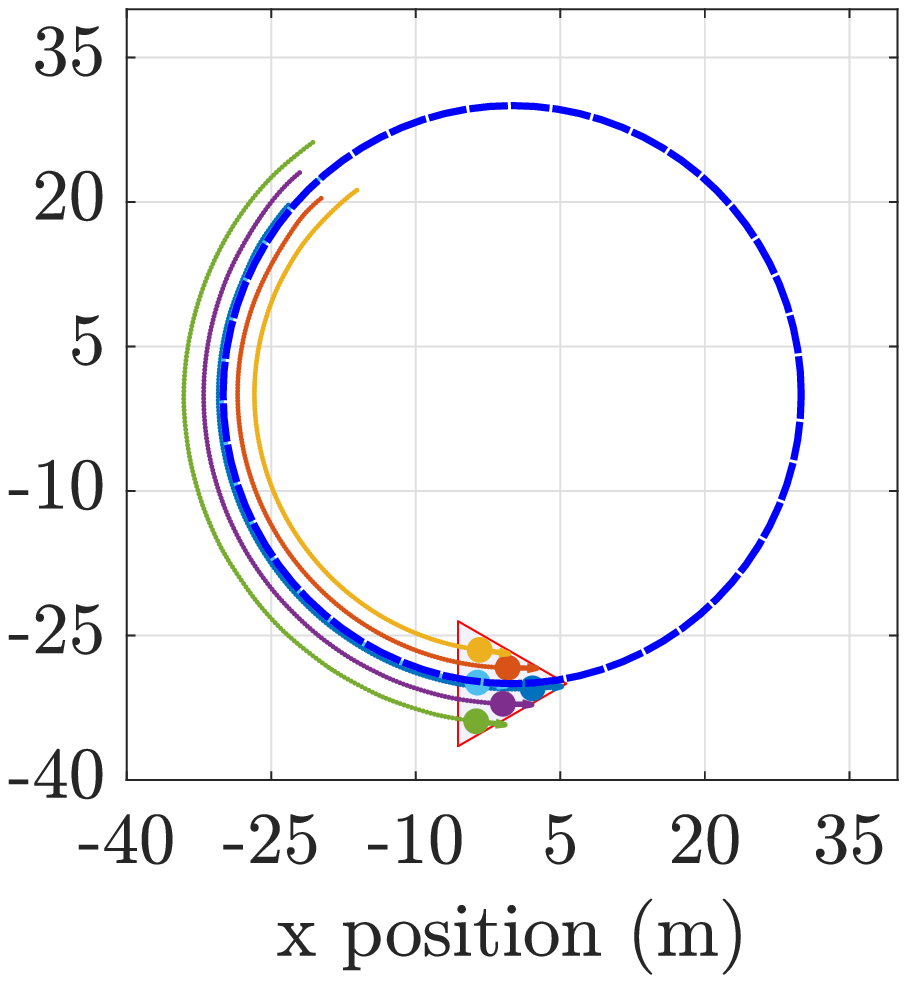}
\caption{$t=40(s)$}
\label{fig:db_circle_40}
\end{subfigure}

\caption{Vehicles with planar fixed-wing aircraft dynamics covering and following a non-zero acceleration triangular domain moving over the path $\left(30\cos\left(\frac{3\pi}{80} t\right),30\sin\left(\frac{3\pi}{80} t\right)\right)$. Here, $N=6$, $c_{r}=2\,\text{(m)}$, $\speed_{min}=0.5\,\text{(m/s)}$,  $\speed_{max}=5\,\text{(m/s)}$, $u_{\heading_ {max}}=\pi/2 \,\text{(rad/s)}$, $u_{\speed_{max}}=3 \,\text{(m/s$^2$)}$, $a_{v}=1.2$, $l_{al}=3.679$, $C_{al}=1.5$, $a_I=5$, $a_h=2.7$, domain area $A=73.07\,\text{(m$^2$)}$ and $r_{d}=\sqrt{\frac{A}{N}}=3.489\,\text{(m)}$. The vehicles start in a linear formation, approach and cover the domain, while following it. The collision avoidance controller is included.}
\label{fig:db_circle}
\end{figure}
\section{Conclusion}
\label{sect:conclusion}

\subsection{Summary of results}\label{subsec:summary_of_results}

Our proposed controller for multi-vehicle coordination  allows a swarm of vehicles to cover moving planar shapes.
Unlike previous coverage controllers that assumed first-order vehicle models, our coverage controllers use more realistic second-order models -- double integrator and fixed-wing aircraft.
We prove that our coverage controller achieves coverage and flocking with moving planar domains, and that the cover configurations of interest are locally asymptotically stable. Using HJ reachability analysis, we guarantee pairwise collision avoidance while accounting for bounded control inputs. In addition, we also derive the analytical solution to the associated HJ PDE for the double integrator model.

Our numerical simulations illustrate successful coverage of static and moving domains on four representative scenarios: static square, non-accelerated moving triangular and arrowhead (non-convex) domains, and a triangular domain following a circular path. 
While the first three scenarios are covered by our theoretical results, the last is not. Nevertheless, we find satisfactory numerical results in this case as well, suggesting some generality of the proposed technique.
For simulations involving the double integrator, we observe drastic reduction of collisions when using the HJ-based collision avoidance controller.

\subsection{Future work}\label{subsec:future_directions}

Immediate future work includes parameter tuning to reduce oscillations in the vehicles' movement, studying three-dimensional coverage, investigating geometrical properties of steady states, investigating scenarios involving partial information, and implementing our approach on robotic platforms.

\bibliographystyle{ieeetr}  
\bibliography{references}

\begin{IEEEbiography}[{\includegraphics[width=1in,height=1.25in,clip,keepaspectratio]{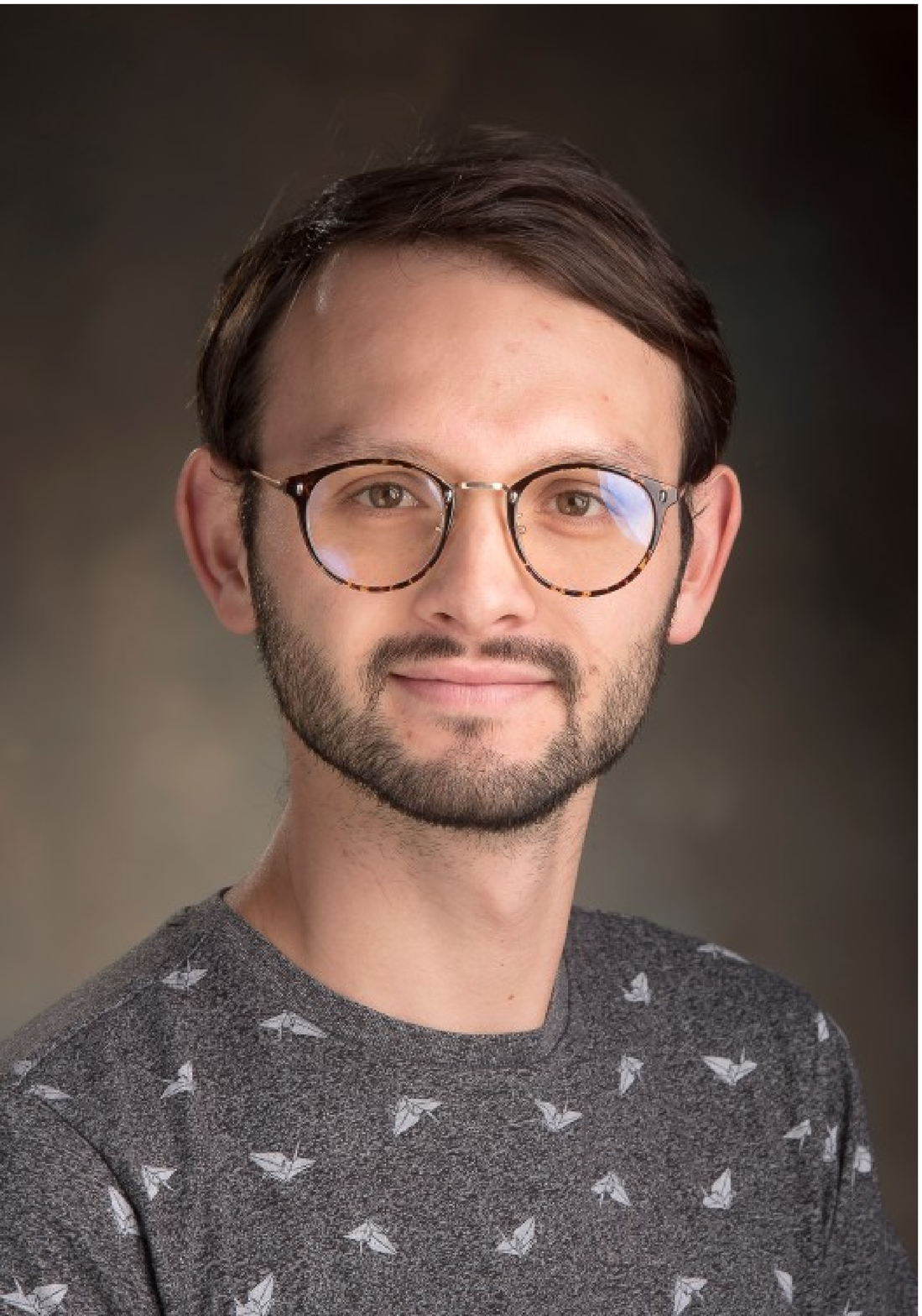}}]{Juan Chacon}
          is a recently graduated master's student from the department of mathematics at Simon Fraser University, Burnaby, BC, Canada, member of the Multi-Agent Robotic Systems Lab. He completed his MSc and BASc in the Mathematics Department at Universidad Nacional de Colombia, Bogota in 2013 and 2015 respectively. He received his BEng in Electronic Engineering from the Universidad Distrital Francisco Jose de Caldas in 2016. His research interests include multi-agent systems and aggregation models.
\end{IEEEbiography}
\begin{IEEEbiography}[{\includegraphics[width=1in,height=1.25in,clip,keepaspectratio]{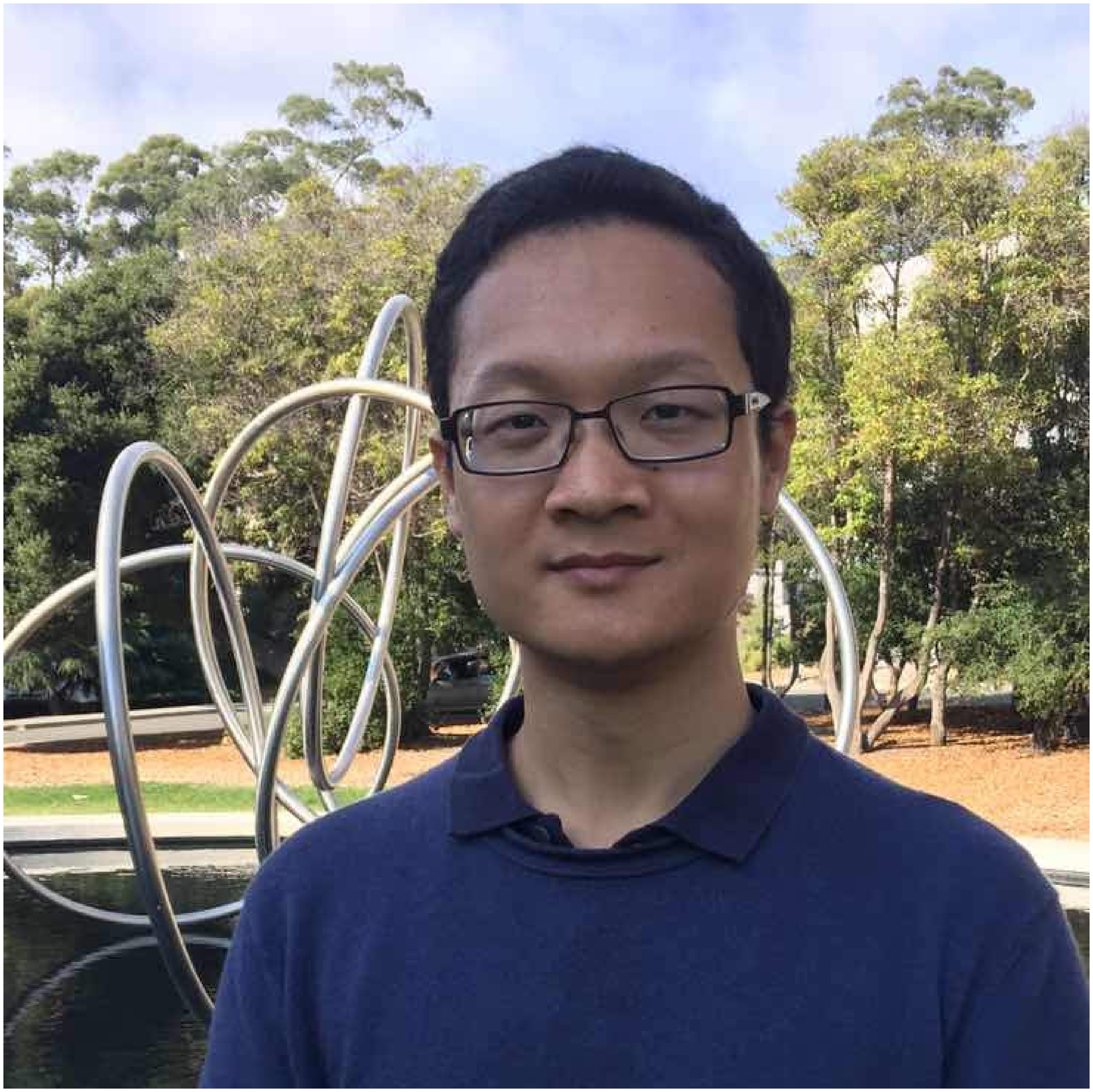}}]{Mo Chen}
          is an Assistant Professor in the School of Computing Science at Simon Fraser University, Burnaby, BC, Canada, where he directs the Multi-Agent Robotic Systems Lab. He completed his PhD in the Electrical Engineering and Computer Sciences Department at the University of California, Berkeley in 2017, and received his BASc in Engineering Physics from the University of British Columbia in 2011. From 2017 to 2018, Mo was a postdoctoral researcher in the Aeronautics and Astronautics Department in Stanford University. His research interests include multi-agent systems, safety-critical systems, and practical robotics.
\end{IEEEbiography}
\begin{IEEEbiography}[{\includegraphics[width=1in,height=1.25in,clip,keepaspectratio]{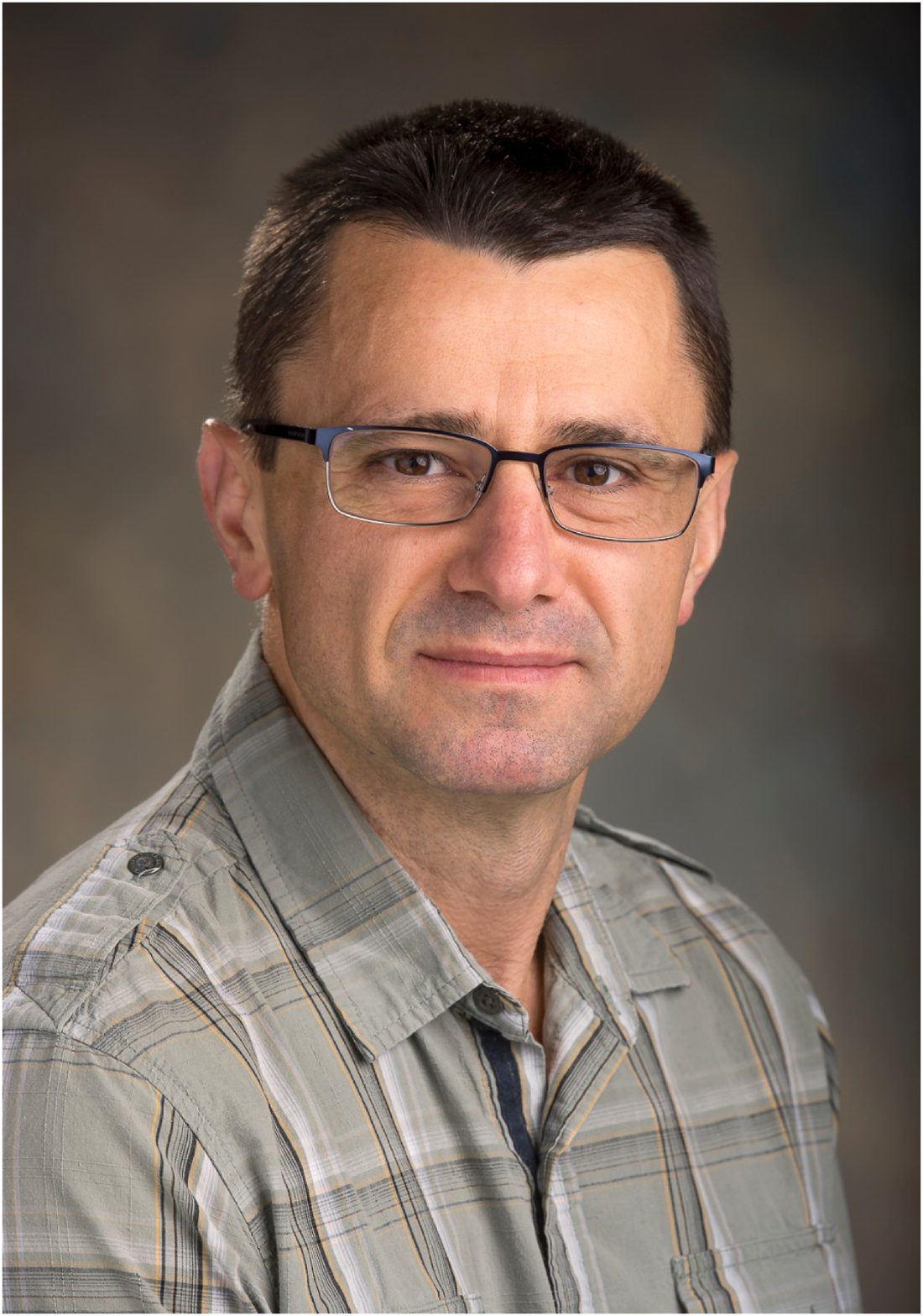}}]{Razvan Fetecau}
          is a Professor in the Department of Mathematics at Simon Fraser University, Burnaby, BC, Canada. He received a B.Sc. degree in Mathematics and Mechanics from A.I. Cuza University, Iasi, Romania in 1997, an M.Sc. degree in Mathematics from the University of Bucharest, Romania in 1998, and he completed his PhD in Applied and Computational Mathematics at California Institute of Technology, USA in 2003.  From 2003 to 2006, Razvan was a Szeg\"{o} Assistant Professor in the Department of Mathematics at Stanford University, USA. His research interests include dynamical systems and PDEs for self-collective and swarming behaviour, regularizations of fluid dynamics equations, and geometric mechanics.
\end{IEEEbiography}

\end{document}